\documentclass[12pt]{amsart}
\usepackage{epsf,amscd,amsmath,amssymb,amsfonts,mathtools}

\usepackage{graphicx}
\usepackage{epstopdf}
\usepackage{fullpage}
\usepackage[utf8]{inputenc}
\usepackage{fullpage}
\usepackage{latexsym}
\usepackage{tikz}
\usepackage{color}

\theoremstyle{plain}
\newtheorem{thm}{Theorem}[section]
\newtheorem*{thm*}{\bf Theorem }

\newtheorem{prop}[thm]{Proposition}
\newtheorem*{prop*}{\bf Proposition}

\theoremstyle{definition}

\theoremstyle{remark}
\newtheorem{rem}[thm]{Remark}
\newtheorem{rem*}[thm]{Remark}

\numberwithin{equation}{section}

\newcommand{\be}{\begin{equation}}
\newcommand{\ee}{\end{equation}}
\newcommand{\bea}{\begin{eqnarray}}
\newcommand{\eea}{\end{eqnarray}}
\newcommand{\beas}{\begin{eqnarray*}}
\newcommand{\eeas}{\end{eqnarray*}}

\theoremstyle{plain}
\theoremstyle{definition}

\numberwithin{thm}{section}
\numberwithin{equation}{section}

%\newcommand{\mb}{\mathbb}

% Skriptbuchstaben

% Sonderbuchstaben mit Doppellinie

%\newcommand{\I}{{\mathbb I}}

\newcommand{\V}{{\mathbb V}}

\newcommand{\Z}{{\mathbb Z}}

\def\One{\mathbb{I}}

\def\bthree{{\;\raisebox{-20mm}{\includegraphics[height=60mm]{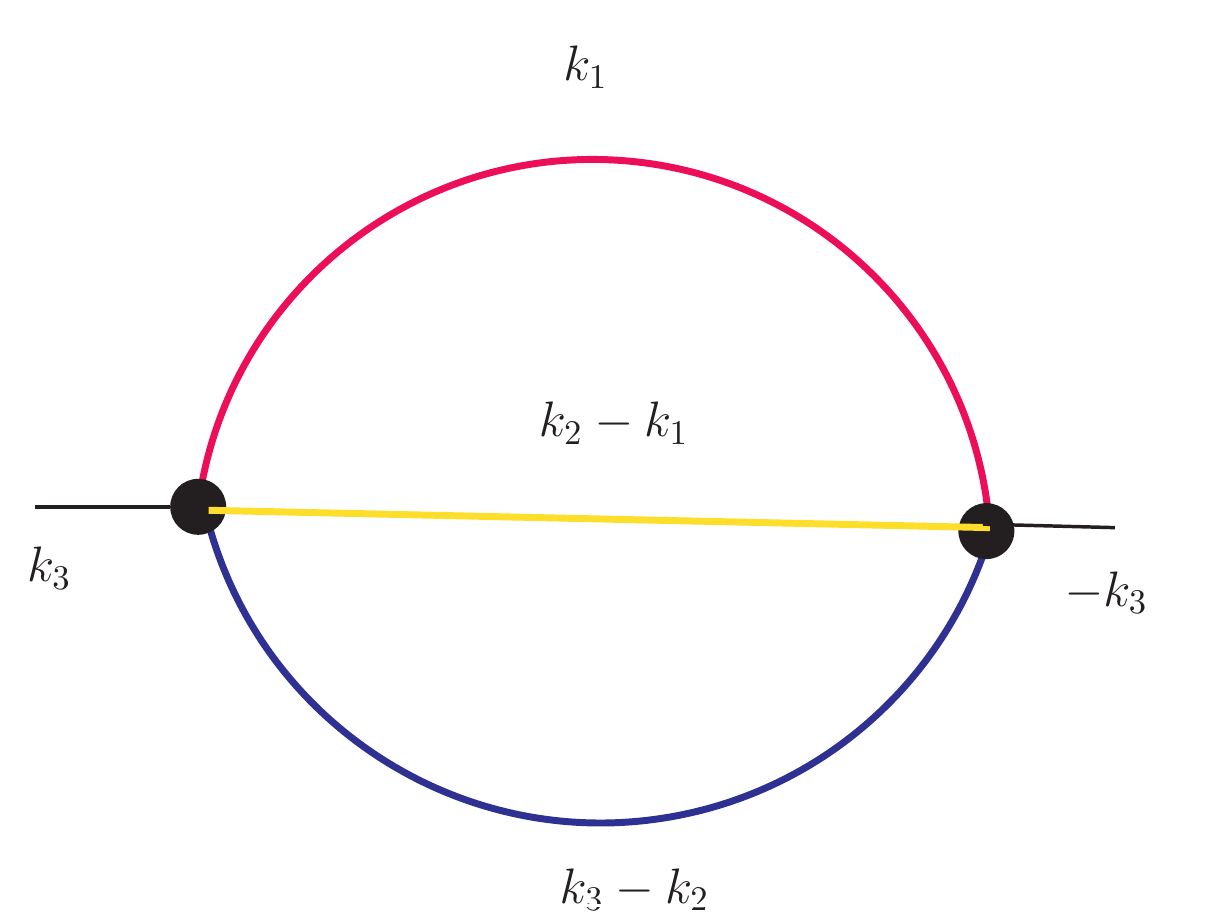}}\;}}

\setcounter{tocdepth}{3}

\usepackage{float}
\floatstyle{boxed} 
\restylefloat{figure}
\usepackage{hyperref}

\title{Bananas: multi-edge graphs and their Feynman integrals}
\author{Dirk Kreimer}
\address{Humboldt U.\ Berlin, Unter den Linden 6, 10099 Berlin, Germany}
\begin{document}
\begin{abstract}We consider multi-edge or banana graphs $b_n$ on $n$ internal edges $e_i$ with different masses $m_i$. We focus on the cut banana graphs $\Im(\Phi_R(b_n))$ from which the full result $\Phi_R(b_n)$ can be derived through dispersion. 
We give a recursive definition of $\Im(\Phi_R(b_n))$ through iterated integrals.
We discuss the structure of this iterated integral in detail. A discussion of accompanying differential equations, of monodromy and of a basis of master integrals is included. 
\end{abstract}

\maketitle
\tableofcontents
\section*{Acknowledgments}
This is work originating from discussions with Karen Vogtmann and Marko Berghoff which are gratefully acknowledged. I thank Spencer Bloch,  David Broadhurst and Bob Delbourgo for friendship and for sharing insights into the mathematics and physics of quantum field theory over the years. And David for pointing out some older literature. Enjoyable discussions with Ralph Kaufmann on possible similarities of the structure  of phase-space integrals and his use of singularity theory  in applied quantum field theory \cite{Ralph} were a welcome stimulus to write these results.  
\section{Introduction}
We define a banana graph $b_n$ by two vertices $v_1,v_2$ connected by $n$ edges
forming a multi-edge.\footnote{Often $b_2$ is called a bubble, $b_3$ a sunset and $b_4$ a banana graph. We call all $b_n$, $2\leq n< \infty$ banana graphs.}
Furthermore, $v_1,v_2$ are both $n+1$ valent vertices so that $b_n$ has an external edge at each vertex.
\begin{figure}[H]
\includegraphics[width=14cm]{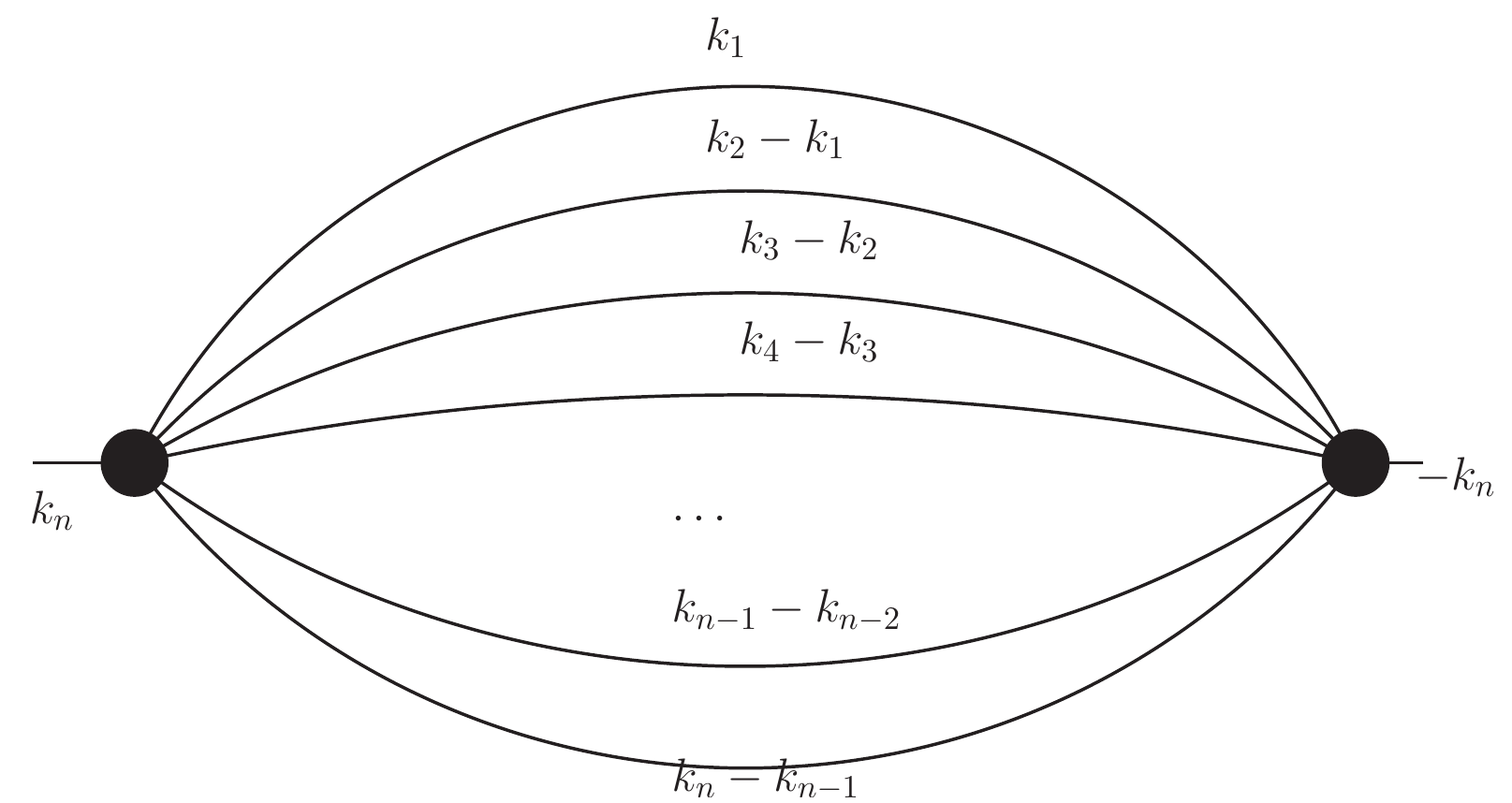}.
\caption{Banana graphs $b_n$ on $|b_n|=(n-1)$ loops. We indicate momenta at internal edges $e_1,\ldots e_n$ labeling from top to bottom. We assign mass square $m_i^2$ to edge $e_i$.
A positive infinitesimal imaginary part is understood in each popagator. Both vertices have an external edge with incoming momenta $k_n$ and $-k_n$. Note that edges $e_1, \ldots,e_j$, $n>j\geq 2$ constitute a banana graph $b_j$ with external momentum $k_{j}$ flowing through. It is a $(j-1)$-loop subgraph of $b_n$. In particular we have a sequence $b_2\subset b_3\subset\cdots \subset b_n$ of graphs which gives rise to an iterated integral.}
\label{bananas}
\end{figure}
\subsection{General considerations}
We study associated banana integrals $\Phi_R^D(b_n)$. The case $n=3$ has been intensively studied and initiated a detailed analysis of elliptic integrals in Feynman amplitudes, see for example  \cite{Veltman,BroedeletalEll,Broedelequalmass,Remetal,RemSchouten,Weinzierl,
Vanhove,Bloch,DavDel,Zay,allBanana}. 
Evaluation at masses $m_i^2\in\{0,1\}\ni k_n^2$ was recognized to provide a rich arena for an analysis of periods in Feynman diagrams \cite{BroadKloster}
including the apperance of elliptic trilogarithms at sixth root of unity in the evaluation of $b_4$ \cite{Bloch}.

Let us pause and put the problem into context.
\subsubsection{Recursion and splitting in phase space integrals}\label{history}
The imaginary part $\Im\left(\Phi_R^D(b_n)\right)$ of $\Phi_R^D(b_n)$
has been a subject of interest for almost seventy years at least \cite{Kerwas,Block,Prem}. This imaginary part has the interpretation of a phase space integral. Our attempt below to express it recursively by an iterated integral can be traced back to this early work. In fact, computing
$\Im\left(\Phi_R^D(b_n)\right)$ by identifying an imaginary part $\Im\left(\Phi_R^D(b_{n-1})\right)$ as a sub-integral amounts to a split in the phasespace integral and this recurses over $n$. 
\subsubsection{Banana integrals and monodromy} 
In the more recent literature the graphs $b_n$ were studied in an attempt to interpretate the monodromies of the associated functions depending on momenta and masses
$\Phi_R^D(b_n)(s,s_0,\{m_i^2\})$ as a generalization of the situation familiar from the study of polylogarithms. This r\^ole of elliptic functions was prominent already in the historical work cited in Sec.(\ref{history}) above and continued to give insights into the structure of phasespace systematically \cite{RemSchouten,DavDel}.
Recently the aim shifted to explore it in the spirit
of modern mathematics. This brought concepts developed in algebraic geometry -motives, Hodge theory, coactions, symbols  and such- to the forefront \cite{BlochVanhove,Vanhove,Bloch,allBanana,Brown,BEK,Broedel}. 
For us the focus is less on elliptic integrals and elliptic polylogarithms prominent in recent work. Rather we focus on the recursive structure of
$\Im\left(\Phi_R^D(b_n)\right)$ as it has a lot to offer still for mathematical analysis.
\subsection{Iterated integral structure for $b_n$}
Our task is to find iterated integral representations for $\Im\left(\Phi_R^D(b_n)\right)$ which give insight into their structure for all $n$. We will use 
$\Im\left(\Phi_R^D(b_2)\right)$ as a seed for the iteration.
$\Im\left(\Phi_R^D(b_3)\right)$ which has $\Im\left(\Phi_R^D(b_2)\right)$
as a subintegral then gives a complete elliptic integral as expected, see Sec.(\ref{bthreeelliptic}).  
Already the computation of $b_4$ indicates more subtle functions to appear as Sec.(\ref{bfournonell}) and Eq.(\ref{bfourythree}) demonstrate. Neverheless it turns out that such functions are very nicely structured
as we explore in Sec.(\ref{bn}).

We want to understand the function $\Phi_R^D(b_n)$ obtained from applying renormalized Feynman rules $\Phi_R^D$ in $D$ dimensions 
\[
\Phi_R^D(b_n)=S_R^\Phi\star\Phi^D(b_n)(s,s_0),
\]
to the graph $b_n$.

We will study in particular the imaginary part $\Im\left(\Phi_R^D(b_n)\right)$ having in mind that $\Phi_R^D(b_n)$ can be obtained from $\Im\left(\Phi_R^D(b_n)\right)$ by a dispersion
integral.

We will mostly work with a kinematic renormalization scheme in which tadpole integrals evaluate to zero. This is particularly well-suited for the use of dispersion. Indeed $\Im\left(\Phi_R^D(b_n)\right)$ is free of short-distance singularities as the $n$ constraints putting $n$ internal propagators on-shell 
fix all non-compact integrations. 

This reduces renormalization of $b_n$ to a mere use of sufficiently subtracted 
dispersion integrals. Correspondingly in kinematic renormalization we can work in a Hopf algebra $H_R=H/I_{\mathrm{tad}}$ of renormalization which divides by the ideal $I_\mathrm{tad}$ spanned by tadpole integrals rendering the graphs
$b_n$ primitive:
\[
\Delta_{H_R}(b_n)=b_n\otimes \One+\One\otimes b_n.
\]
Therefore 
\[
S_R^{\Phi^D}\star\Phi^D(b_n)=\Phi^D(b_n)(s)-\mathit{T}^{(j)}\Phi^D(b_n)(s,s_0).
\]
$\Phi^D$ are the unrenormalized Feynman rules in dimensional regularization
and $\mathit{T}^{(j)}$ is a suitable Taylor operator.

Nevertheless there is no necessity to regulate Feynman integrals in our approach as we can subtract on the level of integrands. Indeed $\mathit{T}^{(j)}$ can be chosen to subtract in the integrand. We implement it below in Eq.(\ref{dispTaylor}) using the dispersion integral. Our conventions for Feynman rules are in App.(\ref{AppFeyn}).

Our interest lies in a compact formula for
\[
\Im\left(\Phi_R^D(b_n)\right)(s,\{m_i^2\})=\int_{\mathbb{M}_n}I_{\mathrm{cut}}(b_n),
\]
with $I_{\mathrm{cut}}(b_n)$ given in Eq.(\ref{icut}).
We will succeed by giving it as an iterated integral in Eq.(\ref{itInt}) below which is part of  Thm.(\ref{monodromyThm}).

Results for $\Phi_R^D(b_n)(s,s_0,\{m_i^2\})$ then follow by (subtracted at $s_0$) dispersion which implements $\mathit{T}^{(\frac{D}{2}-1)(n-1)}$: 
\be\label{dispTaylor}
\Phi_R^D(b_n)(s,s_0,\{m_i^2\})=\frac{(s-s_0)^{(\frac{D}{2}-1)(n-1)}}{\pi}\int_{\left(\sum_{j=1}^n m_j\right)^2}^\infty \frac{\int_{\mathbb{M}_n}I_{\mathrm{cut}}(b_n)(x)}{(x-s)(x-s_0)^{(\frac{D}{2}-1)(n-1)}}dx.
\ee
Note that in the Taylor expansion of
$\Phi_R^D(b_n)(s,s_0,\{m_i^2\})$ around $s=s_0$, the first 
$(\frac{D}{2}-1)(n-1)$ coefficients vanish. These are our kinematic renormalization conditions.

For example 
$\Phi_R^4(b_2)(s_0,s_0)=0$. On the other hand $\Phi_R^2(b_2)(s,s_0)=\Phi_R^2(b_2)(s)$ as it does not need subtraction at $s_0$ as it is ultraviolet convergent.
So $s_0$ disappears from its definition and the dispersion integral is unsubtracted as $(\frac{D}{2}-1)(n-1)=0$ and for $D=6$, $\Phi_R^6(b_2)(s_0,s_0)=0
=\partial_s\Phi_R^6(b_2)(s,s_0)_{|s=s_0}$.   
\subsection{Normal and pseudo-thresholds for $b_n$}
To understand possible choices for $s_0$, define a set $\mathbf{thresh}$ of $2^{n-1}$ real numbers by
\[
\mathbf{thresh}=\{ (\pm m_1\pm\cdots\pm m_n)^2 \},
\]
and set
\[
s_{\mathbf{min}}:=\min\{x\in \mathbf{thresh}\}.
\]
Note that the maximum is achieved by $s_{\mathbf{normal}}:=\left(\sum_{j=1}^n m_j\right)^2$.
Our requirement for $s_0$ is 
\be\label{szero}
s_0\lneq s_{\mathbf{min}}.
\ee
This ensures that the renormalization at $s_0$ does not produce contributions to the imaginary part of the renormalized $\Phi_R^D(b_n)(s,s_0)$ as $\Im(\Phi^D(b_n)(s_0))=0$. 

We call $s_{\mathbf{normal}}$ normal threshold and the $2^{-1}-1$ other elements 
of $\mathbf{thresh}$ preudo-thresholds.

Also we call $m_{\mathbf{normal}}^n:=\sum_{j=1}^n m_j$ the normal mass of $b_n$
and any of the other $2^{n-1}-1$ numbers $|\pm m_1\cdots\pm m_n|$ a pseudo-mass of $b_n$. For any ordering $o$ of the edges of $b_n$ we get a flag $b_2\subset\cdots b_{n-1}\subset b_n$ such that
\[
m_{\mathbf{normal}}^{j+1}=m_{\mathbf{normal}}^j+m_{j+1},\,j\leq n-1.
\]
On the other hand, for any chosen fixed pseudo-mass there exists at least one ordering $o$ of edges of $b_n$ for which the pseudo-mass is $m_1-m_2\pm\cdots$.
\begin{rem}
By the Coleman-Norton theorem \cite{Coleman-N} (or by an analysis of the second Symanzik polynomial $\varphi(b_n)$, see Eq.(\ref{varphibn}) in App.(\ref{apppseudo})) the physical threshold of $b_n$ is 
when the energy $\sqrt{s}$ of the incoming momenta $k_n=(k_{n;0},\vec{0})^T$ equals 
the normal mass
\[
\sqrt{s}=m_{\mathbf{normal}}^n.
\] 
The imaginary part $\Im\left(\Phi_R^D(b_n)\right)$ is then given by the monodromy 
associated to that threshold and is supported at $s\geq m_{\mathbf{normal}}^n$.

In this paper we are mainly interested in the principal sheet monodromy of $b_n$ and hence 
in the monodromy at $\sqrt{s}=m_{\mathbf{normal}}^n$ which gives $\Im(\Phi_R^D(b_n))$. Pseudo-masses are needed to understand monodromy from pseudo-thresholds off the principal sheet.

They can always be expressed as iterated integrals starting possibly from a pseudo-threshold of $\Phi_R^D(b_2)$. Such non-principal sheet monodromies need to be studied to understand the mixed Hodge theory of $\Phi_R^D(b_n)$ as a multi-valued function in future work. See \cite{DirkEll} for some preliminary considerations. 

In preparation to such future work we note that iterated integral representations can also be obtained for pseudo-thresholds in quite the same manner as in Eq.(\ref{itInt}) by changing signs of masses (not mass-squares) 
in  Eq.(\ref{up}) as given in Eq.(\ref{pseudoup})
and correspondingly in the boundaries of the dispersion integral.
This dispersion will then reconstruct variations on non-principal sheets.
We collect these integral representations in App.(\ref{apppseudo}).
\end{rem}\hfill $|$
\section{Banana integrals $\Im\left(\Phi_R^D(b_n)\right)$}
\label{sec:2}
\vspace{1mm}\noindent
\subsection{Computing $b_2$}
We start with the 2-edge banana $b_2$, a bubble on two edges with two different internal masses $m_1,m_2$, indicated by two different colors in Fig.(\ref{btwo}).
\begin{figure}[h]
\includegraphics[width=12cm]{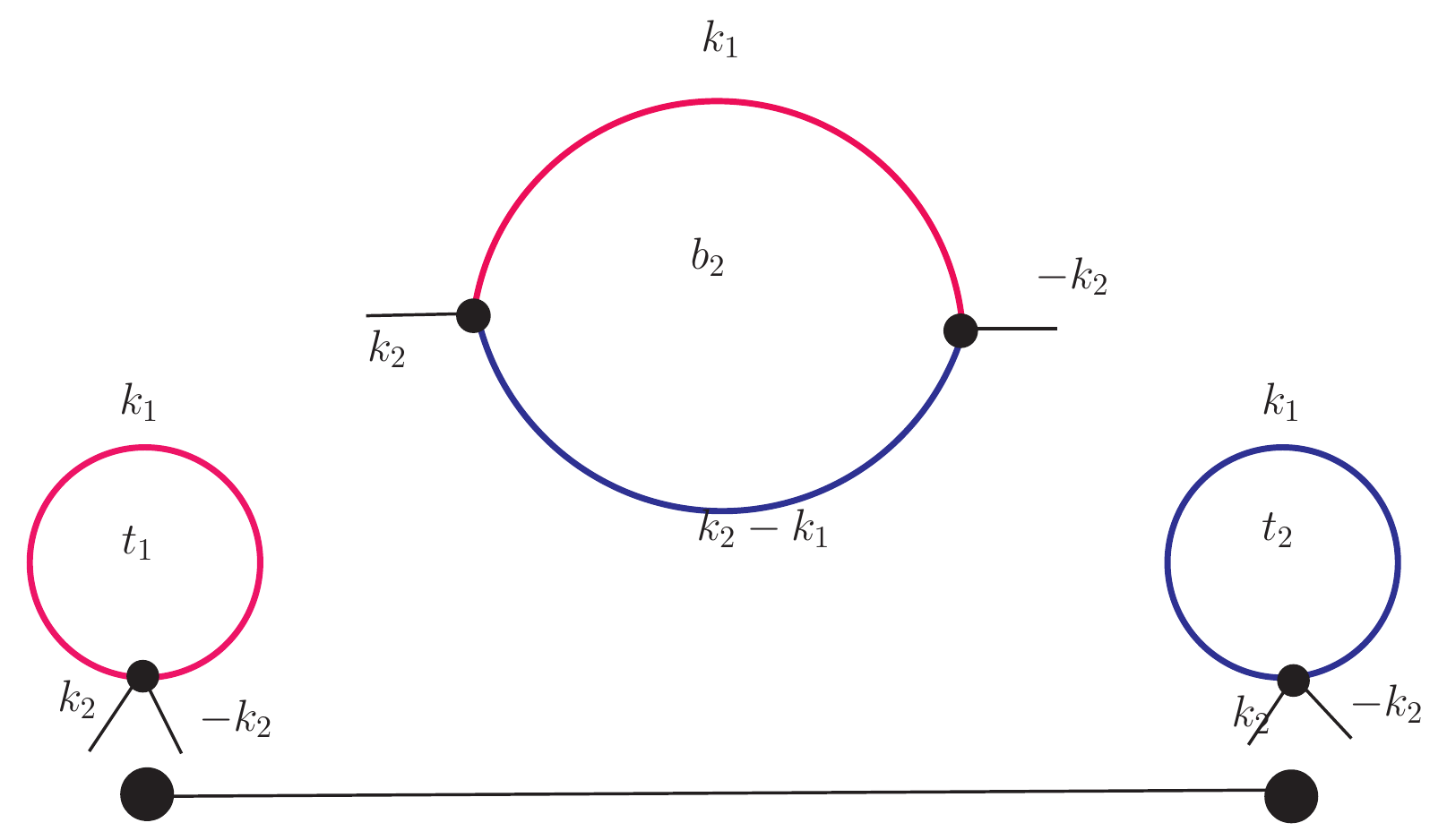}.
\caption{The bubble $b_2$. It gives rise to a function $\Phi_R^D(b_2)(k_2^2,m_1^2,m_2^2)$. We compute its imaginary part $\Im\left(\Phi_R^D(b_2)(k_2^2,m_1^2,m_2^2)\right)$ below. It starts an induction leading to the desired iterated integral for $\Im(\Phi_R^D(b_n))$. The edges $e_1,e_2$ are given in red or blue. Shrinking one of them gives a tadpole integral $\Phi_R^D(t_1)(m_1^2)$ (red) or $\Phi_R^D(t_2)(m_2^2)$ (blue).}
\label{btwo}
\end{figure}
The incoming external momenta at the two vertices of $b_2$  are $k_2,-k_2$
which can be regarded as momenta assigned to leaves at the two three-valent vertices.

We discuss the computation of $b_2$ in detail as it gives a start of an induction which leads to the computation of $b_n$. The underlying recursion goes long way back as discussed in Se.(\ref{history}) above, see \cite{Prem} in particular.  More precisely it allows to express
$\Im(\Phi_R^D)(b_n)$ as an iterated integral with the integral 
$\Im(\Phi_R^D)(b_2)$ as the start so that $b_n$ is obtained as a $(n-2)$-fold iterated one-dimensional integral.

For the Feynman integral $\Phi_R^D(b_2)$  we implement a kinematic renormalization scheme by subtraction at $s_0\equiv \mu^2\lneq (m_1-m_2)^2$ in accordance with Eq.(\ref{szero}). This implies that the subtracted terms do not have an imginary part, as $\mu^2$ is below the pseudo threshold $(m_1-m_2)^2$. For example for $D=4$
\[
\Phi_R^4(b_2)(s,s_0,m_1^2,m_2^2)=\int d^Dk_1 \left( \frac{1}{\underbrace{k_1^2-m_1^2}_{Q_1}}\frac{1}{\underbrace{(k_2-k_1)^2-m_2^2}_{Q_2}
}- \{k_2^2\to \mu^2\}\right).
\]
We have $s:=k_2^2$. For $D=6,8,\ldots$, subtractions of further Taylor coefficients at $s=\mu^2$ are needed.

As  the $D$-vector $k_2$ is assumed timelike (as $s>0$) we can work in a coordinate system where  $k_2=(k_{2;0},\vec{0})^T$ and get 
\[
\Phi_R^D(b_2)=\omega_{\frac{D}{2}} \int_{-\infty}^\infty dk_{1;0}\int_0^\infty \sqrt{t_1}^{D-3}dt_1 \left( \frac{1}{k_{1;0}^2-t_1-m_1^2}\frac{1}{(k_{2;0}-k_{1;0})^2-t-m_2^2}- \{s\to s_0\}\right).
\]
We define the K\.{a}llen function, actually a homogeneous polynomial, 
\[\lambda(a,b,c):=a^2+b^2+c^2-2(ab+bc+ca),\] and find by explicit integration,
for example for $D=4$,
\beas
& & \Phi_R^4(b_2)(s,s_0;m_1^2,m_2^2) = \\
& &  = \left(\underbrace{ \frac{\sqrt{\lambda(s,m_1^2,m_2^2)}}{2s}\ln\frac{m_1^2+m_2^2-s-\sqrt{\lambda(s,m_1^2,m_2^2)}}{m_1^2+m_2^2-s+\sqrt{\lambda(s,m_1^2,m_2^2)}} - \frac{m_1^2-m_2^2}{2s}\ln\frac{m_1^2}{m_2^2}}_{W_2^4(s)}\right. \\ & & \left. -\underbrace{\{s\to s_0\}}_{W_2^4(s_0)}\right). 
\eeas
The principal sheet of the above logarithm is  real for $s\leq (m_1+m_2)^2$ and free of singularities at $s=0$ and $s=(m_1-m_2)^2$. It has a branch cut for $s\geq (m_1+m_2)^2$. See for example \cite{RemSchouten,DirkEll} for a discussion of its analytic structure and behaviour off the principal sheet.

The threshold divisor defined by the intersection $L_1\cap L_2$ where the zero locii 
\[
L_i:\,Q_i=0,
\]
of the two quadrics meet is at $s=(m_1+m_2)^2$. This is an elementary example of the application of Picard--Lefshetz theory \cite{BlKrCut}.

Off the principal sheet, we have a pole at $s=0$ and a further branch cut for $s\leq (m_1-m_2)^2$.

It is particularly interesting to compute the variation -the imaginary part- of $\Phi_R(b_2)$ using Cutkosky's theorem \cite{BlKrCut}. For all $D$,
\[
\Im(\Phi_R^{D}(b_2))=\omega_{\frac{
D}{2}}\int_{0}^\infty \sqrt{t_1}^{D-3}dt \int_{-\infty}^\infty dk_{1;0} 
\delta_+(k_{1;0}^2-t_1-m_1^2)\delta_+((k_{2;0}-k_{1;0})^2-t_1-m_2^2).
\]
We have
\[
\delta_+((k_{2;0}-k_{1;0})^2-t_1-m_2^2)=\Theta(k_{2;0}-k_{1;0})\delta((k_{2;0}-k_{1;0})^2-t_1-m_2^2),
\]
and
\beas
\delta((k_{2;0}-k_{1;0})^2-t_1-m_2^2) & = & \frac{1}{2|k_{2;0}-k_{1;0}|}_{|k_{1;0}=k_{2;0}+\sqrt{t_1+m_2^2}}\times \delta(k_{1;0}-k_{2;0}-\sqrt{t_1+m_2^2})\\
 & + & \frac{1}{2|k_{2;0}-k_{1;0}|}_{|k_{1;0}=k_{2;0}-\sqrt{t_1+m_2^2}}\times \delta(k_{1;0}-k_{2;0}+\sqrt{t_1+m_2^2}).
\eeas
In summary
\beas
\delta_+((k_{2;0}-k_{1;0})^2-t_1-m_2^2) & = & \Theta(k_{2;0}-k_{1;0})\delta((k_{2;0}-k_{1;0})^2-t_1-m_2^2)\\
 & = &
\frac{1}{2|k_{2;0}-k_{1;0}|}_{|k_{1;0}=k_{2;0}-\sqrt{t_1+m_2^2}}\delta(k_{1;0}-k_{2;0}+\sqrt{t_1+m_2^2}),
\eeas
and therefore
\[
\Im(\Phi_R(b_2))=\omega_{\frac{D}{2}}\int_0^\infty \sqrt{t_1}^{D-3}dt_1 
\delta\left(s-2\sqrt{s}\sqrt{t_1+m_2^2}+m_2^2-m_1^2\right)\frac{1}{\sqrt{t_1+m_2^2}}.
\]
We have from the remaining $\delta$-function,
\[
\delta\left(s-2\sqrt{s}\sqrt{t_1+m_2^2}+m_2^2-m_1^2\right)=\frac{\sqrt{t_1+m_2^2}}{\sqrt{s}}
\delta\left(t_1-\frac{\lambda(s,m_1^2,m_2^2)}{4s}\right),
\]
hence
\[
0\leq t_1 =\frac{\lambda(s,m_1^2,m_2^2)}{4s},
\]
whenever the K\.{a}llen function $\lambda(s,m_1^2,m_2^2)$ is positive, so for $s>(m_1+m_2)^2$ (normal threshold, on the principal sheet) or for $0<s<(m_1-m_2)^2$ (pseudo-threshold, off the principal sheet).

The integral then gives
\[
\Im(\Phi_R^D(b_2))(s,m_1^2,m_2^2)=\overbrace{\omega_{\frac{D}{2}}\left( \frac{\left(\sqrt{\lambda(s,m_1^2,m_2^2)}\right)^{D-3}}{(2s)^{\frac{D}{2}-1}}\right) }^{=:V_{2}^{D}(s;m_1^2,m_2^2)}\times \Theta(s-(m_1+m_2)^2),
\]
with $\omega_{\frac{D}{2}}$ given in Eq.(\ref{omegadhalf}). We emphasize that $V_{2}^D$ has a pole at $s=0$ with residue $|m_1^2-m_2^2|/2$ and 
note  $\lambda(s,m_1^2,m_2^2)=(s-(m_1+m_2)^2)(s-(m_1-m_2)^2)$.

We regain $\Phi_R^D(b_2)$ from $\Im(\Phi_R^D(b_2))$ by a subtracted dispersion integral, for example for $D=4$:
\[
\Phi_R^4(b_2)(s,s_0)=\frac{s-s_0}{\pi}\int_0^\infty \frac{\Im\left(\Phi_R^4(b_2)\right)(x)}{(x-s)(x-s_0)}dx.
\]
Here, the renormalization condition implemented in the once-subtracted dispersion imposes $\Phi_R^D(b_2)(s_0,s_0)=0$ for $D=4$. 

Finally we note that for on-shell edges $(k_2-k_1)^2=m_2^2$ so
\beas
k_2\cdot k_1 & = & \frac{k_2^2-m_2^2+m_1^2}{2},\\
k_1^2 & = & m_1^2.
\eeas
\subsection{Computing $b_3$}
\label{sec:9}
\vspace{1mm}\noindent

We now consider the 3-edge banana $b_3$ on three different masses.
$$\bthree$$
We start by using the fact that we can disassemble $b_3$ in three different ways into a $b_2$ sub-graph, with a remaining edge providing the co-graph.  
Using Fubini, the three equivalent ways to write it in accordance with the flag structure $b_2\subset b_3$ are:
\bea
\Im(\Phi_R^D(b_3)) & = & \int d^Dk_2 \Im(\Phi_R^D(b_2))(k_2^2,m_1^2,m_2^2)
\delta_+((k_3-k_2)^2-m_3^2),\label{bthreeo}\\
\Im(\Phi_R^D(b_3)) & = &\int d^Dk_2 \Im(\Phi_R^D(b_2))(k_2^2,m_2^2,m_3^2)
\delta_+((k_3-k_2)^2-m_1^2),\label{bthreetw}\\
\Im(\Phi_R^D(b_3)) & = & \int d^Dk_2 \Im(\Phi_R^D(b_2))(k_2^2,m_3^2,m_1^2)
\delta_+((k_3-k_2)^2-m_2^2).\label{bthreeth}
\eea
In any of these cases for $\Im(\Phi_R^D(b_3))$ we integrate over the common support
of the distributions  
\[
\Im(\Phi_R^D(b_2))(k_2^2,m_i^2,m_j^2)\sim
\Theta(k_2^2-(m_i+m_j)^2)\,\,{\text{and}}\,\,\delta_+((k_3-k_2)^2-m_k^2),
\]
 generalizing the situation 
for $\Im(\Phi_R^D(b_2))$ where we integrated over the common support of 

\[
\delta_+(k_1^2-m_1^2)\,\,{\text{and}}\,\,\delta_+((k_2-k_1)^2-m_2^2).
\]

The integrals Eqs.(\ref{bthreeo},\ref{bthreetw},\ref{bthreeth}) are well-defined  and on the principal sheet they are equal and give the variation (and hence imaginary part) $\Im(\Phi_R^D(b_3))$ of $\Phi_R^D(b_3)$.

$\Phi_R^D(b_3)$ itself can be obtained from it by a sufficiently subtracted dispersion integral which reads for $D=4$
\[
\Phi_R^4(b_3)(s,s_0)=\frac{(s-s_0)^2}{\pi}\int_0^\infty \frac{\Im(\Phi_R^4(b_3)(x))}{(x-s)(x-s_0)^2}dx.
\]
For general $D$,
$\Phi_R^D(b_3)$
is well-defined no matter which of the two edges we choose as the sub-graph,
and Cutkosky's theorem defines a unique function $V_{3}^D(s)$,
\[
\Im(\Phi_R^D(b_3)(s))=:V_{3}^D(s)\Theta(s-(m_1+m_2+m_3)^2).
\]
\begin{rem}
Below when we discuss master integrals  for $b_n$ we find that by breaking symmetry through a derivative $\partial_{m_i^2}$ we obtain four master integrals for $b_3$.
$\Phi_R^D(b_3)$ itself,  and by applying $\partial_{m_i^2}$ to any of Eqs.(\ref{bthreeo},\ref{bthreetw},\ref{bthreeth}).
\end{rem}\hfill $|$

Let us compute $V_3^D$ first. We consider edges $e_1,e_2$ as a $b_2$ subgraph with an external momentum 
$k_2$ flowing through.

We let $k_3$ be the external momentum of $\Im(\Phi_R^D(b_3))$, $0<k_3^2=:s$. 
For the $k_2$-integration we put ourselves in the restframe $k_3=(k_{3;0},\vec{0})^T$.

Consider then
\[
\Im\left(\Phi_R^D(b_3)\right)(s)=\int d^Dk_2 \Theta(k_2^2-(m_1+m_2)^2)
\delta_+((k_3-k_2)^2)-m_3^2)V_2^D(k_2^2,m_1^2,m_2^2).\]
The $\delta_+$-distribution demands that $k_{3;0}-k_{2;0}>0$, and therefore we get
\beas
\Im\left(\Phi_R^D(b_3)\right)(s) & = & \omega_{\frac{D}{2}}\int_{-\infty}^{k_{3;0}} dk_{2;0}\int_0^\infty dt_2\sqrt{t_2}^{D-3}\Theta(k_{2;0}^2-t_2-(m_1+m_2)^2)\times\\
 & \times & V_2^D(k_{2;0}^2-t,m_1^2,m_2^2) \delta((k_{3;0}-k_{2;0})^2-t_2-m_3^2).
\eeas
As a function of $k_{2;0}$, the argument of the $\delta$-distribution has two zeroes:
\[
k_{2;0}=k_{3;0}\pm\sqrt{t_2+m_3^2}.
\]

As $k_{3;0}-k_{2;0}>0$, it follows $k_{2;0}=k_{3;0}-\sqrt{t_2+m_3^2}$.
Therefore, $k_{2;0}^2-t_2=k_{3;0}^2+m_3^2-2k_{3;0}\sqrt{t_2+m_3^2}$.

For our desired integral, we get
\beas
\Im\left(\Phi_R^D(b_3)\right)(s) & = & \omega_{\frac{D}{2}}\int_0^\infty dt_2 \sqrt{t_2}^{D-3} \Theta(k_{3;0}^2+m_3^2-2k_{3;0}\sqrt{t_2+m_3^2}-(m_1+m_2)^2)\times
\\
 & \times & \frac{ V_2^D\left(k_{3;0}^2+m_3^2-2k_{3;0}\sqrt{t_2+m_3^2},m_1^2,m_2^2\right)}{\sqrt{t_2+m_3^2}}.
\eeas
 The $\Theta$-distribution requires
 \[
 k_{3;0}^2+m_3^2-(m_1+m_2)^2\geq 2k_{3;0}\sqrt{t_2+m_3^2}.
 \]
Solving for $t_2$, we get
\[
0\leq t_2\leq \frac{\lambda(s,m_3^2,(m_1+m_2)^2)}{4s}.
\]
As $t_2\geq 0$, we must have for the physical threshold $s>(m_3+m_1+m_2)^2$ which is indeed completely symmetric under permutations of $1,2,3$, in accordance with our expectations for $\Im(\Phi_R^D(b_3)(s))$.
We then have
\beas
\Im(\Phi_R^D(b_3)(s)) & = & \Theta(s-(m_1+m_2+m_3)^2)\times\\
 & \times & \omega_{\frac{D}{2}}\int_0^{\frac{\lambda(s,m_3^2,(m_1+m_2)^2)}{4s}}
\frac{V_2^D(s+m_3^2-2\sqrt{s}\sqrt{t_2+m_3^2},m_1^2,m_2^2)}{\sqrt{t_2+m_3^2}}
\sqrt{t_2}^{D-3}dt_2.
\eeas
There is also a pseudo-threshold off the principal sheet at $s<(m_3-m_1-m_2)^2$, see Sec.(\ref{apppseudo}).

Note that the integrand vanishes at the upper boundary $\frac{\lambda(s,m_k^2,(m_i+m_j)^2)}{4s}$ as
\[
\lambda(s+m_3^2-2\sqrt{s}\sqrt{t_2+m_3^2},m_1^2,m_2^2)_{\mid t_2=\frac{\lambda(s,m_3^2,(m_1+m_2)^2)}{4s}}=\lambda((m_1+m_2)^2,m_1^2,m_2^2)=0.
\]
Let us now transform variables.
\beas
y_2 & := & \sqrt{t_2+m_{3}^2},\\
t_2 & = & y_2^2-m_{3}^2,\\
dt_2 & = & 2y_2 dy_2,\\
\int_0^{\frac{\lambda}{4s}} & \to & \int_{m_{3}}^{\frac{s+m_{3}^2-(m_1+m_2)^2}{2\sqrt{s}}}. 
\eeas
We get
\bea\label{vthree}
\Im(\Phi_R^D(b_3)(s)) & = & \Theta(s-(m_1+m_2+m_3)^2)\times\nonumber\\
 & \times & \underbrace{\omega_{\frac{D}{2}}\int_{m_3}^{\frac{s+m_{3}^2-(m_1+m_2)^2}{2\sqrt{s}}}
V_2^D\left(\overbrace{s+m_3^2-2\sqrt{s}y_2}^{s_3^1(y_2,m_3^2)},m_1^2,m_2^2\right)\sqrt{y_2-m_3^2}^{D-3}
dy_2}_{V_3^{D}(s,m_1^2,m_2^2,m_3^2)}.
\eea
Had we choosen $e_2,e_3$ or $e_3,e_1$ instead of $e_1,e_2$ for $b_2$ we would find in accordance with Eqs.(\ref{bthreeo},\ref{bthreetw},\ref{bthreeth})
\bea\label{vthreetw}
\Im(\Phi_R^D(b_3)(s)) & = & \Theta(s-(m_1+m_2+m_3)^2)\times\nonumber\\
 & \times & \underbrace{\omega_{\frac{D}{2}}\int_{m_1}^{\frac{s+m_{1}^2-(m_2+m_3)^2}{2\sqrt{s}}}
V_2^D\left(\overbrace{s+m_1^2-2\sqrt{s}y_2}^{s_3^1(y_2,m_1^2)},m_2^2,m_3^2\right)\sqrt{y_2-m_1^2}^{D-3}
dy_2}_{V_3^{D}(s,m_1^2,m_2^2,m_3^2)},
\eea
or
\bea\label{vthreeth}
\Im(\Phi_R^D(b_3)(s)) & = & \Theta(s-(m_1+m_2+m_3)^2)\times\nonumber\\
 & \times & \underbrace{\omega_{\frac{D}{2}}\int_{m_2}^{\frac{s+m_{2}^2-(m_3+m_1)^2}{2\sqrt{s}}}
V_2^D\left(\overbrace{s+m_2^2-2\sqrt{s}y_2}^{s_3^1(y_2,m_2^2)},m_3^2,m_1^2\right)\sqrt{y_2-m_2^2}^{D-3}
dy_2}_{V_3^{D}(s,m_1^2,m_2^2,m_3^2)}.
\eea
with three different $s_3^1(y_2)=s_3^1(y_2,m_i^2)$.

We omit this distinction in the future as we will always choose a fixed order of edges and call the edges in the innermost bubble $b_2$ edges  $e_1,e_2$.

Finally, we note
\beas
k_{2,0} & = & k_{3,0}-y_2,\\
k_2^2 & = & k_{3,0}^2-2k_{3,0}y_2+m_3^2,\\
|\vec{k_2}| & = & \sqrt{y_2^2-m_3^2}.
\eeas
Written in invariants this is 
\beas
k_3\cdot k_{2} & = & \sqrt{s}(\sqrt{s}-y_2),\\
k_2^2 & = & s-2\sqrt{s}y_2+m_3^2,\\
|\vec{k_2}| & = & \sqrt{y_2^2-m_3^2}.
\eeas
\subsection{$b_3$ and elliptic integrals}\label{bthreeelliptic}
Note that for $D=2$ (the case $D=4$ can be treated similarly as in \cite{RemSchouten}) and using Eq.(\ref{vthree}),
\[
V_3^2(s)=
\omega_{1}\int_{m_3}^{\frac{s+m_{3}^2-(m_2+m_1)^2}{2\sqrt{s}}}
\frac{1}{\sqrt{U(y_2)}}
dy_2,
\]
with
\[
U(y_2)=\lambda\left({s+m_3^2-2\sqrt{s}y_2},m_2^2,m_1^2\right)(y_2^2-m_3^2)
=s(y_2-m_3)(y_2+m_3)(y_2-y_+)(y_2-y_-),
\]
a quartic polynomial so that $V_3^2$ defines an elliptic integral following for example \cite{RemSchouten}.
Here,
\[
y_\pm=\frac{(s+m_3^2-m_1^2-m_2^2)\pm 2\sqrt{m_1^2m_2^2}}{2\sqrt{s}}.
\]
So indeed
\be\label{bthreeconcrete}
V_3^2(s)=\frac {2\omega_1}{(y_++m_3)(y_--m_3)}K\left(\frac{(y_-+m_3)(y_+-m_3)}{(y_--m_3)(y_++m_3)}\right),
\ee
with $K$ the complete elliptic integral of the first kind.

Finally 
\be\label{bthreedeqtwo}
\Phi_R^2(b_3)(s)=\frac{1}{\pi}\int_{(m_1+m_2+m_3)^2}^\infty \frac{V_3^2(x)}{(x-s)} dx,
\ee
gives the full result for $b_3$ in terms of elliptic dilogarithms in all its glory \cite{Weinzierl,BlochVanhove,Vanhove} for $D=2$. 
For arbitrary $D$ we get
\be\label{bthreed}
\Phi_R^D(b_3)(s,s_0)=\frac{(s-s_0)^{D-2}}{\pi}\int_{(m_1+m_2+m_3)^2}^\infty \frac{V_3^D(x)}{(x-s)(x-s_0)^{D-2}} dx.
\ee

To compare our result Eq.(\ref{bthreeconcrete}) with the result in \cite{RemSchouten} say, note that we can write 
\[
U(y_2)=\frac{1}{4}\lambda(s,s_3^1,m_3^2)\lambda(s_3^1,m_1^2,m_2^2),
\]
as 
\[
\lambda(s,s_3^1,m_3^2)=(s_3^1-(\sqrt{s}-m_3)^2)(s_3^1-(\sqrt{s}+m_3)^2)=4s(y_2^2-m_3^2),
\]
with $s_3^1=s-2\sqrt{s}y_2+m_3^2$, and use $b=s_3^1$, $db=-2\sqrt{s}dy_2$
to compare.

\subsection{Computing $b_4$}\label{bfour}
Above we have expressed $V_3^D$ as an integral involving $V_2^D$.
We can iterate this procedure.

Let us compute $V_4^D$ next repeating the computation which led to Eq.(\ref{vthree}). We consider edges $e_1,e_2,e_3$ as a $b_3$ subgraph with an external momentum 
$k_3$ flowing through.

We let $k_4$ be the external momentum of $\Im(\Phi_R^D(b_4))$, $0<k_4^2=s$. 
We put ourselves in the restframe $k_4=(k_{4;0},\vec{0})^T$ for the $k_3$-integration.

Consider then
\[
\Im\left(\Phi_R^D(b_4)\right)(s)=\int d^Dk_3 \Theta(k_3^2-(m_1+m_2+m_3)^2)
\delta_+((k_4-k_3)^2)-m_4^2)V_3^D(k_3^2,m_1^2,m_2^2,m_3^2).
\]
The $\delta_+$ distribution demands that $k_{4;0}-k_{3;0}>0$, and therefore we get
\beas
\Im\left(\Phi_R^D(b_4)\right)(s)=\omega_{\frac{D}{2}}\int_{-\infty}^{k_{4;0}} dk_{3;0}\int_0^\infty dt_3\sqrt{t_3}^{D-3}\Theta(k_{3;0}^2-t_3-(m_1+m_2+m_3)^2) & & \\V_3^D(k_{3;0}^2-t_3,m_1^2,m_2^2,m_3^2) \delta((k_{4;0}-k_{3;0})^2-t_3-m_4^2).& & 
\eeas
As a function of $k_{3;0}$, the argument of the $\delta$-distribution has two zeroes:
$k_{3;0}=k_{4;0}\pm\sqrt{t_3+m_4^2}$.

As $k_{4;0}-k_{3;0}>0$, it follows $k_{3;0}=k_{4;0}-\sqrt{t_3+m_4^2}$.
Therefore, $k_{3;0}^2-t_3=k_{4;0}^2+m_4^2-2k_{4;0}\sqrt{t_3+m_4^2}$.

For our desired integral, we get
\beas
\Im\left(\Phi_R^D(b_4)\right)(s) & = & \omega_{\frac{D}{2}}\int_0^\infty dt_ 3 \sqrt{t_3}^{D-3} \Theta(k_{4;0}^2+m_4^2-2k_{4;0}\sqrt{t_3+m_4^2}-(m_1+m_2+m_3)^2)\times\\
 & \times &  \frac{ V_3^D\left(k_{4;0}^2+m_4^2-2k_{4;0}\sqrt{t_3+m_4^2},m_1^2,m_2^2,m_3^2\right)}{\sqrt{t_3+m_4^2}}.
\eeas
 The $\Theta$-distribution requires
 \[
 k_{4;0}^2+m_4^2-(m_1+m_2+m_3)^2\geq 2k_{4;0}\sqrt{t_3+m_4^2}.
 \]
Solving for $t_3$, we get
\[
0\leq t_3\leq \frac{\lambda(s,m_4^2,(m_1+m_2+m_3)^2)}{4s}.
\]
As $t_3\geq 0$, we must have for the physical threshold $s>(m_4+m_3+m_1+m_2)^2$.
We then have
\beas
\Im(\Phi_R^D(b_4)(s)) & = & \Theta(s-(m_1+m_2+m_3+m_4)^2)\times\\
 & \times & \omega_{\frac{D}{2}}\int_0^{\frac{\lambda(s,m_4^2,(m_1+m_2+m_3)^2)}{4s}}
\frac{V_3^D(s+m_4^2-2\sqrt{s}\sqrt{t_3+m_4^2},m_1^2,m_2^2,m_3^2)}{\sqrt{t_3+m_4^2}}
\sqrt{t_3}^{D-3}dt_3.
\eeas

Let us now transform variables again.
\beas
y_3 & := & \sqrt{t_3+m_{4}^2},\\
t_3 & = & y_3^2-m_{4}^2,\\
dt_3 & = & 2y_3 dy_3,\\
\int_0^{\frac{\lambda}{4s}} & \to & \int_{m_{4}}^{\frac{s+m_{4}^2-(m_1+m_2+m_3)^2}{2\sqrt{s}}}. 
\eeas
We get
\beas
\Im(\Phi_R^D(b_4)(s)) & = & \Theta(s-(m_1+m_2+m_3+m_4)^2)\times\\
 & \times & \underbrace{\omega_{\frac{D}{2}}\int_{m_4}^{\frac{s+m_{4}^2-(m_1+m_2+m_3)^2}{2\sqrt{s}}}
V_3^D(\overbrace{s+m_4^2-2\sqrt{s}y_3}^{s_4^1(y_3)},m_1^2,m_2^2,m_3^2)\sqrt{y_3-m_4^2}^{D-3}
dy_3}_{V_4^{D}(s,m_1^2,m_2^2,m_3^2,m_4^2)}.
\eeas
We have thus expressed $V_4^D$ as an integral involving $V_3^D$.
As we can express $V_3^D$ by $V_2^D$, we get the iterated integral,
\bea\label{vfour}
V_4^D(s,m_1^2,m_2^2,m_3^2,m_4^2) & = & 
\omega_{\frac{D}{2}}^2\int_{m_4}^{\frac{s+m_{4}^2-(m_1+m_2+m_3)^2}{2\sqrt{s}}}
\Biggl(\int_{m_3}^{\frac{s_4^1(y_3)+m_{3}^2-(m_1+m_2)^2}{2\sqrt{s_4^1(y_3)}}}\times
\nonumber\\
 & \times  & V_2^D(s_4^2(y_2,y_3),m_1^2,m_2^2)\sqrt{y_2-m_3^2}^{D-3}dy_2\Biggr)\sqrt{y_3-m_4^2}^{D-3}
dy_3.
\eea
We abbreviated 
\[
s_4^2(y_2,y_3):=s_4^1(y_3)-2\sqrt{s_4^1(y_3)}y_2+m_3^2=s_4^0-2\sqrt{s_4^0}y_3+m_4^2-2\sqrt{s_4^0-2\sqrt{s_4^0}y_3+m_4^2}y_2+m_3^2,
\] 
$s_4^0:=s$.
\subsection{Beyond elliptic itegrals for $b_4$}\label{bfournonell}
Note that $V_4^2$ can not be read as a complete elliptic integral of any kind. 
It is a double integral over the inverse square root of an algebraic function.
$V_3^2$ was in contrast a single integral over the inverse square root of a mere quartic polynomial.
Concretely the relevant integrand is
\[
\frac{1}{\sqrt{(y_3^2-m_4^2)^2(y_2^2-m_3^2)v_4(y_2,y_3)}}.
\]

In fact the innermost $y_2$ integral can still be expressed as a complete elliptic 
integral of the first kind as in Eq.(\ref{bthreeconcrete}), as $v_4$ is a quadratic polynomial in $y_2$ so that 
\[
(y_2^2-m_3^2)v_4=(y_2-m_3)(y_2+m_3)(y_2-y_{2,+})(y_2-y_{2,-})
\]
 is a quartic in $y_2$ albeit with coefficients $y_{2,\pm}$ which are algebraic in $y_3$.
We have
\[
y_{2,\pm}(y_3)=\frac{(m_1^2+m_2^2-m_3^2-s_4^1(y_3))
\pm 2\sqrt{m_1^2m_2^2}}{2 \sqrt{s_4^1(y_3)}}.
\]
We get the more than elliptic integral over an elliptic integral of the first kind,
\bea\label{bfourythree}
V_4^2(s) & = & \omega_1\int_{m_4}^{\frac{s+m_{4}^2-(m_1+m_2+m_3)^2}{2\sqrt{s}}}\frac {2\omega_1}{(y_{2,+}(y_3)+m_4)(y_{2,-}(y_3)-m_4)}\times\nonumber\\
 & \times & K\left(\frac{(y_{2,-}(y_3)+m_4)(y_{2,+}(y_3)-m_4)}{(y_{2,-}(y_3)-m_4)(y_{2,+}(y_3)+m_4)}\right) \frac{1}{\sqrt{y_3^2-m_4^2}}dy_3.
\eea

\subsection{Computing $b_n$ by iteration}\label{bn}
Iterating the computation which led to Eq.(\ref{vfour}) we get
\begin{thm}\label{monodromyThm}
Let $b_n$ be the banana graph on $n$ edges and two leaves (at two distinct vertices) with masses $m_i$ and momenta $k_n,-k_n$ incoming at the two vertices in $D$ dimensions.\\
i) it has an imaginary part determined by a normal threshold as
\[
\Im\left(\Phi_R^D(b_n)\right)(s)=\Theta\left(s-\left(\sum_{j=1}^n m_j\right)^2\right)V_n^D(s,\{m_i^2\}),
\]
and with a recursion ($n\geq 3$)
\beas
V_n^D(s;\{m_i^2\})
 & = & \omega_{\frac{D}{2}}\int_{m_n}^{\frac{s+m_n^2-(\sum_{j=1}^{n-1}m_j)^2}{2\sqrt {s_n^0}}}V_{n-1}^D(s_n^0-2\sqrt{s_n^0}y_{n-1}+m_n^2,m_1^2,\ldots,m_{n-1}^2)\times\\
 & \times & \sqrt{y_{n-1}^2-m_n^2}^{D-3}dy_{n-1}.
\eeas
\end{thm}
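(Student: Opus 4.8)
I would prove Theorem~\ref{monodromyThm} by induction on $n$, the inductive step being a verbatim repetition of the computation that produced Eqs.~(\ref{vthree}) and (\ref{vfour}). The base case is $n=2$: the subsection ``Computing $b_2$'' already gives $\Im(\Phi_R^D(b_2))(s)=V_2^D(s;m_1^2,m_2^2)\,\Theta(s-(m_1+m_2)^2)$ directly from Cutkosky's theorem, and here $\mathbf{thresh}$ collapses to $\{(m_1\pm m_2)^2\}$, so the support statement is exactly the normal-threshold one. It therefore remains to run the recursion.

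For the inductive step, fix $n\ge 3$ and assume the statement for $b_{n-1}$. First I would use the flag $b_{n-1}\subset b_n$ (edges $e_1,\dots,e_{n-1}$ forming the subgraph, $e_n$ the co-graph) and Cutkosky's theorem to write
\[
\Im\!\left(\Phi_R^D(b_n)\right)(s)=\int d^Dk_{n-1}\,\Im\!\left(\Phi_R^D(b_{n-1})\right)\!\left(k_{n-1}^2,m_1^2,\dots,m_{n-1}^2\right)\,\delta_+\!\left((k_n-k_{n-1})^2-m_n^2\right),
\]
the $b_n$ analogue of Eqs.~(\ref{bthreeo})--(\ref{bthreeth}); the interchange of integrations is legitimate because the distributions involved have common support which is compact, the $n$ on-shell constraints fixing every non-compact direction, so no short-distance issue arises. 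By the inductive hypothesis the subgraph factor equals $V_{n-1}^D(k_{n-1}^2,\{m_i^2\})\,\Theta\!\big(k_{n-1}^2-(\sum_{j=1}^{n-1}m_j)^2\big)$. Next I would pass to the rest frame $k_n=(k_{n;0},\vec 0)^T$, split $k_{n-1}$ into $k_{n-1;0}$ and $t_{n-1}:=|\vec k_{n-1}|^2$, and resolve the $\delta_+$: of its two zeros only $k_{n-1;0}=k_{n;0}-\sqrt{t_{n-1}+m_n^2}$ survives the positivity constraint, it produces a Jacobian $1/\sqrt{t_{n-1}+m_n^2}$, and it sets $k_{n-1}^2=s+m_n^2-2\sqrt{s}\sqrt{t_{n-1}+m_n^2}$. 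Feeding this value into the $\Theta$ of the inductive hypothesis and solving for $t_{n-1}$ gives $0\le t_{n-1}\le \lambda\!\big(s,m_n^2,(\sum_{j=1}^{n-1}m_j)^2\big)/(4s)$; non-emptiness of this interval is equivalent to $s>(\sum_{j=1}^n m_j)^2$, which---writing $\lambda(s,m_n^2,M^2)=(s-(\sqrt s+M)^2)(s-(\sqrt s-M)^2)$ with $M=\sum_{j=1}^{n-1}m_j$---is exactly the claimed normal threshold. Finally the substitution $y_{n-1}=\sqrt{t_{n-1}+m_n^2}$, $dt_{n-1}=2y_{n-1}\,dy_{n-1}$, turns the radial measure $\sqrt{t_{n-1}}^{\,D-3}dt_{n-1}$ into $\sqrt{y_{n-1}^2-m_n^2}^{\,D-3}dy_{n-1}$ and carries the limits to $\big[m_n,\,(s+m_n^2-(\sum_{j=1}^{n-1}m_j)^2)/(2\sqrt s)\big]$, giving precisely the asserted recursion for $V_n^D$.

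The step I expect to carry the real content---the bookkeeping above being already done explicitly for $n=3,4$---is the well-definedness of $V_n^D$: a priori the $n$ choices of which edge plays the role of the co-graph (the $b_n$ versions of Eqs.~(\ref{bthreeo})--(\ref{bthreeth})) produce $n$ different-looking iterated integrals, and one has to invoke Cutkosky's theorem, in the form of \cite{BlKrCut}, to identify each of them with the single variation of $\Phi_R^D(b_n)$ across the normal threshold, hence with one another. The permutation symmetry of $(\sum_{j=1}^n m_j)^2$ and the vanishing of the $b_{n-1}$-integrand at the upper endpoint (already noted after Eq.~(\ref{vthree})) serve as consistency checks. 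Once this is in place, $\Phi_R^D(b_n)(s,s_0)$ is recovered by the sufficiently subtracted dispersion integral of Eq.~(\ref{dispTaylor}), the condition $s_0<s_{\mathbf{min}}$ from Eq.~(\ref{szero}) guaranteeing that the subtraction contributes nothing to the imaginary part.
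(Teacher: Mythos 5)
Your proposal is correct and takes essentially the same route as the paper: the paper's own proof of i) and ii) is precisely the observation that the derivation of Eq.~(\ref{vfour}) goes through verbatim with $4\to n$, $3\to n-1$, which is exactly your inductive step, with the seed $\Im(\Phi_R^D(b_2))$ and the equality of the different flag choices already recorded in Sec.~\ref{sec:2}. One cosmetic slip worth fixing: the factorization should read $\lambda(s,m_n^2,M^2)=\bigl(s-(m_n+M)^2\bigr)\bigl(s-(m_n-M)^2\bigr)$ (equivalently $\bigl(m_n^2-(\sqrt{s}+M)^2\bigr)\bigl(m_n^2-(\sqrt{s}-M)^2\bigr)$), not with $\sqrt{s}$ replacing $m_n$ as written; your threshold conclusion is unaffected.
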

\noindent\textbf{Rem. i)}\\
This imaginary part is the variation in $s$ of $\Phi_R^D(b_n)(s)$ in the principal sheet. Variations on other sheets are collected in App.(\ref{apppseudo}). See \cite{DirkEll} for an introduction to a discussion of the r\^ole of such pseudo-thresholds.\hfill$|$
\begin{thm*}(cont'd)\\
ii) Define for all $n\geq 2$, $0\leq j\leq n-2$, 
\[
s_n^0:=s,
\] 
and for $n-2\geq j\geq 1$, 
$s_n^j=s_n^j(y_{n-j},\ldots,y_{n-1};m_n,\ldots,m_{n-j+1})$,
\be\label{defsn}
s_n^j
=s_n^{j-1}-2\sqrt{s_n^{j-1}}y_{n-j}+m^2_{n-j+1}.
\ee\label{defup}
Define 
\be\label{up}
\mathrm{up}_n^j:=\frac{s_n^j+m_{n-j}^2-\left(\sum_{i=1}^{n-j-1}m_i\right)^2}{2\sqrt{s_n^{j}}},
\ee
then $V_n^D$ is given by the following iterated integral:
\bea\label{itInt}
V_n^D(s,m_1^2,\ldots,m_n^2) & := & \omega_{\frac{D}{2}}^{n-2}\int_{m_n}^{\mathrm{up}_n^0}\Biggl(\int_{m_{n-1}}^{\mathrm{up}_n^1(y_{n-1})}\Biggl(
\int_{m_{n-2}}^{\mathrm{up}_n^2(y_{n-1},y_{n-2})}\cdots\nonumber\\
 & \cdots & \Biggl(\int_{m_3}^{\mathrm{up}_n^{n-3}(y_3,\ldots,y_{n-1})}
V_2^D(s_n^{n-2}(y_2,\ldots,y_{n-1}),m_1^2,m_2^2)\times\nonumber\\
 & \times & 
\sqrt{y_2^2-m_{3}^2}^{D-3}dy_2 \Biggr)\cdots 
\sqrt{y_{n-2}^2-m_{n-1}^2}^{D-3} dy_{n-2}\Biggr)\times\\
 & \times & \sqrt{y_{n-1}^2-m_{n}^2}^{D-3} dy_{n-1}.\nonumber
\eea
Here, $V_2^D(a,b,c)=\frac{\lambda(a,b,c)^{\frac{D-3}{2}}}{a^{\frac{D}{2}-1}},$
so that
\[
V_2^D(s_n^{n-2}(y_2,\ldots,y_{n-1}),m_1^2,m_2^2)=\omega_{\frac{D}{2}}\frac{\lambda\Bigl(s_n^{n-2}(y_2,\ldots,y_{n-1}),m_1^2,m_2^2\Bigr)^{\frac{D-3}{2}}}{\Bigl(s_n^{n-2}(y_2,\ldots,y_{n-1})\Bigr)^{\frac{D}{2}-1}}.
\]
\end{thm*}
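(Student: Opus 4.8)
The plan is to prove both parts of Theorem \ref{monodromyThm} by induction on $n$, using the computations of $b_2$, $b_3$, $b_4$ in Sections \ref{sec:2}--\ref{bfour} as the base cases and the model for the inductive step. The recursion in part i) is the heart of the matter, and the closed iterated-integral formula in part ii) follows from it by unwinding.

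First I would establish the recursion. Fix a flag $b_2\subset b_3\subset\cdots\subset b_n$ by choosing the order of edges, so that $e_1,\ldots,e_{n-1}$ constitute the subgraph $b_{n-1}$ with external momentum $k_{n-1}$ flowing through it and $e_n$ is the cograph edge. By Fubini (exactly as in Eqs.(\ref{bthreeo})--(\ref{bthreeth}) for $b_3$), write $I_{\mathrm{cut}}(b_n)$ as the $k_{n-1}$-integral of $\Im(\Phi_R^D(b_{n-1}))(k_{n-1}^2,\{m_i^2\}_{i<n})$ against $\delta_+((k_n-k_{n-1})^2-m_n^2)$. Then repeat verbatim the manipulation that produced Eq.(\ref{vthree}) and Eq.(\ref{vfour}): go to the rest frame $k_n=(k_{n;0},\vec 0)^T$, split $d^Dk_{n-1}=\omega_{D/2}\,dk_{n-1;0}\,\sqrt{t_{n-1}}^{D-3}dt_{n-1}$ with $t_{n-1}=|\vec k_{n-1}|^2$, use the $\Theta(k_{n-1;0}-k_{n;0})$ in $\delta_+$ to pick the root $k_{n-1;0}=k_{n;0}-\sqrt{t_{n-1}+m_n^2}$, integrate out the $\delta$ with the Jacobian $\sqrt{t_{n-1}+m_n^2}/\sqrt{s}$ absorbed appropriately, and read off that $k_{n-1}^2=s+m_n^2-2\sqrt{s}\sqrt{t_{n-1}+m_n^2}$. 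The inductive hypothesis supplies $\Im(\Phi_R^D(b_{n-1}))=\Theta(k_{n-1}^2-(\sum_{j<n}m_j)^2)V_{n-1}^D$; the surviving $\Theta$ translates, after solving the inequality $k_{n;0}^2+m_n^2-(\sum_{j<n}m_j)^2\geq 2k_{n;0}\sqrt{t_{n-1}+m_n^2}$ for $t_{n-1}$, into the bound $0\leq t_{n-1}\leq\lambda(s,m_n^2,(\sum_{j<n}m_j)^2)/4s$, hence into the overall threshold factor $\Theta(s-(\sum_{j=1}^n m_j)^2)$ (nonnegativity of $t_{n-1}$ forces $\sqrt{s}\geq\sum_j m_j$), which proves part i). Finally substitute $y_{n-1}=\sqrt{t_{n-1}+m_n^2}$, so $dt_{n-1}=2y_{n-1}dy_{n-1}$ and the limits become $m_n$ and $\mathrm{up}_n^0=(s+m_n^2-(\sum_{j<n}m_j)^2)/2\sqrt{s}$; one must check that the spurious Jacobian factors combine to give exactly $\sqrt{y_{n-1}^2-m_n^2}^{D-3}dy_{n-1}$, matching the stated recursion. (A small subtlety: the paper writes $\sqrt{y_{n-1}^2-m_n^2}^{D-3}$ in the theorem but $\sqrt{y_2-m_3^2}^{D-3}$ etc.\ in Eq.(\ref{itInt}); I would silently read the former, since $t_{n-1}=y_{n-1}^2-m_n^2$.)

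For part ii), I would argue that the definitions $s_n^0=s$ and $s_n^j=s_n^{j-1}-2\sqrt{s_n^{j-1}}\,y_{n-j}+m_{n-j+1}^2$ are precisely the ``internal momentum-squared seen by $b_{n-j}$'' appearing at each stage of the nested recursion: indeed, running the recursion of part i) on $V_{n-1}^D(s_n^1,\ldots)$ reproduces the same recursion with $s\mapsto s_n^1$, whence by an easy sub-induction the $\ell$-th nested argument is $s_n^\ell$ and the $\ell$-th upper limit is $\mathrm{up}_n^\ell$ as defined in Eq.(\ref{up}). Peeling off $n-2$ layers of the recursion until we hit $V_2^D$, and recalling $V_2^D(a,b,c)=\omega_{D/2}\lambda(a,b,c)^{(D-3)/2}/a^{D/2-1}$ from the $b_2$ computation, yields exactly Eq.(\ref{itInt}) with the advertised prefactor $\omega_{D/2}^{n-2}$ (one factor of $\omega_{D/2}$ from each of the $n-2$ integrations, with the innermost one hidden inside $V_2^D$ contributing the last). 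The bookkeeping of which mass-squared argument $m_{n-j+1}^2$ sits in $s_n^j$ versus which sits in the $\sqrt{y_{n-j}^2-m^2}$ weight is the only place one can go wrong, and matching it against the worked-out cases $n=3$ (Eq.(\ref{vthree})) and $n=4$ (Eq.(\ref{vfour})) pins down the indexing.

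The main obstacle is not any single hard estimate but the careful propagation of the threshold/support conditions through the induction: one must verify that at each stage the intersection of the inherited $\Theta$ from $V_{n-1}^D$ with the new $\delta_+$ constraint is nonempty exactly when $s>(\sum_j m_j)^2$, that the integrand vanishes at the upper endpoint (as checked explicitly for $b_3$ via $\lambda((m_1+m_2)^2,m_1^2,m_2^2)=0$, which here generalizes to $\lambda((\sum_{i\le n-j-1}m_i)^2+\cdots)$ vanishing at $y_{n-j}=\mathrm{up}_n^{j}$), so that the iterated integral converges and the interchange of integrations via Fubini is legitimate, and that the equality of the $n$ a priori different flag-orderings — which for $b_3$ gave Eqs.(\ref{vthreetw})--(\ref{vthreeth}) — continues to hold, guaranteeing $V_n^D$ is well-defined independently of the chosen order. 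I expect the convergence-at-the-endpoint and Fubini justification to consume most of the genuine work; the algebraic identities for the $s_n^j$ are routine substitutions.
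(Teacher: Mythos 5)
Your proposal is correct and follows essentially the same route as the paper, whose entire proof of i) and ii) is the remark that they ``follow from the derivation of Eq.(\ref{vfour}) upon setting $4\to n$, $3\to n-1$'' — i.e.\ exactly your induction whose step repeats the $b_3$/$b_4$ manipulation and whose unwinding, with $s_n^j$ and $\mathrm{up}_n^j$ as bookkeeping, yields Eq.(\ref{itInt}). You merely make explicit the threshold/support propagation and the endpoint-vanishing (the paper records the latter separately as part iii)), and your reading of the $\sqrt{y_j^2-m_{j+1}^2}$ weights resolves the paper's occasional typographical slips in the right way.
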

\noindent\textbf{Rem. ii)}\\
We solve the recursion in terms of an iteration of  one-dimensional integrals.
$V_2^D(b_2)$ serves as the seed, $V_2^D=\omega_{\frac{D}{2}}\lambda(s_n^{n-2},m_1^2,m_2^2)/s^{\frac{D}{2}-1}$) and $s_n^{n-2}=s_n^{n-2}(y_{n-1},\ldots,y_2;m_3^2,\ldots,m_n^2)$
depends on integration variables $y_j$ and on mass squares $m_{j+1}^2$, $j=2,\ldots,n-1$.
For $b_3$ we need a single integration, for $b_n$ we need to iterate $(n-2)$ integrals. Note that we could always do the innermost $y_2$-integral in terms of a complete elliptic integral (replacing $s_4^1\to s_n^{n-3}$ in Eq.(\ref{bfourythree}) etc.) and use that as the seed.\hfill$|$
\begin{thm*}(cont'd)\\
iii) We have the following identities:
\bea
V_n^D\left(\left(\sum_{j=1}^nm_j\right)^2;\{m_i^2\}\right) & = & 0,\label{idsone}\\
\mathrm{up}_n^1(y_{n-1})_{|y_{n-1}=\mathrm{up}_n^0} & = & m_{n-1},\label{idstwo}\\
\mathrm{up}_n^j(y_{n-j},\ldots, y_{n-1})_{|y_{n-j}=\mathrm{up}_n^{j-1}} & = & m_{n-j},\label{idsthree}\\
\mathrm{up}_n^{n-3}(y_{3},\ldots, y_{n-1})_{|y_{3}=\mathrm{up}_n^{n-4}} & = & m_3,
\label{idsfour}\\
V_2^D(s_n^{n-2},m_1^2,m_2^2)_{|y_{2}=\mathrm{up}_n^{n-3}} & = & 0.\label{idfive}
\eea
\end{thm*}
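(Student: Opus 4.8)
The plan is to obtain parts i) and ii) by induction on $n$, running the $b_3$/$b_4$ computations above in general, and then to read off the five identities of part iii) from a single algebraic ``boundary lemma''.

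\textbf{Parts i) and ii).} I would take the explicit evaluation of $\Im(\Phi_R^D(b_2))$ above (with $V_2^D(a,b,c)=\lambda(a,b,c)^{(D-3)/2}/a^{D/2-1}$ and threshold $(m_1+m_2)^2$) as the base case. For the inductive step, regard $e_1,\dots,e_{n-1}$ as the sub-banana $b_{n-1}$ with co-edge $e_n$, so that by the Fubini/phase-space split of Sec.(\ref{history}), $\Im(\Phi_R^D(b_n))(s)=\int d^Dk_{n-1}\,\Im(\Phi_R^D(b_{n-1}))(k_{n-1}^2,m_1^2,\dots,m_{n-1}^2)\,\delta_+((k_n-k_{n-1})^2-m_n^2)$, and then repeat verbatim the manipulation that produced Eq.(\ref{vthree}): go to the rest frame $k_n=(k_{n;0},\vec 0)^T$, write $d^Dk_{n-1}=\omega_{\frac{D}{2}}\,dk_{n-1;0}\sqrt{t_{n-1}}^{\,D-3}dt_{n-1}$, solve the positive-energy $\delta_+$ for $k_{n-1;0}$, and insert the inductive hypothesis $\Im(\Phi_R^D(b_{n-1}))=\Theta\!\big(k_{n-1}^2-(\sum_{j<n}m_j)^2\big)V_{n-1}^D$. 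The surviving $\Theta$ together with $t_{n-1}\ge 0$ confines $0\le t_{n-1}\le\lambda(s,m_n^2,(\sum_{j<n}m_j)^2)/4s$, which is non-empty exactly when $\lambda\ge 0$, i.e.\ (on the principal sheet) $s\ge(\sum_{j\le n}m_j)^2$ — that is the $\Theta$-factor of i). The change of variables $y_{n-1}=\sqrt{t_{n-1}+m_n^2}$ produces the upper bound $\mathrm{up}_n^0$ and turns the argument of $V_{n-1}^D$ into $s_n^1=s-2\sqrt s\,y_{n-1}+m_n^2$, giving the recursion; since $\partial_{y_{n-1}}s_n^1<0$ and, by the lemma below, $s_n^1|_{y_{n-1}=\mathrm{up}_n^0}=(\sum_{j<n}m_j)^2$, the argument stays in the physical region of $b_{n-1}$. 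Unwinding the recursion down to the seed $V_2^D$ and relabelling the nested variables $y_{n-1},\dots,y_2$ with bounds $\mathrm{up}_n^0,\dots,\mathrm{up}_n^{n-3}$ gives Eq.(\ref{itInt}). Well-definedness and sheet-independence of these integrals are exactly as argued for $b_3$.

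\textbf{Part iii).} The engine is the identity, valid for $1\le j\le n-2$,
\[
s_n^j\big|_{\,y_{n-j}=\mathrm{up}_n^{j-1}}=\Bigl(\textstyle\sum_{i=1}^{n-j}m_i\Bigr)^2 ,
\]
which follows at once on substituting $\mathrm{up}_n^{j-1}=\big(s_n^{j-1}+m_{n-j+1}^2-(\sum_{i=1}^{n-j}m_i)^2\big)/(2\sqrt{s_n^{j-1}})$ from Eq.(\ref{up}) into $s_n^j=s_n^{j-1}-2\sqrt{s_n^{j-1}}\,y_{n-j}+m_{n-j+1}^2$ from Eq.(\ref{defsn}): the $s_n^{j-1}$ and $m_{n-j+1}^2$ terms cancel. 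Feeding $s_n^j=(\sigma+m_{n-j})^2$, with $\sigma:=\sum_{i\le n-j-1}m_i$ and $\sqrt{(\sigma+m_{n-j})^2}=\sigma+m_{n-j}$ (the masses being nonnegative), back into Eq.(\ref{up}) and completing the square (the factorization of $\lambda$) gives
\[
\mathrm{up}_n^j\big|_{\,y_{n-j}=\mathrm{up}_n^{j-1}}=\frac{(\sigma+m_{n-j})^2+m_{n-j}^2-\sigma^2}{2(\sigma+m_{n-j})}=\frac{2m_{n-j}(\sigma+m_{n-j})}{2(\sigma+m_{n-j})}=m_{n-j},
\]
which is Eq.(\ref{idsthree}) and, in the cases $j=1$ and $j=n-3$, Eqs.(\ref{idstwo}) and (\ref{idsfour}). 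The identical completing-the-square step applied to the $j=0$ case of Eq.(\ref{up}) at $s=(\sum_{j\le n}m_j)^2$ gives $\mathrm{up}_n^0=m_n$, so the outermost integral in the recursion of i) runs over $[m_n,m_n]$ and therefore vanishes: this is Eq.(\ref{idsone}). Finally, the $j=n-2$ instance of the lemma gives $s_n^{n-2}|_{y_2=\mathrm{up}_n^{n-3}}=(m_1+m_2)^2$, and since $\lambda(s,m_1^2,m_2^2)=(s-(m_1+m_2)^2)(s-(m_1-m_2)^2)$ vanishes there, $V_2^D(s_n^{n-2},m_1^2,m_2^2)=0$ at $y_2=\mathrm{up}_n^{n-3}$ for $D>3$ (for $D\le 3$ the exponent $\frac{D-3}{2}$ is non-positive and this is at worst an integrable endpoint singularity, still consistent with the convergence of Eq.(\ref{itInt}) and with Eq.(\ref{idsone})) — that is Eq.(\ref{idfive}).

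\textbf{Main obstacle.} No single computation is hard: parts i)--ii) are the $b_3,b_4$ derivations verbatim, and part iii) is elementary algebra. The real difficulty is bookkeeping — keeping straight, inside the $(n-2)$-fold nested integral, which variable $y_{n-j}$ carries which bound $\mathrm{up}_n^j$ and which running argument $s_n^j$, and isolating the boundary lemma at exactly the level of generality ($1\le j\le n-2$, plus the $j=0$ case of Eq.(\ref{up}) for the threshold) from which all of Eqs.(\ref{idsone})--(\ref{idfive}) drop out uniformly. A secondary point is to fix the reality/positivity hypotheses ($m_i\ge 0$, and $s$ on the principal sheet above threshold so every $\sqrt{s_n^j}$ is real and positive) under which the square roots and $\Theta$-supports behave as stated.
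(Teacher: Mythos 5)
Your proof of part iii) is correct and takes essentially the paper's own route: the paper proves these identities ``by inspection'' of the definitions, working the $j=1$ case explicitly ($s_n^1|_{y_{n-1}=\mathrm{up}_n^0}=(m_1+\cdots+m_{n-1})^2$, hence $\mathrm{up}_n^1(\mathrm{up}_n^0)=m_{n-1}$) and leaving the general case as ``and so on'', which is precisely your boundary lemma $s_n^j|_{y_{n-j}=\mathrm{up}_n^{j-1}}=\bigl(\sum_{i=1}^{n-j}m_i\bigr)^2$ carried out for arbitrary $j$. Your uniform derivation of Eqs.(\ref{idsone}) and (\ref{idfive}) (collapse of the outermost integration range at $\mathrm{up}_n^0=m_n$ at threshold, and vanishing of $\lambda$ at $(m_1+m_2)^2$, with the explicit $D\leq 3$ caveat that the paper only touches in Rem.~iii)) simply fills in what the paper leaves implicit.
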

\noindent\textbf{Rem. iii)}\\
Eq.(\ref{idsone}) ensures that the dispersion integrand vanishes at the lower boundary $x=(m_1+\cdots +m_n)^2$ (the normal threshold) as it should.
Following  Eqs.(\ref{idstwo}-\ref{idsfour}) for any $y_j$-integration but the innermost integration the integrand vanishes at the lower and upper boundaries.
By Eq.(\ref{idfive}) for the innermost $y_{n-1}$ integral this  holds for $D\gneq 2$. 

At $D=2$ the result can be achieved by considering
\[
\lim_{\eta\to 0}\int_{m_3+\eta}^{\mathrm{up}_n^{n-3}-\eta}\cdots dy_{n-1}.
\]
In the limit $\sqrt{s}\to m_{\mathbf{normal}}^n$ for which $\mathrm{up}_n^{n-3}\to m_3$ one confirms the analysis in \cite{RemSchouten} that 
a finite value at threshold remains.

Summarizing  for any $D$ this amounts to compact integration as we have in any $y_j$ integration a resurrection of Stokes formula
\be\label{stokes} 
\int_{m_{j+1}}^{\mathrm{up}_n^{j+1}}\partial_{y_j}f(y_j)\cdots dy_{j}=0,
\ee
for any rational function $f(y_j)$ inserted as a coefficient of $V_2^D$.
The dots correspond to the other iterations of integrals in the $y_j$ variables. These are integration-by-parts identities.

This reflects the fact that the $n$
$\delta$-functions in a cut banana $b_n$ constrain the $(n-1)$ integrations of $k_{j;0}$, $j=1,\cdots,n-1$ and also the total integration over $r=\sum_{j=1}^{n-1}|\vec{k_j}|$. 
Here we can set $|\vec{k_j}|=r u_j$, and the $u_j$ parameterize a $(n-1)$-simplex
and hence a compactum. Angle integrals are over compact surfaces $S^{D-2}$. Only integrations over boundaries remain.\hfill$|$
\begin{thm*}(cont'd)\\
iv) We have
\be\label{partialykm}
\partial_{y_{k}} s_n^j=-2\sqrt{s_n^{n-k-1}}\partial_{m^2_{k+1}}s_n^j,\forall (n-j)\leq k\leq (n-1),
\ee
if all masses are different. The case of some equal masses is left to the reader.\\
Also,
\be\label{partials}
\left(\prod_{j=0}^{i-1}\sqrt{s_n^j}\right)\partial_{s}s_n^i=2\prod_{j=0}^{i-1}
\left(\sqrt{s_n^{j}}-y_{n-j-1}\right).
\ee
For derivatives with respect to masses we have for $0\leq r\lneq k-1$,
\be\label{partialm}
\partial_{m_{n-r}^2}s_n^k=\prod_{j=r+1}^{k-1}\frac{\sqrt{s_n^{j}}-y_{n-(j+1)}}{\sqrt{s_n^{j}}}.
\ee
whilst $\partial_{m_{n-k+1}^2}s_n^k=1$. 
Furthermore for $1\leq i\leq n-2-r$, $0\leq r\leq n-3$,
\be\label{partialyk}
\partial_{y_{n-i}}s_{n}^{n-2-r}=-2\sqrt{s_n^{i-1}}\prod_{j=2+r}^{n-1-i}\frac{s_n^{n-j-1}-y_{j}}{s_n^{n-j-1}}.
\ee
\end{thm*}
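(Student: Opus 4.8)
The plan is to derive all four identities from a single one-step differentiation rule for the recursion (\ref{defsn}). Writing $s_n^j=F(s_n^{j-1};y_{n-j},m_{n-j+1}^2)$ with $F(u;y,\mu)=u-2\sqrt u\,y+\mu$, one reads off immediately
\[
\partial_{s_n^{j-1}}s_n^j=\frac{\sqrt{s_n^{j-1}}-y_{n-j}}{\sqrt{s_n^{j-1}}},\qquad
\partial_{y_{n-j}}s_n^j=-2\sqrt{s_n^{j-1}},\qquad
\partial_{m_{n-j+1}^2}s_n^j=1 ,
\]
and these are the only ingredients needed. The structural observation I would use is that, for $j\geq i$, the variable $s$, the integration variables $y_{n-1},\ldots,y_{n-i}$ and the mass squares $m_n^2,\ldots,m_{n-i+1}^2$ --- everything ``introduced'' at steps $1,\ldots,i$ of the recursion --- enter $s_n^j$ \emph{only} through $s_n^i$. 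Hence for any such variable $v$ the chain rule gives
\[
\partial_v s_n^j=\Bigl(\prod_{\ell=i+1}^{j}\partial_{s_n^{\ell-1}}s_n^\ell\Bigr)\,\partial_v s_n^i
=\Bigl(\prod_{\ell=i+1}^{j}\frac{\sqrt{s_n^{\ell-1}}-y_{n-\ell}}{\sqrt{s_n^{\ell-1}}}\Bigr)\,\partial_v s_n^i ,
\]
a fixed telescoping product times the ``seed'' derivative of $v$ at the step where $v$ first appears.

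Given this, (\ref{partialykm}) is immediate: writing $k=n-i$, the two variables $y_k=y_{n-i}$ and $m_{k+1}^2=m_{n-i+1}^2$ are \emph{both} introduced at step $i$, with seeds $-2\sqrt{s_n^{i-1}}$ and $1$; since for every admissible $j$ --- the hypothesis $n-j\leq k\leq n-1$ says precisely that $y_k$ has appeared by step $j$ --- both $\partial_{y_k}s_n^j$ and $\partial_{m_{k+1}^2}s_n^j$ equal their seed times the \emph{same} telescoping product, their ratio is the constant $-2\sqrt{s_n^{i-1}}=-2\sqrt{s_n^{n-k-1}}$. The remaining three identities are the same computation with different seeds and a relabelling of the product: for (\ref{partials}) the seed is $\partial_s s_n^0=1$ and the product runs over $\ell=1,\ldots,i$, so that multiplying through by $\prod_{j=0}^{i-1}\sqrt{s_n^j}$ clears the denominators; for (\ref{partialm}) the mass $m_{n-r}^2$ with $0\leq r\leq k-2$ is introduced at step $r+1$ with seed $1$, while $m_{n-k+1}^2$ is introduced exactly at step $k$, giving $\partial_{m_{n-k+1}^2}s_n^k=1$; for (\ref{partialyk}) the seed is $-2\sqrt{s_n^{i-1}}$ at step $i$ and the product runs up to level $n-2-r$, after which the reindexing $\ell\mapsto n-\ell$ brings it to the displayed form. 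I would package all four uniformly as a single induction on $j$ (base case $j=0$; inductive step the one-step rule followed by the chain rule).

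The only genuine difficulty will be the index bookkeeping: keeping straight which $y_\bullet$ and which $m_\bullet^2$ enters at which step, which inequality constraints guarantee each product is non-empty (with the usual convention that an empty product is $1$), and matching the resulting $\ell$-products to the $j$-products stated in (\ref{partials})--(\ref{partialyk}) (a shift $j=\ell-1$ for (\ref{partials}) and (\ref{partialm}), a reflection $j=n-\ell$ for (\ref{partialyk})); I would also cross-check the overall normalizations against the explicit $n=3,4$ formulas (\ref{vthree})--(\ref{vfour}). The case of coinciding masses --- where one $m_i^2$ is the ``new'' mass at several steps, so that its seed becomes a sum over those steps --- is, as the statement says, left to the reader, but the same telescoping argument applies verbatim after summing the seeds.
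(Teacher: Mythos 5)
Your plan is correct and is essentially the paper's own argument: the paper proves iv) ``straight from the definition'' of $s_n^j$, i.e.\ exactly by the one-step derivatives of the recursion Eq.~(\ref{defsn}) combined with the telescoping chain rule, displaying only the sample computation $\partial_{m_n^2}s_n^3=\frac{(\sqrt{s_n^1}-y_{n-2})(\sqrt{s_n^2}-y_{n-3})}{\sqrt{s_n^1}\sqrt{s_n^2}}$, which is precisely your ``seed times telescoping product''. Carrying out the cross-check you propose will reveal that the chain rule yields Eq.~(\ref{partials}) \emph{without} the overall factor $2$ and Eq.~(\ref{partialyk}) with $\sqrt{s_n^{n-j-1}}$ rather than $s_n^{n-j-1}$ inside the product; these are misprints in the stated formulas (compare Eq.~(\ref{partialm}) and the worked example, which match your telescoping product exactly), not gaps in your method.
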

\noindent\textbf{Rem. iv)}\\
These formulae allow to trade $\partial_{y_j}$ derivatives with $\partial_{m_{j+1}^2}$ derivatives, and to treat $\partial_s$ derivatives. This is useful below when 
discussing differential equation, integration-by-parts  and master integrals for $\Phi_R^D(b_n)$.\hfill$|$
\begin{thm*}(cont'd)\\
v) Dispersion. Let $|[n,\nu]|-1$ (see Eq.(\ref{degdiv})) be the degree of divergence of $\Phi_R^D(b_n)_\nu$.
Then
\[
\Phi_R^D(b_n)_\nu(s,s_0)=\frac{(s-s_0)^{|[n,\nu]|}}{\pi}\int_{\left(\sum_{j=1}^n m_j\right)^2}^\infty
\frac{V_{[n,\nu]}^D(x,\{m_i^2\}}{(x-s)(x-s_0)^{|[n,\nu]|}}dx,
\]
is the renormalized banana graph with renormalization conditions
\[
\Phi_R^D(b_n)_\nu^{(j)}(s_0,s_0)=0,\, j\leq |[n,\nu]|-1,
\]
where $\Phi_R^D(b_n)_\nu^{(j)}(s_0,s_0)$ is the $j$-th derivative of $\Phi_R^D(b_n)_\nu(s,s_0)$ at $s=s_0$.
\end{thm*}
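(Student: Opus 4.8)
\noindent\emph{Sketch of the proof of part~v).} Part~v) is a dispersion-relation identity, and most of the analytic input is already in place: by part~(i) (together with the Cutkosky computation of Sec.(\ref{sec:2}) and the recursion of part~(i), now allowing raised propagator powers $\nu$) the discontinuity function $s\mapsto\Im\bigl(\Phi_R^D(b_n)_\nu\bigr)(s)$ is supported on the half-line $[s_{\mathbf{normal}},\infty)$, $s_{\mathbf{normal}}=\bigl(\sum_j m_j\bigr)^2$, and on it equals $V_{[n,\nu]}^D(s,\{m_i^2\})$. Writing $N:=|[n,\nu]|$ and denoting by $F(s,s_0)$ the right-hand side of the claimed formula, the plan is: (1) show the $N$-fold subtracted integral converges; (2) show $F$ is holomorphic on $\C\setminus[s_{\mathbf{normal}},\infty)$ with discontinuity $2i\,V_{[n,\nu]}^D$ across the cut; (3) verify the $N$ renormalisation conditions at $s_0$; (4) identify $F$ with $\Phi_R^D(b_n)_\nu$ by a Liouville-type uniqueness argument.

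Step~(1) carries the real content, and I expect it to be the main obstacle: one must show that the degree-of-divergence bookkeeping encoded in $|[n,\nu]|$ (see Eq.(\ref{degdiv})) yields the growth estimate $V_{[n,\nu]}^D(x,\{m_i^2\})=O(x^{\,N-1})$ as $x\to\infty$. I would prove this by induction on $n$ using the one-step recursion of part~(i): with base case $V_2^D(a,b,c)=\lambda(a,b,c)^{(D-3)/2}/a^{D/2-1}$ one reads off $V_2^D(s)=O\bigl(s^{D/2-2}\bigr)$ directly, and in the inductive step the rescaling $y_{n-1}=\frac12\sqrt{s}\,u$ turns
\[
\omega_{\frac D2}\int_{m_n}^{\mathrm{up}}V_{[n-1,\nu']}^D\!\bigl(s-2\sqrt{s}\,y_{n-1}+m_n^2\bigr)\,\sqrt{y_{n-1}^2-m_n^2}^{\,D-3}\,dy_{n-1}
\]
into $s^{\,d_{n-1}+D/2-1}$ times a convergent $u$-integral over $[0,1]$, so the degree in $s$ increases by exactly $\frac D2-1$ at each step and the induction closes at $d_n=N-1$ (the analogous bookkeeping for general $\nu$ should reproduce $|[n,\nu]|$ from Eq.(\ref{degdiv})). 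Granting $V_{[n,\nu]}^D(x)=O(x^{\,N-1})$, the integrand of the subtracted dispersion integral decays like $x^{-2}$ at infinity, and since $s_0\lneq s_{\mathbf{min}}\le s_{\mathbf{normal}}$ the factor $x-s_0$ stays bounded away from zero on the contour, so the integral converges absolutely for every $s\notin[s_{\mathbf{normal}},\infty)$.

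Steps~(2)--(3) are then routine. Holomorphy of $F$ off the cut is clear since the only non-analytic factor of the integrand is $1/(x-s)$; by Sokhotski--Plemelj $\frac1{x-s-i0}-\frac1{x-s+i0}=2\pi i\,\delta(x-s)$, and the prefactor $(s-s_0)^{N}$ is real on $[s_{\mathbf{normal}},\infty)$, so $\mathrm{Disc}\,F(s,s_0)=\frac{(s-s_0)^{N}}{\pi}\,2\pi i\,\frac{V_{[n,\nu]}^D(s)}{(s-s_0)^{N}}=2i\,V_{[n,\nu]}^D(s)$; by Hermitian analyticity this says $\Im F=\Im\bigl(\Phi_R^D(b_n)_\nu\bigr)$ on the cut. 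For~(3), write $F(s,s_0)=(s-s_0)^{N}G(s,s_0)$ with $G(s,s_0)=\frac1\pi\int\frac{V_{[n,\nu]}^D(x)\,dx}{(x-s)(x-s_0)^{N}}$ holomorphic near $s=s_0$ (the contour stays away from $s_0$); Leibniz then gives $\partial_s^{\,j}F(s,s_0)\big|_{s=s_0}=0$ for $0\le j\le N-1$, which are precisely the asserted conditions $\Phi_R^D(b_n)_\nu^{(j)}(s_0,s_0)=0$.

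Finally, step~(4). The kinematic scheme of the Introduction (quotient Hopf algebra $H_R=H/I_{\mathrm{tad}}$, Taylor subtraction at $s_0$ implementing $T^{(N-1)}$) makes $\Phi_R^D(b_n)_\nu(s,s_0)$ holomorphic on $\C\setminus[s_{\mathbf{normal}},\infty)$ — the subtracted integrand is free of short-distance singularities and, as noted after Eq.(\ref{szero}), subtracting at $s_0\lneq s_{\mathbf{min}}$ adds nothing to the imaginary part — with discontinuity $2i\,V_{[n,\nu]}^D$ across the cut (Cutkosky, part~(i)) and growth $o(s^{N})$ at infinity (degree of divergence $N-1$). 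Hence $\Phi_R^D(b_n)_\nu-F$ continues to an entire function of $s$ of growth $o(s^{N})$, so by Liouville it is a polynomial of degree $\le N-1$; the $N$ conditions $\partial_s^{\,j}\bigl(\Phi_R^D(b_n)_\nu-F\bigr)\big|_{s=s_0}=0$, $0\le j\le N-1$, shared by both sides then force this polynomial to vanish identically, giving $F=\Phi_R^D(b_n)_\nu$.
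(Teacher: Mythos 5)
Your proposal is sound in outline, but it takes a genuinely different route from the paper: the paper's entire proof of part~v) is the remark that ``this is the definition of dispersion in kinematic renormalization conditions'', i.e.\ the subtracted dispersion integral \emph{is} what the kinematic scheme means by $\Phi_R^D(b_n)_\nu$ (cf.\ Eq.(\ref{dispTaylor})), so the only content is that the $|[n,\nu]|$ subtractions render the integral convergent and visibly enforce $\Phi_R^D(b_n)_\nu^{(j)}(s_0,s_0)=0$ for $j\leq|[n,\nu]|-1$ --- which is your steps~(1) and~(3), both done correctly. Your steps~(2) and~(4) prove strictly more, namely that the dispersive definition agrees with an independently defined, Taylor-subtracted momentum-space integral; that is a worthwhile consistency statement, but it imports inputs the paper never establishes (holomorphy of the subtracted momentum integral on $\C\setminus[s_{\mathbf{normal}},\infty)$, Hermitian analyticity, and the $o(s^{|[n,\nu]|})$ bound feeding the Liouville argument), which for these one-scale two-point functions are standard but not free. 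One caveat on your step~(1): the rescaling induction correctly gives $V_n^D(x)=O\bigl(x^{(\frac D2-1)(n-1)-1}\bigr)$ in the scalar case, but your promise that the same bookkeeping ``reproduces'' Eq.(\ref{degdiv}) for general $\nu$ should be weakened --- raising a propagator power (a $\partial_{m_j^2}$) \emph{lowers} the growth of $V_{[n,\nu]}^D$ while the corresponding entries enter Eq.(\ref{degdiv}) with a plus sign, so what you can honestly claim is that $|[n,\nu]|$ subtractions suffice (possibly over-subtract) for convergence, which is all the theorem needs since the scheme is defined by the stated vanishing conditions at $s_0$.
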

\noindent\textbf{Rem. v)}\\
This gives $\Phi_R^D(b_n)_\nu$ from $V_{[n,\nu]}^D$ in kinematic renormalization.
See App.(\ref{apptens}) for notation.
For a result in dimensional integration with MS use an unsubtracted dispersion
\[
\Phi_{MS}^D(b_n)_\nu(s)=\frac{1}{\pi}\int_{\left(\sum_{j=1}^n m_j\right)^2}^\infty
\frac{V_{[n,\nu]}^D(x,\{m_i^2\}}{(x-s)}dx,
\]
and then renormalize by Eq.(\ref{HopfBanana}) as tadpoles do not vanish in MS.\hfill$|$
\begin{thm*}(cont'd)\\
vi) Tensor integrals (see App.\ref{apptens}).
We have
\bea
k_{j+1}\cdot k_j & = & m_{j+1}^2-s_n^{n-j-1}-s_n^{n-j}=\label{tensors}\\
 & & =-\sqrt{s_n^{n-j-1}}\left(\sqrt{s_n^{n-j-1}}-y_{n-j}\right),\,j\geq 2,\nonumber\\
k_2\cdot k_{1} & = & \frac{k_2^2-m_2^2+m_1^2}{2},\label{tensorstwo}\\
k_j^2 & = & s_n^{n-j},\,\,{\text{in\,particular}} \,\,k_2^2=s_n^{n-2},\label{tensorsfour}\\
k_{j}\cdot k_l & = & \frac{k_l\cdot k_{l+1}k_{l+1}\cdot k_{l+2}\cdots k_{j-1}\cdot k_j}{{k_{l+1}^2}\cdots {k_{j-1}^2}}=\label{tensorsq}\\
 & & =\frac{\sqrt{s_n^{n-j-1}}}{\sqrt{s_n^{n-l-1}}}\prod_{i=l+1}^j \left(\sqrt{s_n^{n-i}}-y_{i+1}\right),\,j-l\gneq 1 ,\, j>l, l\gneq 1,\nonumber\\ 
k_{j}\cdot k_1 & = & \frac{k_l\cdot k_{l+1}k_{l+1}\cdot k_{l+2}\cdots k_{j-1}\cdot k_j}{{k_{l+1}^2}\cdots {k_{j-1}^2}}=\label{tensorsthree}\\
 & & =\frac{\sqrt{s_n^{n-j-1}}}{\sqrt{s_n^{n-2}}}\frac{s_n^{n-2}-m_2^2+m_1^2}{\sqrt{s_n^{n-2}}}\prod_{i=2}^j \left(\sqrt{s_n^{n-i}}-y_{i+1}\right),\,j-1\gneq 1.
\nonumber
\eea
Furthermore $V^D_{[n,\nu]}$ is obtained by
using Eqs.(\ref{tensors}-\ref{tensorsthree}) to insert tensor powers as indicated by $\nu$ in the integrand of $V_2^D(s_n^{n-2},m_1^2,m_2^2)$ and apply derivatives with respect to mass-squares accordingly. 
\end{thm*}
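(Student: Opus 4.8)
The plan is to prove the whole package by a single induction on $n$, the inductive step being nothing but the ``cut one edge and integrate'' manoeuvre already carried out three times above: for $b_2$, and then explicitly in the passages producing Eq.(\ref{vthree}) and Eq.(\ref{vfour}). So I would organise everything around part (i) and read off parts (ii)--(vi) from it.

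\medskip For part (i): assume the assertion for $b_{n-1}$. Use the flag $b_{n-1}\subset b_n$ with $e_1,\dots,e_{n-1}$ the inner subgraph and $e_n$ the cut co-edge, and apply Cutkosky's theorem and Fubini just as in Eqs.(\ref{bthreeo})--(\ref{bthreeth}) and in the derivation of Eq.(\ref{vfour}). This writes $\Im(\Phi_R^D(b_n))(s)$ as $\int d^Dk_{n-1}\,\Theta\big(k_{n-1}^2-(\sum_{j=1}^{n-1}m_j)^2\big)\,\delta_+\big((k_n-k_{n-1})^2-m_n^2\big)\,V_{n-1}^D(k_{n-1}^2,m_1^2,\dots,m_{n-1}^2)$, which is legitimate because by induction $V_{n-1}^D$ depends only on the invariant $k_{n-1}^2$. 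Then go to the rest frame $k_n=(k_{n;0},\vec 0)^T$, write $d^Dk_{n-1}=dk_{n-1;0}\,d^{D-1}\vec k_{n-1}$, carry out the $S^{D-2}$ angular integral (producing $\omega_{D/2}$) and the radial integral in $t_{n-1}:=|\vec k_{n-1}|^2$; the positivity built into $\delta_+$ selects the root $k_{n-1;0}=k_{n;0}-\sqrt{t_{n-1}+m_n^2}$, and the substitution $y_{n-1}:=\sqrt{t_{n-1}+m_n^2}$ turns the surviving one-dimensional integral into precisely the displayed recursion --- lower limit $m_n$ and upper limit $\mathrm{up}_n^0$ coming from the remaining $\Theta$-constraint solved for $t_{n-1}$, and $V_{n-1}^D$ evaluated at $s_n^1=s-2\sqrt{s}\,y_{n-1}+m_n^2$. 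The support statement $\Theta\big(s-(\sum_j m_j)^2\big)$ drops out since nonemptiness of the $t_{n-1}$-range forces $\sqrt s\ge m_n+\sum_{j=1}^{n-1}m_j$, the $\sum_{j=1}^{n-1}m_j$ being the inductive threshold of $b_{n-1}$.

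\medskip Part (ii) follows by unfolding the recursion of (i) exactly $n-2$ times: each unfolding introduces one new variable $y_{n-j}$, replaces the argument of the inner $V$ by $s_n^j=s_n^{j-1}-2\sqrt{s_n^{j-1}}y_{n-j}+m_{n-j+1}^2$ (Eq.(\ref{defsn})), and supplies the limits $m_{n-j}$ and $\mathrm{up}_n^j$ (Eq.(\ref{up})) from the corresponding $\Theta$-constraint, terminating at $V_2^D$; the nested domains are compact so the integral is well defined for $D>2$, and for $D=2$ one passes to the $\eta\to 0$ limit of Rem.\ iii). For part (iii), substitute $y_{n-j}=\mathrm{up}_n^{j-1}$ into Eq.(\ref{defsn}), which gives $s_n^j=(\sum_{i=1}^{n-j}m_i)^2$, and feed this into Eq.(\ref{up}); a two-line computation (write $N:=\sum_{i=1}^{n-j-1}m_i$) collapses $\mathrm{up}_n^j$ to $m_{n-j}$ --- these are Eqs.(\ref{idstwo})--(\ref{idsfour}); the same substitution at the innermost level sends $s_n^{n-2}$ to $(m_1+m_2)^2$, where $\lambda$ vanishes, giving Eq.(\ref{idfive}), and Eq.(\ref{idsone}) is the $j=0$ instance at $s=(\sum_j m_j)^2$. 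Eq.(\ref{stokes}) is then immediate, since in every non-innermost $y_j$-integration the integrand, a rational multiple of $V_2^D$, vanishes at both endpoints by Eqs.(\ref{idstwo})--(\ref{idsfour}). Part (iv) I would get by differentiating Eq.(\ref{defsn}) and chaining: e.g.\ $\partial_s s_n^i=\partial_s s_n^{i-1}\cdot(\sqrt{s_n^{i-1}}-y_{n-i})/\sqrt{s_n^{i-1}}$ with $\partial_s s_n^0=1$ iterates to Eq.(\ref{partials}), and likewise for Eqs.(\ref{partialykm}),(\ref{partialm}),(\ref{partialyk}) (the case of coinciding masses, where $\mathrm{up}$-limits can merge, is excluded as stated). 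Part (v) is the general subtracted dispersion relation Eq.(\ref{dispTaylor}) specialised with the degree of divergence $|[n,\nu]|-1$. Part (vi) follows by bookkeeping the momenta through the successive rest-frame reductions as in the $b_2$ and $b_3$ endnotes: $k_j^2=s_n^{n-j}$ is immediate from Eq.(\ref{defsn}) and $k_j^2=k_{j;0}^2-|\vec k_j|^2$, the adjacent products $k_{j+1}\cdot k_j$ come from $(k_{j+1}-k_j)^2=m_{j+1}^2$, and the non-adjacent ones reduce through the displayed chain identities; substituting these into $V_2^D$ generates $V^D_{[n,\nu]}$.

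\medskip The hard part will be justifying the ``positivity selects one root'' step of (i) at \emph{every} level of the iteration: one must verify that the $\Theta$ hidden inside each $\delta_+$ (positivity of the energy crossing the cut) propagates coherently down the entire flag $b_2\subset\cdots\subset b_n$, so that the spurious root $k_{n-j;0}=k_{n-j+1;0}+\sqrt{t_{n-j}+m_{n-j+1}^2}$ is consistently discarded --- physically, the statement that all the nested cuts are taken on the same (principal) sheet. A lesser nuisance is the $D=2$ endpoint behaviour $V_2^D\sim 1/\sqrt{\lambda}$, where convergence of the innermost integral and the validity of Eq.(\ref{stokes}) need the $\eta\to 0$ regularisation of Rem.\ iii) rather than a naive substitution.
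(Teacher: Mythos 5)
Your handling of part (vi) matches the paper where the identities are genuinely pointwise: $k_j^2=s_n^{n-j}$ follows from the mass-shell bookkeeping through the nested rest frames, the adjacent products follow from $(k_{j+1}-k_j)^2=m_{j+1}^2$ together with Eq.(\ref{tensorsfour}), and Eq.(\ref{tensorstwo}) holds because both edges of the innermost $b_2$ carry a $\delta_+$. But for the irreducible non-adjacent products Eqs.(\ref{tensorsq},\ref{tensorsthree}) your argument is circular: you say they ``reduce through the displayed chain identities'', yet the chain identity $k_j\cdot k_l=\frac{k_l\cdot k_{l+1}\,k_{l+1}\cdot k_{l+2}\cdots k_{j-1}\cdot k_j}{k_{l+1}^2\cdots k_{j-1}^2}$ is precisely what has to be established, and it is \emph{not} a pointwise kinematic identity on the cut locus: the on-shell $\delta_+$'s fix the $k_j^2$ and the adjacent products, but a product such as $k_1\cdot k_3$ still depends on the relative angle between $\vec{k}_1$ and $\vec{k}_3$, which is integrated over, not constrained.

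The missing idea --- the actual content of the paper's argument --- is the tensor reduction carried out in the order of the iterated integration: by Lorentz covariance the innermost integral obeys $\int \frac{{k_1}_\mu}{\cdots}\,d^Dk_1=A(k_2^2)\,{k_2}_\mu$, and contracting with $k_2$ fixes $A=k_1\cdot k_2/k_2^2$, where $k_1\cdot k_2$ and $k_2^2$ are already determined by the on-shell constraints; hence \emph{under the integral} one may replace $k_1\cdot k_3\to \frac{(k_1\cdot k_2)(k_2\cdot k_3)}{k_2^2}$, and iterating this outwards, level by level, produces Eqs.(\ref{tensorsq},\ref{tensorsthree}) and justifies inserting those expressions into the integrand of $V_2^D(s_n^{n-2},m_1^2,m_2^2)$ to define $V^D_{[n,\nu]}$. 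Equivalently, in the rest frame used at each stage the $S^{D-2}$ angular average kills the transverse part of the inner momentum, leaving only its projection onto the next momentum in the flag. Without this step your concluding claim --- that substituting the chain expressions into $V_2^D$ generates $V^D_{[n,\nu]}$ --- is unsupported: the identities hold only after the angular integrations, performed innermost first, not as identities of the integrand.
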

\noindent\textbf{Rem. vi)}\\
We first give in Fig.(\ref{bananastensor}) with $k_j^2=s_n^{n-j}$
also the irreducible squares of internal momenta (there is no propagator $k_j^2-m_j^2$ 
in the denominator of $b_n$).
\begin{figure}[H]
\includegraphics[width=12cm]{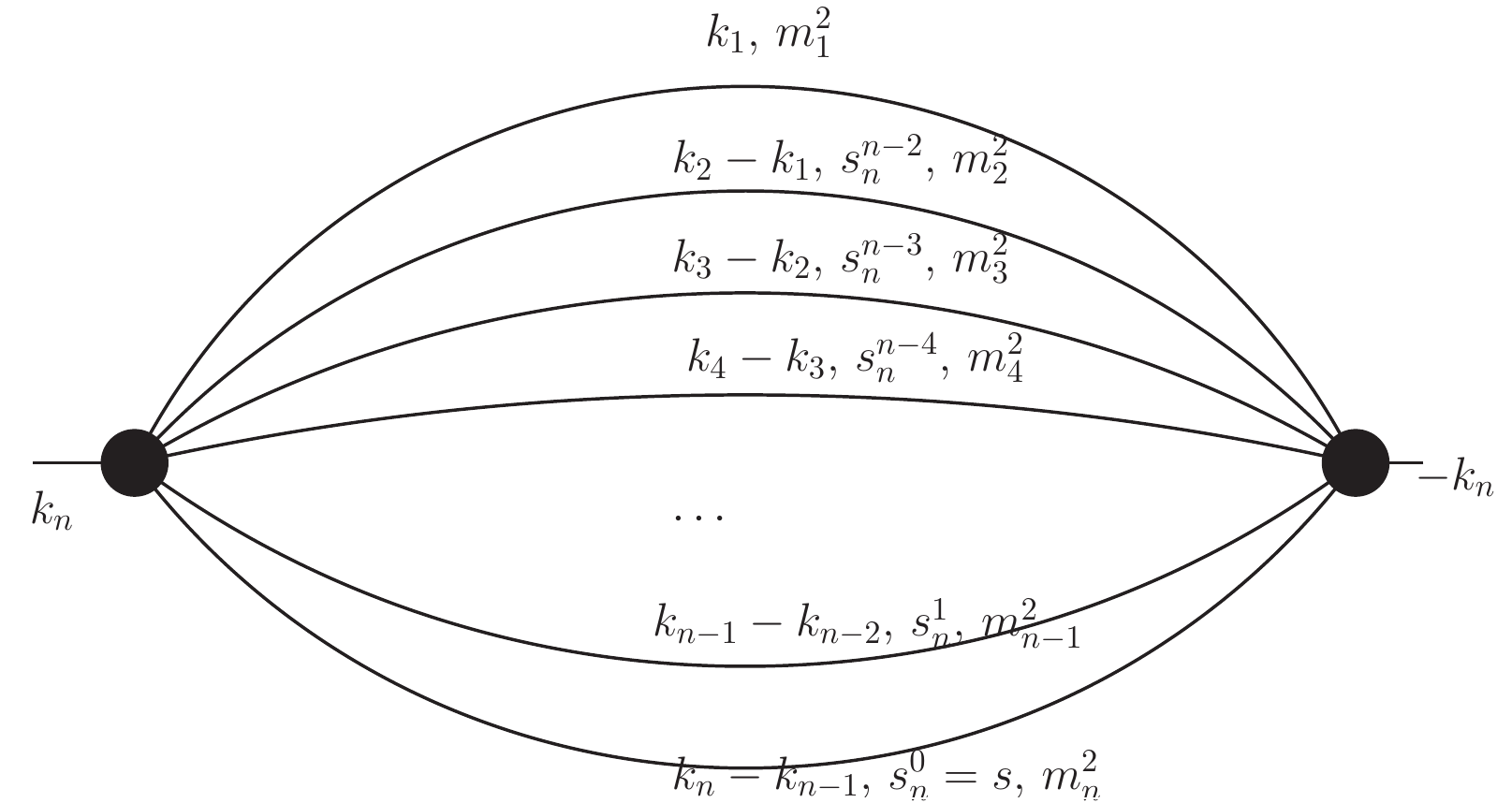}.
\caption{We indicate momenta 
and masses at internal edges from top to bottom. We now also indicate momentum $s_n^j$ for edges $e_2,\ldots, e_n$. The mass-shell conditions encountered in the computation of $V_n^D$ enforce $k_j^2=s_n^{n-j}$
for $2\leq j\leq n$. Eq.(\ref{tensors}) simply expresses the fact that 
$-2k_j\cdot k_{j+1}=(k_{j+1}-k_j)^2-k_{j+1}^2-k_j^2$ with $(k_{j+1}-k_j)^2=m_{j+1}^2$.}
\label{bananastensor}
\end{figure}

Eq.(\ref{tensorstwo}) is needed as Eq.(\ref{tensors}) can not cover the case $j=1$, due to the fact that for the $b_2$ integration $d^Dk_1$ both edges are constrained by a $\delta_+$-function, while each other loop integral gains only one more constraint, giving us a $y_j$ variable.

Eqs.(\ref{tensors}-\ref{tensorsthree}) allow to treat tensor integrals involving scalar products of irreducible numerators. Irreducible as there is no propagator $1/(k_j^2-m_{j+1}^2)$ in our momentum routing for $b_n$, see Fig.(\ref{bananastensor}).

Eqs.(\ref{tensorsq},\ref{tensorsthree}) for irreducible scalar products follow by integrating tensors in the numerator in the order of iterated integration.
For example
\[
\int\int  k_1\cdot k_3\frac{1}{\cdots}d^Dk_1d^Dk_2
=\int A(k_2^2)k_2\cdot k_3\frac{1}{\cdots} d^D k_2=C(k_3^2),
\] 
and
\[
\int\int  \frac{k_1\cdot k_2 k_2\cdot k_3}{k_2^2}\frac{1}{\cdots}d^Dk_1d^Dk_2
=\int A(k_2^2)\frac{k_2^2 k_2\cdot k_3}{k_2^2}\frac{1}{\cdots} d^D k_2=C(k_3^2),
\] 
using
\[
\int  \frac{{k_1}_\mu}{\cdots}d^Dk_1= A(k_2^2){k_2}_\mu,
\]
and dots $\cdots$ correspond to the obvious denominator terms.\hfill$|$
\begin{proof}
i) and ii) follow from the derivation of Eq.(\ref{vfour}) upon setting $4\to n$, $3\to n-1$ in an obvious manner.\\
iii) follows from inspection of Eq.(\ref{defup}):
For example
\[
\mathrm{up}_n^0=\frac{s+m_n^2-(m_1+\cdots+m_{n-1})^2}{2\sqrt{s}},
\]
\[
\mathrm{up}_n^1(y_{n-1})=\frac{s_n^1+m_{n-1}^2-(m_1+\cdots+m_{n-2})^2}{2\sqrt{s_n^1}},
\]
with
\[
s_n^1(y_{n-1})=s-2\sqrt{s}y_{n-1}+m_n^2.
\]
Then,
\[
\mathrm{up}_n^1(\mathrm{up}_n^0)=\frac{(m_1+\dots+m_{n-1})^2+m_{n-1}^2-(m_1+\cdots+m_{n-2})^2}{2(m_1+\cdots+m_{n-1})}=m_{n-1},
\]
and so on.\\
iv) straight from the definition Eq.(\ref{defsn}) of $s_n^j$. For example 
\[
\partial_{m_n^2} s_n^3=\frac{\left(\sqrt{s_n^1}-y_{n-2}\right)\left(\sqrt{s_n^2}-y_{n-3}\right)}{\sqrt{s_n^1}\sqrt{s_n^2}}
\]
v) This is the definiton of dispersion in kinematic renormalization conditions.\\
vi) For tensor integrals we collect variables $k_{j;0}$ and $t_j$ in any step of the computation in terms of $y_j=\sqrt{t_j+m_{j+1}^2}$.
\end{proof}
\subsubsection{$s_n^j$: iterating square roots}
Choose an order $o$ of the edges which fixes 
\[
b_2\subset b_3\subset\cdots\subset b_{n-1}\subset b_n.
\] 
Here we label
\[
E_{b_2}=: \{e_1,e_2\},\,E_{b_3}=\{e_1,e_2,e_3\},\ldots,E_{b_n}=\{E_{b_{n-1}}\cup e_n\}.
\]
Then
\beas
s_n^1(y_{n-1}) & = & s-2\sqrt{s}y_{n-1}+m_n^2,\\
s_n^2(y_{n-1},y_{n-2}) & = & s-2\sqrt{s}y_{n-1}+m_n^2-2\sqrt{s-2\sqrt{s}y_{n-1}+m_n^2}y_{n-2}+m_{n-1}^2,\\
s_n^3(y_{n-1},y_{n-2},y_{n-3}) & = & 
s-2\sqrt{s}y_{n-1}+m_n^2
  -  2\sqrt{s-2\sqrt{s}y_{n-1}+m_n^2}y_{n-2}+m_{n-1}^2\\
   & - & 2
\sqrt{s-2\sqrt{s}y_{n-1}+m_n^2
  -  2\sqrt{s-2\sqrt{s}y_{n-1}+m_n^2}y_{n-2}+m_{n-1}^2}\times\\ & \times & y_{n-3}+m_{n-2}^2,\\
& \cdots & ,\\
s_n^{n-2}(y_{n-1},\ldots,y_3,y_2) & = & s_n^{n-3}(y_{n-1},\ldots,y_3)-2\sqrt{s_n^{n-3}(y_{n-1},\ldots,y_3)}y_2+m_3^2.
\eeas
\begin{rem}
The iteration of square roots in particular for $s_n^{n-2}$ which is the crucial argument in $V_n^D(s_n^{n-2},m_1^2,m_2^2)$ is hopefully instructive for a future analysis of periods which emerge in the evaluation of that function \cite{allBanana}. This iteration of square roots points to the presence of a solvable Galois group  with successive quotients $\Z/2\Z$
reflecting iterated double covers in momentum space.
Thanks to Spencer Bloch for pointing this out.
\end{rem}\hfill $|$

\section{Differential Equations and related considerations}
This section collects some comments with respect to the results above with regards to:
\begin{itemize}
\item Dispersion. We want to discuss in some detail why raising powers of propagators is well-defined in dispersion integrals even if a higher power of 
a propagator consititutes a product of distributions with coinciding support.
\item Integration by parts (ibp) \cite{Tkachov}. We do not aim at constructing algorithms which can compete with the established algorithms in the standard approach
\cite{algorithms}. But at least we want to point out how ibp works in our iterated integral set-up.

\item Differential equations. Here we focus on systems of linear first order differential equations for master integrals \cite{RemODE}. We also add a few comments on higher-order differential equation for assorted master integrals which emerge as Picard--Fuchs equations \cite{Vanhove,Weinzierl,Zay,Broedel}.
\item Master integrals.
Master integrals are assumed independent by definition with regards to relations between them with coefficients which are rational functions of mass squares and kinematic invariants \cite{KalmKniehl,Panzer}. We will remind ourselves  that such a relation can still exists for their imaginary parts \cite{RemSchouten}. We trace this phenomenon back to the degree of subtraction needed in dispersion integrals to construct their real part from their imaginary parts.
Furthermore we will offer a geometric interpretation of the counting of master integrals for graphs $b_n$.      
\end{itemize}
\subsection{Dispersion and derivatives}
As we want to obtain full results from imaginary parts by dispersion we have to discuss the existence of dispersion integrals in some detail. There are subtleties when raising powers of propagators. It is sufficient to discuss the example of $b_2$.

With $\Phi_R^D(b_2)$ given consider a derivative with respect to a mass square such that a propagator is raised to second power,
\[
\Phi_R^D(b_2)_{2,1}:=\partial_{m_1^2}\Phi_R^D(b_2)(s,m_1^2,m_2^2).
\]
Similar for the imaginary part,
\[
\Im\left(\Phi_R^D(b_2)_{2,1}\right):=\partial_{m_1^2}\Im\left(\Phi_R^D(b_2)\right)(s,m_1^2,m_2^2).
\]
We have (for $D=4$ say)
\[
\Im\left(\Phi_R^4(b_2)_{2,1}\right)=\frac{s-s_0}{\pi}\int_0^{\infty}\frac{
\partial_{m_1^2}\left(\Theta(x-(m_1+m_2)^2)
V_2^4(x,m_1^2,m_2^2)\right)}{(x-s)(x-s_0)}dx.
\]
There is an issue here. It concerns the fact that to a propagator, itself a distribution,
\[
Q(r,m)=\frac{1}{r^2-m^2}\,=\mathrm{P.V.}\frac{1}{r^2-m^2}+
i\pi\delta(r^2-m^2),
\]
(using Cauchy'y principal value and the $\delta$-distribution)
we can associate a well-defined distribution by 'cutting' the propagator:
\[
\frac{1}{Q(r,m)}\to \delta_+(Q(r,m))=\Theta(r_0)\delta(r^2-m^2).
\]
The expression
\[
2\frac{\delta_+(Q(r,m))}{Q},
\]
obtained from cutting any one of the two factors in the squared propagator, 
\[
-\partial_{m^2}\frac{1}{Q(r,m)}=\frac{1}{Q^2(r,m)}
\to 
2\frac{\delta_+(Q(r,m))}{Q},
\]
is ill-defined as the numerator forces the denominator to vanish.
Hence higher powers of propagators are subtle when it comes to cuts on any one of their factors.
\begin{figure}[H]
\includegraphics[width=14cm]{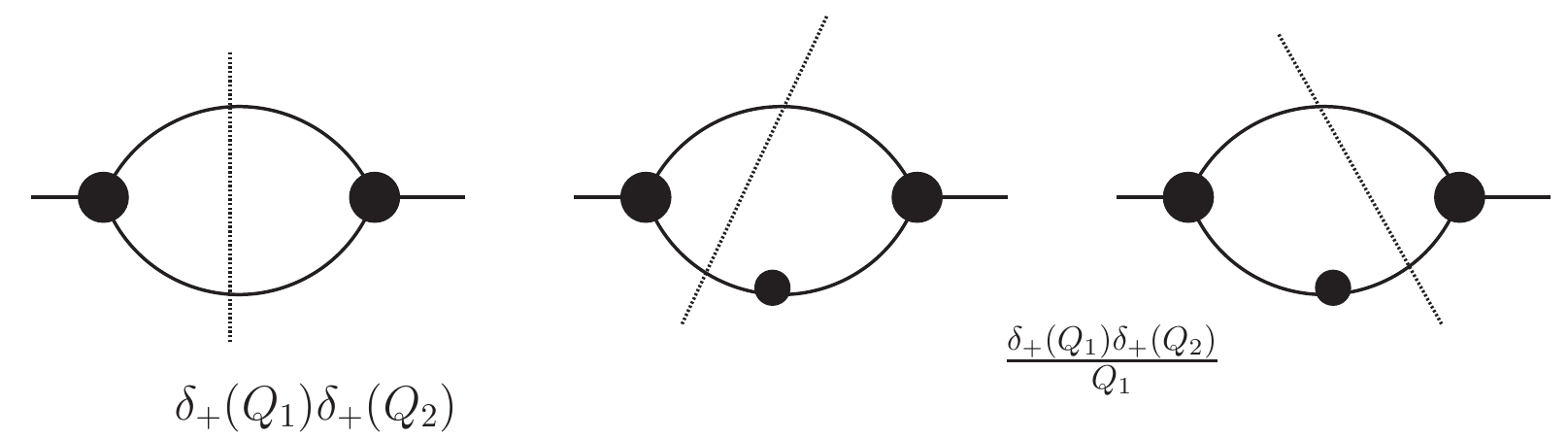}.
\caption{The doubling of propagators indicated by a dot on the edge creates a problem.}
\label{singDisp}
\end{figure}
Remarkably dispersion still works despite
the fact that derivatives like $\partial_{m_1^2}$ do just that: generating such higher powers.

We have 
\beas
\partial_{m_1^2}\Im(\Phi_R^D(b_2)) & = & \delta(s-(m_1+m_2)^2)V_2^D(s,m_1^2,m_2^2)
\left(1+\frac{m_2}{m_1}\right)\\
 & + & \Theta(s-(m_1+m_2)^2)\partial_{m_1^2}V_2^D(s,m_1^2,m_2^2),
\eeas
where
\[
\left(1+\frac{m_2}{m_1}\right)=\partial_{m_1^2} (m_1+m_2)^2.
\]
Using
\[
V_2^D(s,m_1^2,m_2^2)=\frac{\sqrt{\lambda(s,m_1^2,m_2^2)}^{D-3}}{s^{\frac{D}{2}-1}},
\]
the above is singular at $s=(m_1+m_2)^2$.
Indeed both terms on the rhs are ill-defined but their sum can be integrated in the dispersion integral
\beas
\partial_{m_1^2}\phi_R^D(b_2)  & = & \frac{(s-s_0)}{\pi}\int_0^\infty\left( \delta(x-(m_1+m_2)^2)V_2^D(x,m_1^2,m_2^2)
\left(1+\frac{m_2}{m_1}\right)\right.\\
 & + & \left. \Theta(x-(m_1+m_2)^2)\partial_{m_1^2}V_2^D(x,m_1^2,m_2^2)\right)
 \frac{1}{(x-s)(x-s_0)}dx,
\eeas
so that the singularity drops out for all $D$ by Taylor expansion of 
\[
\partial_{m_1^2}\lambda(x,m_1^2,m_2^2)=\partial_{m_1^2}\left((x-(m_1+m_2)^2)(x-(m_1-m_2)^2)\right),
\]
near the point $x=(m_1+m_2)^2$.

We are not saying that it is meaningful to replace
\[
\frac{1}{Q^2}\to \frac{\delta_+(Q)}{Q},
\]
to come to dispersion relations.

Instead  we can exchange either:\\
i) taking derivatives wrt masses on an imaginary part $\Im\left(\Phi_R^D(b_n)_\nu\right)$ first and then doing the dispersion integral,\\
or,\\
ii) doing the dispersion integral first and then taking derivatives.
\subsection{Integration-by-parts}
Integration-by-parts ($\mathrm{ibp}$) is a standard method employed in high energy physics computations.

It starts from an incarnation of Stoke's theorem in dimensional regularization
\[
0=\int d^Dk \frac{\partial}{\partial k_\mu} v_\mu F(\{k\cdot r\}),
\]
where $F$ is a scalar function of loop momentum $k$ and other momenta, and $v_\mu$ 
is a linear combination of such momenta employing a suitable definition of $D$-dimensional integration for $D\in\mathbb{C}$.

We want to discuss ibp and Stokes theorem from the viewpoint of the $y_i$-integrations in our iterated integral. 

We let $\mathbf{Int}_{b_n}$ be the integrand in Eq.(\ref{itInt}). It is made from  three factors: 
\[
\mathbf{Int}_{b_n}=\mathbf{Y}_n^{D-3}\times \mathbf{\Lambda}_n^{D-3}\times\mathbf{\sigma}_n^{1-\frac{D}{2}},
\]
with $\mathbf{Y}_n,\mathbf{\Lambda}_n,\mathbf{\sigma}_n$ defined by,
\beas
\mathbf{Y}_n^{D-3} & = & \prod_{j=2}^{n-1}\sqrt{y_j^2-m_{j+1}^2}^{D-3},\\
\mathbf{\Lambda}_n^{D-3} & = & \sqrt{\lambda(s_n^{n-2}(y_2,\ldots,y_{n-1}),m_1^2,m_2^2)}^{D-3},\\
\mathbf{\sigma}_n^{1-\frac{D}{2}} & = & \frac{1}{\left(s_n^{n-2}(y_2,\ldots,y_{n-1})\right)^{\frac{D}{2}-1}}.
\eeas
We have the following identities which allow to trade derivatives with respect to $y_j$ with derivatives with respect to $m_{j+1}^2$ or $s$
, 
\bea
\partial_{y_j}\mathbf{Y}_n & = & y_j\frac{1}{y_j^2-m_{j+1}^2}\mathbf{Y}_n  =   -2y_j\partial_{m_{j+1}^2}\mathbf{Y}_n,\label{ibpY}\\
\partial_{y_j}\mathbf{\Lambda}_n
& = & 
\frac{s_n^{n-2}-m_1^2-m_2^2}{\lambda(s_n^{n-2}(y_2,\ldots,y_{n-1}),m_1^2,m_2^2)}(\partial_{y_j} s_n^{n-2})\mathbf{\Lambda}_n\nonumber\\
      & = & \left(\partial_{s}\mathbf{\Lambda}_n\right)\left(\frac{-2\sqrt{s}}{\sqrt{s-y_{n-1}}}
      \prod_{k=1}^{n-j-1}\frac{s_n^k}{s_n^{k}-y_{n-k-1}}\right)\label{ibpL}\\
 & = & -2\sqrt{s_n^{n-j-1}}\partial_{m_{j+1}^2}\mathbf{\Lambda}_n,\nonumber\\
 \partial_{y_j} \mathbf{\sigma}_n & = &  
\partial_{y_{j}}s_{n}^{n-2}=-2\sqrt{s_n^{n-j-1}}\prod_{l=2}^{j-1}\frac{s_n^{n-l-1}-y_{l}}{s_n^{n-l-1}}\nonumber\\
 & = & -2\sqrt{s_n^{n-j-1}}
 \partial_{m_{j+1}^2}\mathbf{\sigma}_n\label{ibpS}\\
 & = &  \partial_{s}\mathbf{\sigma}_n
 \left(\frac{-2\sqrt{s}}{\sqrt{s-y_{n-1}}}
      \prod_{k=1}^{n-j-1}\frac{s_n^k}{s_n^{k}-y_{n-k-1}}\right).\nonumber  
\eea
We also note that
\be
\partial_{m_{j+1}^2}\mathbf{\Lambda}_n = (\partial_{m_{j+1}^2}s_n^{n-2})\frac{1}{m_1^2-m_2^2}\left(m_1^2\partial_{m_1^2}-m_2^2\partial_{m_2^2}\right)\mathbf{\Lambda}_n,
\ee
and
\be
\partial_{s}\mathbf{\Lambda}_n = (\partial_{s}s_n^{n-2})\frac{1}{m_1^2-m_2^2}\left(m_1^2\partial_{m_1^2}-m_2^2\partial_{m_2^2}\right)\mathbf{\Lambda}_n.
\ee

Furthermore, insertion of tensor structure given by $\nu$ following
Sec.(\ref{tensorint}) and Eqs.(\ref{tensors}-\ref{tensorsthree}) define an integrand 
$
\mathbf{Int}_{{b_n},\nu}
$. 

Now using Eq.(\ref{stokes}) we have for any such integrand,
\[
\int_{m_{j+1}}^{\mathrm{up}_n^{j+1}}\partial_{y_j}\left( \mathbf{Int}_{{b_n},\nu}\right)dy_{j}=0,\,\forall j,\, 2\leq j\leq (n-1).
\]
\begin{prop}
The above evaluates to an identity of the form,
\[
\sum_j \mathbf{Int}_{{b_n},\nu_j}=0,
\]
between tensor integrals $\mathbf{Int}_{{b_n},\nu_j}$ for some tensor structures $\nu_j$.
\end{prop}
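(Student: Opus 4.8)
The plan is to evaluate the left-hand side of Eq.~(\ref{stokes}) directly. Write $\mathbf{Int}_{b_n,\nu}$ as a product $P(y_j) \cdot \mathbf{Int}_{b_n}$ where $P$ is the rational function encoding the tensor insertion $\nu$ built from the scalar products in Eqs.~(\ref{tensors}--\ref{tensorsthree}), each of which is a rational function of $y_j$ (and the other $y_i$, masses, $s$). Then $\partial_{y_j}\left(\mathbf{Int}_{b_n,\nu}\right) = (\partial_{y_j}P)\,\mathbf{Int}_{b_n} + P\,(\partial_{y_j}\mathbf{Int}_{b_n})$. The first term is again of the form "rational function times $\mathbf{Int}_{b_n}$", hence is itself a finite sum of tensor integrands $\mathbf{Int}_{b_n,\nu'}$ (after clearing denominators against the $\nu$-data, or simply by recognizing $\partial_{y_j}P$ as a new rational coefficient, which is what a tensor structure $\nu'$ is). For the second term, $\mathbf{Int}_{b_n} = \mathbf{Y}_n^{D-3}\mathbf{\Lambda}_n^{D-3}\mathbf{\sigma}_n^{1-\frac{D}{2}}$, so by the Leibniz rule $\partial_{y_j}\mathbf{Int}_{b_n}$ is a sum of three terms, each $\mathbf{Int}_{b_n}$ multiplied by $(D-3)(\partial_{y_j}\mathbf{Y}_n)/\mathbf{Y}_n$, $(D-3)(\partial_{y_j}\mathbf{\Lambda}_n)/\mathbf{\Lambda}_n$, or $(1-\tfrac{D}{2})(\partial_{y_j}\mathbf{\sigma}_n)/\mathbf{\sigma}_n$ respectively; by Eqs.~(\ref{ibpY}), (\ref{ibpL}), (\ref{ibpS}) each such logarithmic derivative is again a rational function of the integration variables and masses. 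Thus $P\,\partial_{y_j}\mathbf{Int}_{b_n}$ is once more a finite sum of the form $\sum (\text{rational coefficient}) \cdot \mathbf{Int}_{b_n}$, i.e. a sum of tensor integrands $\mathbf{Int}_{b_n,\nu_k}$.

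Collecting, $\partial_{y_j}\left(\mathbf{Int}_{b_n,\nu}\right) = \sum_k \mathbf{Int}_{b_n,\nu_k}$ pointwise in the integration variables, where the $\nu_k$ are explicit tensor structures read off from $\partial_{y_j}P$ and from the logarithmic derivatives above. Integrating over $y_j$ from $m_{j+1}$ to $\mathrm{up}_n^{j+1}$ and applying Eq.~(\ref{stokes}) — which is legitimate because, by the boundary identities Eqs.~(\ref{idstwo}--\ref{idfive}) in part~iii), the integrand $\mathbf{Int}_{b_n}$ (and hence, after checking that the rational prefactors stay finite at the endpoints, each $\mathbf{Int}_{b_n,\nu_k}$) vanishes at both limits for $D > 2$, with the $\eta$-regularization remark handling $D=2$ — yields $0 = \sum_k \int_{m_{j+1}}^{\mathrm{up}_n^{j+1}} \mathbf{Int}_{b_n,\nu_k}\,dy_j$. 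Re-inserting this into the remaining (unchanged) iterated integrations in the $y_i$, $i\neq j$, variables turns each $\int \mathbf{Int}_{b_n,\nu_k}\,dy_j$ back into a full tensor banana integral $\mathbf{Int}_{b_n,\nu_k}$ in the sense of Eq.~(\ref{itInt}), giving the asserted relation $\sum_k \mathbf{Int}_{b_n,\nu_k} = 0$.

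I would present the argument generically (arbitrary $j$, arbitrary $\nu$) and illustrate with one explicit small case, e.g. $\partial_{y_2}$ acting on $V_2^D(s_n^{n-2},m_1^2,m_2^2)$ with no tensor insertion, so that $\partial_{y_2}\mathbf{Int}_{b_n}$ is handled purely by Eqs.~(\ref{ibpY}--\ref{ibpS}) and the resulting relation expresses one mass-derivative master integral of $b_n$ in terms of others — this is precisely the integration-by-parts phenomenon advertised in the surrounding text.

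The main obstacle I expect is bookkeeping rather than conceptual: one must be careful that the rational prefactors produced by differentiating $P$ and by the logarithmic derivatives in Eqs.~(\ref{ibpL}--\ref{ibpS}) have denominators (powers of $s_n^k$, of $s_n^k - y_{\ell}$, of $\lambda(s_n^{n-2},m_1^2,m_2^2)$, of $y_j^2 - m_{j+1}^2$) that are consistent with what a legitimate "tensor structure $\nu_k$" is allowed to contain, and — more delicately — that multiplying $\mathbf{Int}_{b_n}$ by these prefactors does not spoil the vanishing at the integration endpoints needed to invoke Eq.~(\ref{stokes}). The zero of $\mathbf{\Lambda}_n$ at $y_j = \mathrm{up}_n^{n-3}$ (Eq.~(\ref{idfive})) and the zeros of the $\mathbf{Y}$-factors at $y_j = m_{j+1}$ give the margin, but one should verify that a denominator factor such as $1/\lambda(s_n^{n-2},m_1^2,m_2^2)$ appearing in $\partial_{y_j}\mathbf{\Lambda}_n$ does not cancel that margin — in the worst case this forces the same $\eta\to 0$ limiting procedure already used for $D=2$, or a mild restriction on $D$, which should be stated.
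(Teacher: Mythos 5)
Your proposal is correct in substance and follows the same basic line as the paper: apply the Leibniz rule to $\partial_{y_j}\left(\mathbf{Int}_{b_n,\nu}\right)$, use the identities Eqs.~(\ref{ibpY}), (\ref{ibpL}), (\ref{ibpS}) to rewrite each logarithmic derivative, and invoke the boundary vanishing behind Eq.~(\ref{stokes}) to conclude a linear relation among tensor integrals. The one place where you diverge from the paper is also the place you flag as your main obstacle, and the paper's device resolves it: rather than keeping $(\partial_{y_j}\mathbf{\Lambda}_n)/\mathbf{\Lambda}_n$ and $(\partial_{y_j}\mathbf{Y}_n)/\mathbf{Y}_n$ as explicit rational multipliers (whose denominators $\lambda(s_n^{n-2},m_1^2,m_2^2)$ and $y_j^2-m_{j+1}^2$ you then have to certify as admissible tensor-structure data and as harmless at the endpoints), the paper uses the \emph{last} lines of Eqs.~(\ref{ibpY}), (\ref{ibpL}), (\ref{ibpS}) to trade every $\partial_{y_j}$ for $\partial_{m_{j+1}^2}$ (and, in intermediate form, $\partial_s$) acting on the integrand. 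A mass-square derivative of the integrand is by construction the integrand of a dotted banana, i.e.\ a legitimate $\mathbf{Int}_{b_n,\nu_j}$ with a raised propagator power, so no questionable denominators ever appear; the residual $\partial_s$ pieces are then eliminated using Euler homogeneity of $\lambda$ (it is homogeneous of degree two in $s_n^{n-2},m_1^2,m_2^2$), which converts them back into mass derivatives and multiplications by allowed invariants, giving the once-more modified structures $\tilde{\nu}_j$. So your argument would close cleanly if, instead of treating the derivative factors as generic rational coefficients, you route them through the mass-derivative form of the ibp identities; your extra care about endpoint vanishing for the modified integrands (and the $\eta\to 0$ caveat at $D=2$, via Eq.~(\ref{idfive}) and part~iii) of the theorem) is a point the paper glosses over and is worth keeping.
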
 
\begin{proof}
Derivatives with respect to $y_j$ can be traded for derivatives with respect to masses and with respect to the scale $s$
using Eqs.(\ref{ibpY},\ref{ibpS},\ref{ibpY}).  Starting with $\nu$ this creates suitable new tensorstructures $\nu_j$. Homogeneity of $\lambda$ allows to replace the $\partial_s$ derivatives by  $\mathbf{Int}_{{b_n},\tilde{\nu}_j}$ with once-more modified tensorstructures $\tilde{\nu}_j$.
\end{proof}

\subsection{Differential equations}\label{diffeq}
Functions $\Phi_R^D(G)(\{k_i\cdot k_j\},\{m_e^2\})$ for a chosen Feynman graph $G$ fulfill differential equations with respect to suitable kinematical variables \cite{RemODE}. Those variables are given by scalar products $k_i\cdot k_j$ of external momenta.
For $G=b_n$ these are differential equations in the sole scalar product $s=k_n\cdot k_n$ of external momenta.

$\Phi_R^D(b_n)(s,\{m_e^2\})$ is a solution to an inhomogeneous differential equation and the imaginary part $\Im\left(\Phi_R^D(b_n)\right)(s,\{m_e^2\})$ solves the corresponding homogeneous one.

More precisely there is a set of master integrals $\{b_n\}_M$ defined as a class of Feynman graphs such that any given graph $b_n$, giving rise to
integrals $\Phi_R^D(b_n)_\nu(s,s_0,\{m_e^2\})$, -so  with all its corresponding tensor integrals and arbitrary integer powers of propagators-, can be expressed as linear combinations of elements of $\{b_n\}_M$.

Let us consider the column vector $S_{b_n}$ formed by the elements of $\{b_n\}_M$.
One searches for a first-order system 
\[
\partial_s S_{b_n}(s)=A S_{b_n}(s)+T,
\]  
with $A=A(s,\{m_e^2\})$ a matrix of rational functions and
$T=T(\{m_e^2\})$ the inhomogenuity provided by the minors of $b_n$.
Those are $(n-1)$-loop tadpoles $t_e$ obtained from shrinking an edge $e$, $t_e=b_n/e$.

One then has
\[
\partial_s \Im(S_{b_n})(s)=A \Im(S_{b_n})(s),
\]
where  $\Im(S_{b_n})$ is formed by the imaginary parts of entries of $S_{b_n}$
and $\Im(\Phi_R^D(t_e))=0$.
 
For $b_3$ for example one has $S_{b_3}=(F_0,F_1,F_2,F_3)^T$,
with $F_0=\Phi_R^D(b_3)$, $F_i=\partial_{m_i^2}\Phi_R^D(b_3)$, $i\in \{1,2,3\}$.

The $4\times 4$ matrix $A$ and the 4-vector $T$ for that example are well-known, see \cite{Zay}.

From such a first-order system for the full set of master integrals one often derives  a higer-order differential equation for a chosen master integral. For $b_3$ or $b_4$ it is a Picard--Fuchs equation \cite{Zay}. 

For banana graphs $b_n$ it is a differential equation of order $(n-1)$:
\be\label{bananaode}
\sum_{j=0}^{n-1}\left(Q_{b_n}^{(j)}\partial_s^j\right)\Phi_R^D(b_n)(s)=T_n(s),
\ee
where $Q_{b_n}^{(j)}$ are rational functions in $s,\{m_e^2\}$ and one can always set 
$Q_{b_n}^{(n-1)}=1$.
It has been studied extensively \cite{Vanhove,Weinzierl,Zay,allBanana,Broedel}.

We want to outline how our iterated integral approach relates to such differential equations, to master integrals and to the integration-by-parts (ibp) identities which underly such structures.

Our first task is to remind ourselves how to connect the homogeneous and inhomogeneous differential equations and we turn to $b_2$ for some basic considerations.
\subsubsection{Differential equation for $b_2$}
We set $D=2$ for the moment. 
Consider the imaginary part of the bubble
\[
\Im(\Phi_R^2(b_2))(s)=\frac{1}{\sqrt{\lambda(s,m_1^2,m_2^2)}}\Theta(s-(m_1+m_2)^2).
\]
We can recover $\Phi_R^2(b_2)$ by dispersion which reads for $D=2$,
\[
\Phi_R^2(b_2)(s)=\frac{1}{\pi}\int_{(m_1+m_2)^2}^\infty\frac{\Im(\Phi_R^2(b_2))(x)}{(x-s)}dx.
\]
We now use this representation to analyse the well-known differential equation \cite{Weinzierl} for $b_2$ given in
\begin{prop}
\be\label{deb2}
\left(\lambda(s,m_1^2,m_2^2)\frac{\partial}{\partial s}+(s-m_1^2-m_2^2)\right)\Phi_R^2(b_2)(s)=\frac{1}{\pi},
\ee
and for the imaginary part
\be\label{deb2im}
\left(\lambda(s,m_1^2,m_2^2)\frac{\partial}{\partial s}+(s-m_1^2-m_2^2)\right)\Im\left({\Phi_R^2(b_2)(s)}\right)=0.
\ee 
\end{prop}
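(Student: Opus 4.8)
The plan is to verify the homogeneous equation \eqref{deb2im} by a direct computation, and then to obtain the inhomogeneous equation \eqref{deb2} from it by applying the differential operator under the dispersion integral and tracking the single boundary contribution at the threshold.

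\textbf{Step 1: the homogeneous equation.} For $s>(m_1+m_2)^2$ we have $\Im(\Phi_R^2(b_2))(s)=\lambda(s,m_1^2,m_2^2)^{-1/2}$, so I would simply compute
\[
\frac{\partial}{\partial s}\lambda^{-1/2}=-\tfrac12\lambda^{-3/2}\,\partial_s\lambda=-\tfrac12\lambda^{-3/2}\bigl(2s-2m_1^2-2m_2^2\bigr),
\]
using $\partial_s\lambda(s,m_1^2,m_2^2)=2(s-m_1^2-m_2^2)$, which follows from the definition $\lambda(a,b,c)=a^2+b^2+c^2-2(ab+bc+ca)$. Multiplying by $\lambda$ gives $\lambda\,\partial_s\lambda^{-1/2}=-(s-m_1^2-m_2^2)\lambda^{-1/2}$, i.e. $\left(\lambda\partial_s+(s-m_1^2-m_2^2)\right)\lambda^{-1/2}=0$, which is \eqref{deb2im} on the support of the $\Theta$-function. (On $s<(m_1+m_2)^2$ both terms vanish identically, and the distributional derivative of $\Theta$ produces a term supported at $s=(m_1+m_2)^2$ multiplied by $\lambda$, which vanishes there since $\lambda((m_1+m_2)^2,m_1^2,m_2^2)=0$; so the identity holds as stated.)

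\textbf{Step 2: the inhomogeneous equation.} Starting from the dispersion representation
\[
\Phi_R^2(b_2)(s)=\frac{1}{\pi}\int_{(m_1+m_2)^2}^\infty\frac{\Im(\Phi_R^2(b_2))(x)}{x-s}\,dx,
\]
I would apply $\mathcal{L}_s:=\lambda(s,m_1^2,m_2^2)\partial_s+(s-m_1^2-m_2^2)$. The idea is to move $\mathcal{L}_s$ inside the integral and then convert the $s$-derivatives of $1/(x-s)$ into $x$-derivatives, so as to bring $\mathcal{L}_x$ (acting on $x$) to bear on $\Im(\Phi_R^2(b_2))(x)$, which vanishes by Step 1, leaving only a boundary term from integration by parts at $x=(m_1+m_2)^2$. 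Concretely, $\partial_s\frac{1}{x-s}=\frac{1}{(x-s)^2}=-\partial_x\frac{1}{x-s}$, so writing $\lambda(s,\cdot)$ as $\lambda(x,\cdot)$ plus a polynomial correction of degree one in $x-s$ and integrating by parts in $x$, the bulk terms reassemble into $\mathcal{L}_x$ applied to $\lambda(x)^{-1/2}$ (which is zero) plus the derivative-of-$\Theta$ boundary contribution; the residual boundary term at the lower endpoint evaluates, after the dust settles, to the constant $1/\pi$.

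\textbf{Main obstacle.} The delicate point is Step 2: the integrand $\lambda(x)^{-1/2}/(x-s)$ has an \emph{integrable} singularity at $x=(m_1+m_2)^2$ (like $(x-(m_1+m_2)^2)^{-1/2}$), so differentiating under the integral sign and integrating by parts must be justified carefully, and the boundary term is of the indeterminate form $0\cdot\infty$ — one needs the precise rate $\lambda(x,m_1^2,m_2^2)\sim (x-(m_1+m_2)^2)\cdot\bigl((m_1+m_2)^2-(m_1-m_2)^2\bigr)=4m_1m_2\,(x-(m_1+m_2)^2)$ near threshold to see that $\lambda(x)\cdot\lambda(x)^{-1/2}=\lambda(x)^{1/2}\to0$ cancels the $1/(x-s)$ pole's finite value there, and that the surviving piece produces exactly $1/\pi$. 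An alternative, perhaps cleaner, route that avoids the boundary analysis is to differentiate the \emph{known} closed form $\Phi_R^2(b_2)(s)=\frac{\sqrt{\lambda(s,m_1^2,m_2^2)}}{2\pi s}\ln(\cdots)-\frac{m_1^2-m_2^2}{2\pi s}\ln\frac{m_1^2}{m_2^2}$ (read off, up to the overall $1/(2\pi)$ normalization and the $D=2$ specialization, from the $W_2^4$ formula in the excerpt) and check \eqref{deb2} by direct differentiation; I would present Step 1 in detail and sketch Step 2 by whichever of these two routes is shorter, noting that the logarithm's $s$-derivative is $\propto 1/\sqrt{\lambda}$ so that $\lambda\,\partial_s$ of the logarithmic term contributes a term $\propto\sqrt{\lambda}$ that combines with the rational prefactor's derivative to leave only the constant.
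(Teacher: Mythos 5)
Your Step 1 is correct and is word-for-word the paper's own argument for the homogeneous equation \eqref{deb2im}: differentiate $\lambda^{-1/2}$, use $\partial_s\lambda=2(s-m_1^2-m_2^2)$, and kill the $\delta$-term coming from the $\Theta$-function because $\lambda((m_1+m_2)^2,m_1^2,m_2^2)=0$. The genuine gap is in Step 2, which is only a strategy sketch and whose guiding claim about where the constant comes from is wrong. You assert that the surviving boundary term sits at the lower endpoint $x=(m_1+m_2)^2$ and that a delicate $0\cdot\infty$ cancellation there produces the inhomogeneity. In fact, once $\mathcal{L}_s$ is applied under the dispersion integral and one integrates by parts, using $\partial_x\lambda(x,m_1^2,m_2^2)=2(x-m_1^2-m_2^2)$ and $\lambda(s,m_1^2,m_2^2)-\lambda(x,m_1^2,m_2^2)=(s-x)\bigl(s+x-2(m_1^2+m_2^2)\bigr)$ to cancel the $1/(x-s)$ denominators, everything collapses to the single bracket $\bigl[\sqrt{\lambda(x,m_1^2,m_2^2)}/\bigl(\pi(x-s)\bigr)\bigr]$ evaluated between threshold and infinity. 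Its value at the threshold end is \emph{zero} — again because $\lambda$ vanishes there, the very fact you used in Step 1 — and the constant comes entirely from the $x\to\infty$ end, where $\sqrt{\lambda(x)}\sim x$ gives $\sqrt{\lambda(x)}/(x-s)\to 1$. This is exactly how the paper's proof concludes. So the ``main obstacle'' you prepare for is not the actual one: the real delicacies are (i) that the intermediate pieces generated by the partial integration (the bracket $\lambda(s)\bigl[1/(\pi\sqrt{\lambda(x)}(x-s))\bigr]$ at the lower endpoint and the accompanying integral with $\lambda(x)^{-3/2}$) are individually divergent at threshold and only cancel in combination, and (ii) the large-$x$ asymptotics, which you never invoke but which is where $1/\pi$ actually originates. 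Carrying out this bookkeeping — including the overall sign of the resulting constant, which is easy to get wrong and is worth checking against an elementary special case such as equal masses at $s=0$, where the dispersion integral can be done in closed form — is the entire content of \eqref{deb2} and is missing from the proposal.

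Your fallback route is also not set up correctly as stated: the closed form you propose to differentiate is $W_2^4$, i.e.\ the $D=4$ bubble, whose discontinuity is $\sqrt{\lambda}/(2s)$, whereas the proposition concerns the $D=2$ bubble, whose imaginary part is $\lambda^{-1/2}$ and whose closed form is of the shape $\lambda^{-1/2}\ln(\cdots)$ rather than $\tfrac{\sqrt{\lambda}}{2s}\ln(\cdots)$ plus a rational term. Differentiating the quoted $D=4$ expression verifies a different differential equation, so this alternative only becomes available after first producing the correct $D=2$ closed form, at which point it is no shorter than the dispersion-plus-integration-by-parts route that the paper takes and that your Step 2 gestures at.
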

Note that Eq.(\ref{deb2im}) is the homogeneous equation associated to Eq.(\ref{deb2}) as it must be \cite{RemODE}.

The following proof aims at deriving Eq.(\ref{deb2}) from the dispersion integral.
\begin{proof}
Let us first prove Eq.(\ref{deb2im}).
\beas
\lambda(s,m_1^2,m_2^2)\frac{\partial}{\partial s}\frac{1}{\sqrt{\lambda(s,m_1^2,m_2^2)}}\Theta(s-(m_1+m_2)^2) & = & 
\frac{-(s+m_1^2+m_2^2)}{\sqrt{\lambda(s,m_1^2,m_2^2)}}\Theta(s-(m_1+m_2)^2)\\
 & + & \sqrt{\lambda(s,m_1^2,m_2^2)}\delta\left(s-(m_1+m_2)^2\right)\\
  & = & \frac{-(s+m_1^2+m_2^2)}{\sqrt{\lambda(s,m_1^2,m_2^2)}}\Theta(s-(m_1+m_2)^2)\\
   & = & -(s+m_1^2+m_2^2)\Im\left(\Phi_R^2(b_2)\right)(s),
\eeas
as desired. We use $\lambda((m_1+m_2)^2,m_1^2,m_2^2)=0$.

Now for Eq.(\ref{deb2}). Evaluating the lhs gives
\bea
LHS & = & \lambda(s,m_1^2,m_2^2)\frac{1}{\pi}\int_{(m_1+m_2)^2}^\infty\frac{1}{\sqrt{\lambda(x,m_1^2,m_2^2)}(x-s)^2}dx\label{p1}\\
& + & \frac{1}{\pi}\int_{(m_1+m_2)^2}^\infty\frac{(s-m_1^2-m_2^2)}{\sqrt{\lambda(x,m_1^2,m_2^2)}(x-s)}dx\label{p2}.
\eea
A partial integration in the first term (\ref{p1}) delivers
\beas
LHS & = & -\lambda(s,m_1^2,m_2^2)\frac{1}{2\pi}\int_{(m_1+m_2)^2}^\infty\frac{\partial_x 
\lambda(x,m_1^2,m_2^2)}{\sqrt{\lambda(x,m_1^2,m_2^2)}^3(x-s)}dx\\
 & & +\frac{1}{\pi}\int_{(m_1+m_2)^2}^\infty\frac{(s-m_1^2-m_2^2)}{\sqrt{\lambda(x,m_1^2,m_2^2)}(x-s)}dx\\
  & & -
 \lambda(s,m_1^2,m_2^2)\left[ \frac{1}{\pi}\frac{1}{\sqrt{\lambda(x,m_1^2,m_2^2)}(x-s)}\right]^\infty_{(m_1+m_2)^2}.
\eeas

We have
\be\label{qone}
\partial_x 
\lambda(x,m_1^2,m_2^2)=2(x-m_1^2-m_2^2)=:v_1(x),\, v_1(x)-v_1(s)=2(x-s),
\ee 
and
\be\label{qzero}
\lambda(s,m_1^2,m_2^2)-\lambda(x,m_1^2,m_2^2)=(s-x)((s+x)-2(m_1^2+m_2^2))=:w(x,s)(s-x).
\ee 
Using this the lhs of Eq.(\ref{deb2}) reduces to a couple of boundary terms.
We collect
\beas
 & + & \frac{1}{\pi}\left[ \frac{x}{\sqrt{\lambda_x}}\right]^\infty_{(m_1+m_2)^2}\\
 & + & \frac{s-2(m_1^2+m_2^2)}{\pi}\left[ \frac{1}{\sqrt{\lambda_x}}\right]^\infty_{(m_1+m_2)^2}\\
 &  + &
  \left[ \frac{1}{\pi}\frac{(s-x)w(s,x)+\lambda_x}{\sqrt{\lambda_x}(x-s)}\right]^\infty_{(m_1+m_2)^2}\\
   & = & \frac{1}{\pi},
\eeas
as desired.

Indeed using that $w(s,x)=s+x-2(m_1^2+m_2^2)$ we see that the term $\sim w$ in the third line cancels the first and second line.
The remaining term is
\[
  \left[ \frac{1}{\pi}\frac{\lambda_x}{\sqrt{\lambda_x}(x-s)}\right]^\infty_{(m_1+m_2)^2}=\frac{1}{\pi},
\]
as $\sqrt{\lambda((m_1+m_2)^2,m_1^2,m_2^2)}=0$ and 
$\lim_{x\to\infty}\sqrt{\lambda(x,m_1^2,m_2^2)}=x$.
\end{proof}

\begin{rem}
So for $b_2$ we have by Eqs.(\ref{qzero},\ref{qone})
\[
Q_0(x)=\frac{2(s-m_1^2-m_2^2)}{\lambda(s,m_1^2,m_2^2)} {\text{ and }} Q_1(x)=1.
\]
This is a trivial incarnation of Eq.(\ref{bananaode}).
As $(Q_0(x)-Q_0(s))\sim (x-s)$ we cancel the denominator $1/(x-s)$ in the dispersion integral and we are left with boundary terms which constitute the inhomogeneous terms.
\end{rem}
\begin{rem}
The non-rational part 
$\Phi_R^D(b_2)_\mathbf{Transc}$ of $\Phi_R^D(b_2)$ is divisible by $V_2^D$ and gives a pure function
in the parlance of \cite{BroedeletalEll}. Indeed one wishes to dentify such pure functions in the non-rational parts of $\Phi_R^D(b_n)(s,s_0)$.

For example for $D=4$ (ignoring terms in 
$\Phi_R^4(b_2)(s)$ which are rational in $s$)
\[
\Phi_R^4(b_2)(s)_\mathbf{Transc}/V_2^4(s)=\ln\frac{m_1^2+m_2^2-s-\sqrt{\lambda(s,m_1^2,m_2^2)}}{m_1^2+m_2^2-s+\sqrt{\lambda(s,m_1^2,m_2^2)}}.
\] 
 This follows also for all $b_n$, $n>2$, as long as the inhomogenuity $T_n(s)$ fulfils 
\[
\Im(T_n(s))=0,
\]
which is certainly true for the case $b_2$ with $T_2(s)=1/\pi$.
Indeed, for $f(s)$ a solution of the homogeneous
\[
\left(\sum_{j=0}^{n-1} Q_j(s)\partial^j_s\right) f(s)=0,
\]
the inhomogeneous Picard--Fuchs equation
\[
\left(\sum_{j=0}^{n-1} Q_j(s)\partial^j_s\right) g(s)=T_n(s),
\]
can be solved by setting $g(s)=f(s)h(s)$.
Using Leibniz' rule this determines $h(s)$ as a solution of an equation
\[
\sum_{k=1}^{n-1} h^{(k)}(s)\left(\sum_{j=k}^{n-1} \left(j \atop k \right) a_j(s) f^{(j-k)}(s)\right)=u(s),
\]
with $f^{(j-k)}(s)=\partial_s^{j-k}f(s)$ and similarly for $h^{(k)}(s)$. 
Note $f^{(j-k)}(s)$ are given by solving the homogeneous equation.
Hence $g(s)$ indeed factorizes as desired.\footnote{The argument can be extended by replacing the requirement $\Im(T_n(s))=0$ by $\mathbf{Var}_x(T_n(s))=0$ where 
$\mathbf{Var}_x$ is the variation around a given threshold divisor $x$.
For banana graphs $b_n$ we only have to consider $x=s_{\mathbf{normal}}$.}

This relates to co-actions and cointeracting bialgebras \cite{KarenDirk,DirkCoact} and will be discussed elsewhere.
\end{rem}
\subsubsection{Systems of linear differential equations for $b_n$} 
To find differential equations for the iterated $y_j$-integrations of
Eq.(\ref{itInt}) we first systematically shift all $y_j$-derivatives acting on
$\sqrt{y_j^2-m_{j+1}^2}$ to act on $V_2^D(s_n^2,m_1^2,m_2^2)$ using partial integration. We can ignore boundary terms by Thm.(\ref{monodromyThm} iii)).
We use
\beas
\left(\partial_{m_j^2}\frac{1}{\sqrt{y_{j-1}-m_j^2}}\right)F & = &
\frac{1}{2\sqrt{y_{j-1}-m_j^2}^3}F\\
 & = &  \left(-\frac{y_{j-1}^2-m_j^2}{2m_j^2\sqrt{y_{j-1}-m_j^2}^3}
 +\frac{y_{j-1}^2}{2m_j^2\sqrt{y_{j-1}-m_j^2}^3}\right)F\\
  & = & \left(-\frac{1}{2m_j^2\sqrt{y_{j-1}-m_j^2}}
 -y\left(\partial_{y_{j-1}}\frac{1}{2m_j^2\sqrt{y_{j-1}-m_j^2}}\right)\right)F\\
  & = & -\frac{1}{2m_j^2\sqrt{y_{j-1}-m_j^2}}F
 +\frac{1}{2m_j^2\sqrt{y_{j-1}-m_j^2}}\left(\partial_{y_{j-1}}y_{j-1}F\right)\\
 & = & +\frac{1}{2m_j^2\sqrt{y_{j-1}-m_j^2}}y_{j-1}\left(\partial_{y_{j-1}} F\right)\\
  & = & +\frac{1}{m_j^2\sqrt{y_{j-1}-m_j^2}}y_{j-1}\left(\sqrt{s_n^{j-1}}\partial_{m^2_{j}} F\right).
\eeas
We could trade a derivative wrt $y_{j-1}$ for a derivative wrt $m_j^2$ thanks to Thm.(\ref{monodromyThm} iv)). This holds under the proviso that all masses are different. Else we use the penultimate line as our result:
\[
\left(\partial_{m_j^2}\frac{1}{\sqrt{y_{j-1}-m_j^2}}\right)F=
+\frac{1}{2m_j^2\sqrt{y_{j-1}-m_j^2}}y_{j-1}\left(\partial_{y_{j-1}} F\right).
\] 
We can iterate this and shift higher than first  derivatives
\[
\left(\partial_{m_j^2}^k\frac{1}{\sqrt{y_{j-1}-m_j^2}}\right)F
\] 
to derivatives on $F$.

We note that from the definition of $\lambda(s_n^{n-2},m_2^2,m_1^2)$ we have
\[
\lambda(s_n^{n-2},m_2^2,m_1^2)=s_n^{n-2}(s_n^{n-2}-2(m_1^2+m_2^2))+(m_1^2-m_2^2)^2.
\]
By Euler ($\lambda$ is homogeneous of degree two), 
\beas
2\lambda(s_n^{n-2},m_2^2,m_1^2)  & = & \partial_{s_n^{n-2}}\lambda(s_n^{n-2},m_2^2,m_1^2)
+\partial_{m_1^2}\lambda(s_n^{n-2},m_2^2,m_1^2)+\partial_{m_2^2}\lambda(s_n^{n-2},m_2^2,m_1^2).
\eeas

Also,
\beas
\partial_{m_1^2}\lambda(s_n^{n-2},m_2^2,m_1^2) & = & 2(m_1^2-m_2^2-s_n^{n-2}),\\
\partial_{m_2^2}\lambda(s_n^{n-2},m_2^2,m_1^2) & = & 2(m_2^2-m_1^2-s_n^{n-2}),\\
\partial_{m_j^2}\lambda(s_n^{n-2},m_2^2,m_1^2) & = & 2(s_n^{n-2}-m_1^2-m_2)\partial_{m_j^2}s_n^{n-2},\,\forall\, 3\leq j\leq n,\\
\partial_{s}\lambda(s_n^{n-2},m_2^2,m_1^2) & = & 2(s_n^{n-2}-m_1^2-m_2)
\partial_s s_n^{n-2}.
\eeas
With this Thm.(\ref{monodromyThm}) allows to derive differential equations.

Let us rederive for example the differential equation for the three-edge banana.
Let us define
\beas
F_0 & = & \Phi_R(b_3),\\
F_1 & = & \partial_{m_1^2} F_0,\\
F_2 & = & \partial_{m_2^2} F_0,\\
F_3 & = & \partial_{m_3^2} F_0,\\
F_s & = & \partial_s F_0.
\eeas
Then we have
\be\label{diffF0}
(D-3)F_0+\sum_{j=1}^3 m_j^2 F_j=s\partial_s F_0,
\ee
and similarly
\be\label{diffFj}
\left((D-4)+\sum_{i=1}^3m_i^2\partial_{m_i^2}\right) F_j=s\partial_s F_j,\,j\in\{1,2,3\}.
\ee

The integrands $I_i$ for $(D-3)F_0$,$m_1^2F_1$,$m_2^2F_2$,$m_3^2F_3$, and $sF_s$ can be written as
\[
I_{i}=\frac{\mathbf{num}_i(y_2)}{s^{\frac{D}{2}}}\sqrt{y_2^2-m_3^2}^{D-5}
\sqrt{\lambda}^{D-5}(s_3^1,m_2^2,m_1^2)
\]
with suitable polynomials $\mathbf{num}_i$ in $y_2$.
Eq.(\ref{diffF0}) follows immediately as the corresponding numerators 
${\mathbf{num}_i(y_2)}$ add to zero.

The equations Eqs.(\ref{diffFj}) for $F_1,F_2,F_3$ can be proven in precisely the same manner and many more differential equations follow from using the ibp identities Eqs.(\ref{ibpY}-\ref{ibpS}).

Furthermore $F_0,F_1,F_2,F_3$ provide master integrals for the 
Feynman integrals $\Phi_R^D(b_3)_\nu$ \cite{Zay}.
\begin{rem}
Note that we can infer the independence of $F_0,F_1,F_2,F_3$ from the fact that the corresponding polynomials  are different, in fact of different degree in $y_2$.

We could also use different integral representations for $F_1,F_2,F_3$ by setting
\beas
F_3=\partial_{m_3^2}\,{\text{rhs\,of\,Eq.(\ref{bthreeo})}},\\
F_2=\partial_{m_2^2}\,{\text{rhs\,of\,Eq.(\ref{bthreetw})}},\\
F_1=\partial_{m_1^2}\,{\text{rhs\,of\,Eq.(\ref{bthreeth})}}.
\eeas
and conclude from there.
\end{rem}\hfill$|$
\subsection{Master integrals}\label{masterint}
We want to comment on two facts:\\
i)  a geometric interpretation of the known formula for the counting of master integrals for $b_n$,\\
ii) that the independence of elements $x$ of a set $S_ {b_n}$ of master integrals does not imply the independence of elements of $\Im(x),\, x\in\left(S_ {b_n}\right)$.   
\subsubsection{A geometric interpretation: Powercounting} Let us start with a geometric interpretation. We collect a well-known proposition \cite{KalmKniehl,Panzer}.
\begin{prop}
The number of master integrals for the $n$-edge banana with different masses is $2^n-1$.
\end{prop}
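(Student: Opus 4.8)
The statement to prove is that the number of master integrals for the $n$-edge banana $b_n$ with all masses distinct is $2^n-1$. I would establish this by the standard Lefschetz-theoretic count: master integrals correspond to a basis of the relevant (relative) middle cohomology of the complement of the union of propagator quadrics, equivalently to the number of bounded regions / critical points one sees after resolving, and for banana graphs this count is controlled by the first Symanzik polynomial (the graph polynomial) together with the second Symanzik polynomial encoding the masses. Concretely, after Feynman parametrization the $b_n$ integrand lives on $\mathbb{P}^{n-1}$ with projective coordinates $x_1,\dots,x_n$ (one per edge), and the singular locus is the union of the $n$ coordinate hyperplanes $\{x_i=0\}$ together with the second Symanzik hypersurface $\{\varphi(b_n)=0\}$, where for $b_n$ one has $U=\sum_i \prod_{j\ne i} x_j$ and $F = U\cdot\bigl(\sum_i m_i^2 x_i\bigr) - s\prod_i x_i$ (see Eq.(\ref{varphibn}) in the appendix referenced). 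The number of master integrals is then the Euler characteristic (up to sign) of the complement $\mathbb{P}^{n-1}\setminus\bigl(\{x_1\cdots x_n=0\}\cup\{F=0\}\bigr)$, by the theorems of Lee--Pomeransky / Bitoun--Bogner--Klausen--Panzer type that were already invoked via the cited \cite{KalmKniehl,Panzer}.

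**Key steps, in order.** First I would recall that for a generic (Euler-homogeneous, non-degenerate) situation the master-integral count equals $(-1)^{n-1}\chi\bigl(T\bigr)$ where $T$ is the very affine variety $\{x_1\cdots x_n\ne 0,\ F\ne 0\}\subset\mathbb{P}^{n-1}$. Second, I would compute this Euler characteristic for $b_n$: on the torus $(\mathbb{C}^\times)^{n-1}$ (which is what $\{x_1\cdots x_n\ne 0\}$ in $\mathbb{P}^{n-1}$ is, after scaling out one coordinate) the Euler characteristic of the complement of $\{F=0\}$ is, by Kouchnirenko / Bernstein-type results, the normalized volume of the Newton polytope of $F$ restricted to that torus; for the banana, $F$ restricted to the torus has Newton polytope equal to the standard simplex $\Delta_{n-1}$ (coming from the monomials $m_i^2 x_i \prod_{j\ne i}x_j$ and $s\prod_i x_i$, which after dehomogenization are $1, x_1,\dots,x_{n-1}$ up to the torus action), so its normalized volume is $1$, and hence $|\chi(\text{torus}\setminus\{F=0\})| = 1$ — wait, that is not $2^n-1$, so the count must instead come from including the boundary strata. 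Therefore, third and crucially, I would do the count inclusion–exclusion over the $2^n$ coordinate strata of $\mathbb{P}^{n-1}$: for each subset $I\subseteq\{1,\dots,n\}$ the stratum $\{x_i=0,\ i\in I;\ x_j\ne 0,\ j\notin I\}$ contributes, and on each such stratum the restriction of $F$ again has a Newton polytope whose normalized volume one computes; summing these contributions with signs gives $\sum_{I}(\pm 1) = 2^n-1$ once one checks the empty stratum (all $x_i=0$) is excluded and each nonempty one contributes exactly $1$ in absolute value with consistent sign.

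**Alternative cleaner route.** Rather than Newton-polytope volumes, I would actually prefer the inductive/combinatorial route suggested by the flag structure $b_2\subset b_3\subset\cdots\subset b_n$ emphasized throughout the paper: differentiating $\Phi_R^D(b_n)$ with respect to a chosen subset of the mass-squares $m_i^2$ produces candidate masters, and one shows (i) the $2^n-1$ integrals obtained from the nonempty subsets of $\{m_1^2,\dots,m_n^2\}$ (or equivalently $\Phi_R^D(b_n)$ itself together with $\partial_{m_{i_1}^2}\cdots\partial_{m_{i_k}^2}\Phi_R^D(b_n)$) are linearly independent over the field of rational functions — here the degree-in-$y_2$ argument from the $b_3$ remark generalizes: the corresponding numerator polynomials $\mathbf{num}_I(y_2,\dots,y_{n-1})$ have distinct multidegrees — and (ii) they span, i.e. every tensor integral $\Phi_R^D(b_n)_\nu$ and every raised-power integral reduces to a combination of them via the ibp identities of Eqs.(\ref{ibpY}--\ref{ibpS}) and the recursion of Thm.(\ref{monodromyThm}). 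The spanning step is essentially a dimension count on the D-module / the finiteness of the Mellin–Barnes or Lee–Pomeransky representation, which is exactly what \cite{KalmKniehl,Panzer} supply.

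**Main obstacle.** The hard part will be the spanning/completeness half: showing that $2^n-1$ integrals genuinely suffice, i.e. that the ibp identities together with the iterated-integral recursion close on this finite set. The independence half is comparatively easy (distinct polynomial multidegrees, as in the $b_3$ remark). For spanning I would either quote the Euler-characteristic theorem directly — reducing to the polytope computation above, where the only real work is verifying that each of the $2^n-1$ nonempty coordinate strata contributes $\pm 1$ and that the total telescopes to $2^n-1$ rather than to something smaller — or give an inductive argument: a tensor integral for $b_n$ integrates (via the innermost $k_1$ integration, which is fully constrained) down to a tensor integral for $b_{n-1}$ with extra numerator factors expressible through Eqs.(\ref{tensors}--\ref{tensorsthree}), and one tracks how the master count $2^{n-1}-1$ for $b_{n-1}$ grows by the new mass $m_n^2$ and the new constraint to give $2(2^{n-1}-1)+1 = 2^n-1$. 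Making that recursion airtight — in particular handling the raised propagator powers, where the doubling-of-propagators subtlety of Fig.(\ref{singDisp}) intervenes — is where the genuine care is needed.
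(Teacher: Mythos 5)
The paper does not in fact prove this proposition: it collects it as known from \cite{KalmKniehl,Panzer} and then exhibits the enumeration that realizes the number, namely the top-sector integrals given by $b_n$ with at most one dot per edge on at most $n-2$ edges (the innermost $b_2$ never needs a dot, since its loop is completely localized by two $\delta_+$ constraints, and ibp removes repeated dots on an edge), which gives $\sum_{j=0}^{n-2}\binom{n}{j}=2^n-n-1$, plus the $n$ tadpole minors $b_n/e_i$, for a total of $2^n-1$. Measured against either that enumeration or the theorems you invoke, your proposal has two concrete gaps.

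First, the Euler-characteristic route is a sketch exactly where the work lies: the counting theorem of \cite{Panzer} computes the number of masters from the Euler characteristic of the complement of the Lee--Pomeransky hypersurface $\{U+F=0\}$ in the $n$-dimensional torus, not of $\mathbb{P}^{n-1}$ minus the coordinate hyperplanes and $\{F=0\}$, so your stratified computation is not set up for the statement you quote; and the decisive claims --- that each nonempty coordinate stratum contributes exactly $\pm 1$ and that the signs conspire to give $2^n-1$ --- are asserted rather than verified (your own first pass on the open torus produced $1$). In the correct torus formulation the entire content is a Newton-polytope volume computation, and that computation is absent. Second, and more seriously, the explicit basis in your ``cleaner route'' is wrong: the $2^n-1$ integrals $\partial_{m_{i_1}^2}\cdots\partial_{m_{i_k}^2}\Phi_R^D(b_n)$ indexed by nonempty subsets all lie in the top sector, while the $n$ tadpole minors are separate masters (they appear as the inhomogeneity $T$ of the differential system and carry no $s$-dependence, so they cannot be expressed through top-sector integrals). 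If your set were linearly independent one would obtain at least $(2^n-1)+n$ masters, contradicting the count being proved; hence the derivatives of order $n-1$ and $n$ must be ibp-reducible and your ``comparatively easy'' independence step fails for the set you chose. Nor do distinct multidegrees of numerator polynomials establish independence by themselves: the paper's $D=2$ discussion around Eq.(\ref{dependence}) shows that integrands with different numerators can satisfy relations after integration, which is precisely why the paper ties (in)dependence to the subtraction degrees required in the dispersion integrals. The correct candidate list is the dotted-bananas-plus-minors enumeration above, and the completeness (spanning) half is exactly what the paper delegates to \cite{KalmKniehl,Panzer}; your proposal does not supply it either.
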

Let us pause. For $b_3$, we have four master integrals, $F_0$, and three possibilities to put a dot on an internal edge. Furthermore, we can shrink any of the three internal edges, giving us three two-petal roses as minors.
This makes $7=2^3-1$ master integrals amounting to the fact that all tensor integrals $\Phi_R(b_n)\nu$ can be expressed as a linear combination of those seven, with coefficients which are rational functions in the mass-squares and in $s$.   

Similarly for $b_4$ we have $\Phi_R^D(b_4)$ itself, four integrals
$\partial_{m_i^2}\Phi_R^D(b_4)$ and six $\partial_{m_j^2}\partial_{m_i^2}\Phi_R^D(b_4)$, $i\not= j$. There are four minors as well, so that we get the desired $15=2^4-1$ master  integrals.

For arbitrary $n$ there are indeed $\left( n\atop j\right)$ possibilities to put one dot on $j$ edges,
and 
\[
\sum_{j=0}^{n-2}\left( n\atop j\right)=2^{n}-n-1,
\] 
possibilities to put a single dot on up to $n-2$ edges. Furthermore we have $n$ minors from shrinking one of the $n$ edges, so we get $2^n-1$ master integrals.

Furthermore it is obvious from the structure 
of the iterated integral in Eq.(\ref{itInt}) that the two edges forming the innermost $b_2$ do not need a dot. Indeed the corresponding  loop integral 
in $k_1$ is fixed by two $\delta_+$ functions.  Integration by parts then ensures that we do not need more than one dot per edge at most. 

\begin{rem}
One can analyse this from the viewpoint of powercounting. Let us choose $D=4$ so that $b_2$ is log-divergent.
Let us note that for $D=4$ 
\be\label{masterzero} 
4(n-1)-2\overbrace{(2n-2)}^{\# E}=0,
\ee
where $\# E$ is the number of edges of a banana graph $b_n$ which has  $(n-2)$ edges
with a single dot each. Eq.(\ref{masterzero}) says that $b_n$ furnished with the maximum of $n-2$ dots gives an overall logarithmic singular integral for any $n$.
 
A lesser number of dots gives a higher degree of divergence and hence higher subtractions in the dispersion integrals. Conceptually, higher degrees of divergence are probing higher coefficients in the Taylor expansion in $s$ which provide the needed master integrals.
We see below how this interferes with counting master integrals but first our geometric interpretation as given in Fig.(\ref{theta}).
\hfill $|$
\end{rem}
\subsubsection{$b_3$ and its cell}
The parametric representation of $b_3$ as given in App.(\ref{bthreeparamet})
provides insight into the structure of its Feynman integral and the related master integrals.
\begin{rem}
Let us note that any graph $b_n$ has a spanning tree which consists of just one of its internal edges. Hence any associated spanning tree has length one.
As $b_n$ has $n$ internal edges its associated cell $C(b_n)$  (in the sense of {\em Outer Space} \cite{CullVogt})
is a $(n-1)$-dimensional simplex $C_n$
\[
C(b_n)=C_n.
\] 
The graph $b_n$ has internal edges $e_i$. To each such edge we assign a length $A_i$, $0\leq A_i\leq \infty$ which we regard as a coordinate in the projective space $\mathbb{P}_{b_n}:=\mathbb{P}^{n-1}(\mathbb{R}_+)$. 

Shrinking one edge $e_i$ to length $A_i=0$
gives the graph $b_n/e_i$ which is associated to the codimension-one boundary
determined by $A_i=0$. It is a $(n-2)$-dimensional simplex $C_{n-1}$.

Note $b_n/e_i$ is a rose with $(n-1)$ petals. Each petal corresponds to a tadpole integral for a propagator with mass $m_j^2$, $j\not= i$.

Different points of $C(b_n)$ correspond to different points 
\[
\mathbb{P}_{b_n}\ni p:\,(A_1:A_2:\cdots:A_n).
\]  
We can identify $n!$ sectors $\mathbf{\sigma}:\,A_{\sigma(1)}>A_{\sigma(2)}>\cdots>A_{\sigma(n)}$
for any permutation $\sigma\in S_n$ with associated sector $\mathbf{\sigma}$.
\be
\Phi_R^D(b_n)(s,s_0)=\int_{\mathbb{P}_{b_n}(\mathbb{R}_+)}\mathrm{Int}_{b_n}(s,s_0;p)=\sum_{\sigma\in S_n}\int_{\mathbf{\sigma}}T^{(\rho^n_D)}\left[\mathrm{Int}_{b_n}(s,s_0;p)\right],
\ee
with
\[
\mathrm{Int}_{b_n}(s,s_0;p)=\frac{\ln\frac{\Phi(b_n)(s)(p)}{\Phi(b_n)(s_0)(p)}}{\psi_{b_n}^{\frac{D}{2}}(p)}\Omega_{b_n}.
\]
$T^{(\rho^n_D)}$ is a suitable Taylor operator with subtractions at $s=s_0$ ensuring overall convergence
and  $\rho^n_D$ the UV degree of divergence.
Here,
\[
\Phi(b_n)(s)(p)=\left(\prod_{j=1}^n A_j\right)\underbrace{\left(s-\left(\sum_{i=1}^n A_im_i^2\right)\left(\sum_{k=1}^n \frac{1}{A_k}\right)\right)}_{TP(b_n)},
\]
and
\[
\left(\prod_{j=1}^n A_j\right)\left(\sum_{k=1}^n \frac{1}{A_k}\right).
\]
Each sector allows for a rescaling acording to the order of edge variables such that 
the singularity is an isolated pole. 

Here $TP(b_n)$ is the toric polynomial of $b_n$
as discussed in \cite{allBanana,VanhRev} and prominent in the GKZ approach used there. 

Such approaches with their emphasis on hypergeometrics and the r\^ole of confluence have a precursor in the study of Dirichlet measures \cite{Carlsen}.
The latter have proved their relevance for Feynman diagram analysis early on \cite{DirkCarlsen}.

The  spine of $C(b_n)$ is a $n$-star, with the vertex in the barycenter 
and $n$ rays from the barycenter $bc$ of $C(b_n)$ to the midpoints of the $n$ codimension-one cells $C_{n-1}$ which are $(n-2)$-simplices.

These rays provide $n$  corresponding cubical chain complices $\mathrm{cc}(i)$  each provided by  single intervals $[0,1]$. 

For the two endpoints $0$ and $1$ of each $\mathrm{cc}(i)$, we assign:\\
i) to $1$, -the barycenter $bc$ common to all $\mathrm{cc}(i)$ we assign $b_n$ with internal edges removed, hence evaluated on-shell. This corresponds to $\Im\left(\Phi_R^D(b_n)\right)$.\\
ii)  to $0$, we assign the graph $b_n/e_i$ (a rose with $n-1$ petals) with petals of equal size  -hence a tadpole $\Phi_R^D(b_n/e_i)$
with $A_jm_j=A_km_k$, $j,k\not= i$. See Fig.(\ref{theta}). 
\end{rem}\hfill$|$
\begin{figure}[h]
\includegraphics[width=14cm]{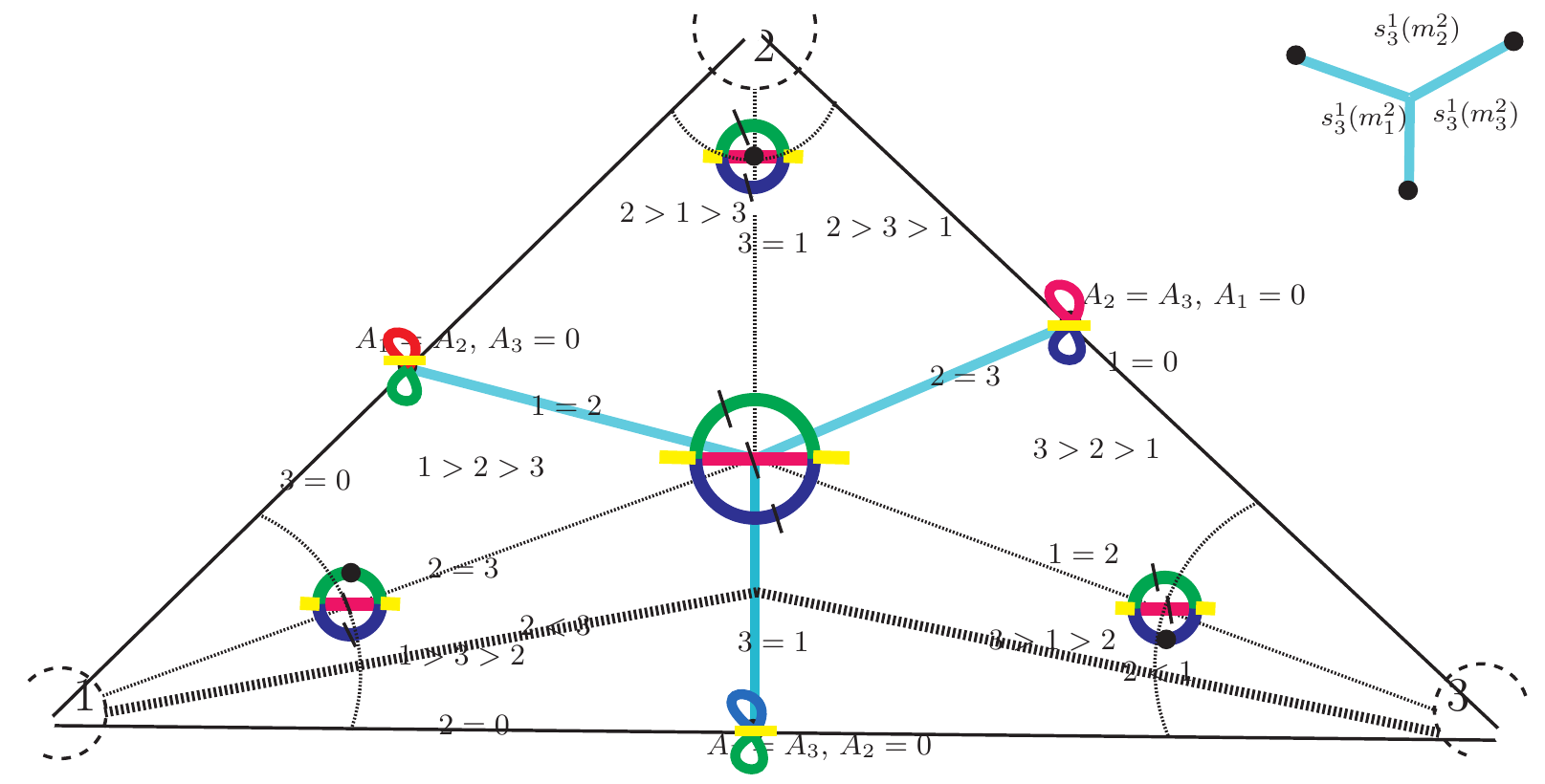}.
\caption{The graph $b_3$  and its triangular cell $C_3$. The codimension-one boundaries (sides) are given by the condition $A_i=0$, indicated in the figure by $i=0$, $i\in\{1,2,3\}$. The graph $b_3$ with two yellow leaves as external edges is put in the barycenter.
All its edges are put on-shell.  The cell decomposes into six sectors $m_iA_i>m_jA_j>m_kA_k$ as indicated by $i>j>k$. The lines $m_iA_i=m_jA_j$ (indicated by $i=j$) start at the midpoint 
$\mathrm{mid}_{i,j}:\, A_k=0,\,A_im_i=A_jm_j$ of the co-dimension one boundary $A_k=0$ and pass through the barycenter $\mathrm{bc}:\,m_1A_1=m_2A_2=m_3A_3$ towards the corner $c_k:\,A_i=A_j=0$, labeled $k$. Such corners are removed.
For these three lines the three intervals $[\mathrm{mid}_{i,j},\mathrm{bc}]$ from the midpoints of the sides to the barycentre of the cell form the spine. 
It indicated in turquoise.  The bold hashed line indicated by $2<3$ 
(so $m_2A_2<m_3A_3$) on the left and $2<1$ (so $m_2A_2<m_1A_1$) on the right is an example of a fibre over one (the vertical) part (on the $1=3$-line) of the spine (the turquoise line from $m_1A_1=m_3A_3,A_2=0$ to the barycentre). On the left, along the fibre the ratio $A_2/A_3<m_3/m_2$ is a constant, 
on the right similarly.
Finally, to the two yellow leaves we assign incoming four-momenta $k_3,-k_3$ with $k_3^2=s$. The spine partitions the cell $C_3$ into three 2-cubes, boxes $\Box(j)$ with four corners for any $\Box(j)$:
$\mathrm{mid}_{i,j},\mathrm{bc},\mathrm{mid}_{j,k}, c_j$. For each such box $\Box(j)$ there is a diaginal $d_j$.
It is a line from a corner to the barycenter: $\mathrm{d}_j:\,]c_j,\mathrm{bc}]$ for which we have $m_iA_i=m_kA_k$. We assign to this diagonal $\mathrm{d}_j$  a graph for which edges $e_i,e_k$ are on-shell and edge $e_j$ carries a dot. Along the diagonal $\mathrm{d}_j$ we have $A_jm_j>(A_im_i=A_km_k)$.}
\label{theta}
\end{figure}

Fig.(\ref{theta}) gives the graph $b_3$ and the associated cell, a 2-simplex $C_3$.
It is a triangle with corners $c_1,c_2,c_3$. Points of the cell are the interior points of $C_3$ and 
furthermore the points in the three codimension-one boundaries $C_2(i)$, the sides of the triangle.

The corners $c_i$  are removed and do not belong to the cell. Points of the cell parameterize the edge lengths $A_i$  of the internal edges of $b_3$
as parameters in the parametric integrand, see Eq.(\ref{bthreepar}). 

The boundaries
are given by $C_2(i): A_i=0$ and correspond to tadpole integrals for tadpoles $t_2(i)=b_3/e_i$ for which edge $e_i$ has length zero. 

Corners $c_k:\,A_i=A_j=0,\,i\not= j $ correspond to $b_3/e_i/e_j$ which 
is degenerate as it shrinks a loop.

  Colours green, red, blue indicate three different masses. It is understood that a momentum $k_3$ flows through any edge $e_i$ which is chosen to serve as a spanning tree for $b_3$.

The three edges of the graph give rise to $3!$ orderings of the edge lengths as indicated in the figure. We will split the parametric integral accordingly. See App.(\ref{bthreeparamet}) for computational details.

To a $(i=j)$-diagonal of a box $\Box(k)$  we associate a $b_3$ evaluated  with edges $e_i,e_j$ on-shell and edge $e_k$ dotted, so it corresponds to $\partial_{m_k^2}\Im\left(\Phi_R^D(b_3)\right)$. 

In the figure there is also an arc given which is a fibre which has the diagonal $d_j$  as the base. Integrating that fibre corresponds to integrating the $b_2$ subgraph on edges $e_i,e_j$. Points $(A_i:A_j:A_k)$ on a diagonal $d_k$ fulfil 
\[
A_km_k>x,\, x:=A_im_i=A_jm_j.
\]

To the barycentre $A_im_i=A_jm_j$ we associate $b_3$ with all three edges on-shell, a Cutkosky cut providing $\Im\left(\Phi_R^D(b_3)\right)$. To the midpoints
$A_i=A_j,A_k=0$ of the edges $A_i=0$ ($e_i=0$ in
the figure) we assign tadpole integrals. All in all we identified all seven master integrals in the figure. Note that the cell decomposition in Fig.(\ref{theta}) reflects the structure of the Newton polyhedron associated to $TP(b_3)$ \cite{VanhRev}.

Note that the requirement $A_im_i=A_jm_j$ is the locus for the Landau singularity of the associated $b_2(e_i,e_j)$ and similarly for $A_1m_1=A_2m_2=A_3m_3$ and $b_3$.
\begin{rem}
Note that the diagonals $d_j$ can be obtained by reflecting a leg of the spine at the barycenter. The three legs and the three diagonals form the six boundaries between the sectors $A_i>A_j>A_k$. 

A similar analysis holds for any $b_n$. 
For example for $b_4$ the cell is a tetrahedron with four corners $c_i$, $i\in\{1,2,3,4\}$.  The spine is a 4-star with four lines (rays) from the barycenter $bc:\,m_1A_1=m_2A_2=m_3A_3=m_4A_4$ to the midpoints of the four sides of the tetrahedraon (triangles). Reflecting these lines at the barycenter gives four diagonals $d_j:\,[bc,c_j]$
from $bc$ to one of the four corners $c_i$. 

To $bc$ we associate $\Im\left(\Phi_R^D(b_4)\right)$. To the diagonals $d_j$ we assign $\partial_{m_j^2}\Im\left(\Phi_R^D(b_4)\right)$ with the edges $e_i,i\not= j$, on-shell. There are six triangles with sides $d_i,d_j,]c_i,c_j[$. To those we assign  
$\partial_{m_i}^2\partial_{m_j^2}\Im\left(\Phi_R^D(b_4)\right)$
with the edges $e_k,k\not= i,j$, on-shell.
See Fig.(\ref{bfourtetra}).
\end{rem}\hfill$|$
\begin{figure}[H]
\includegraphics[width=14cm]{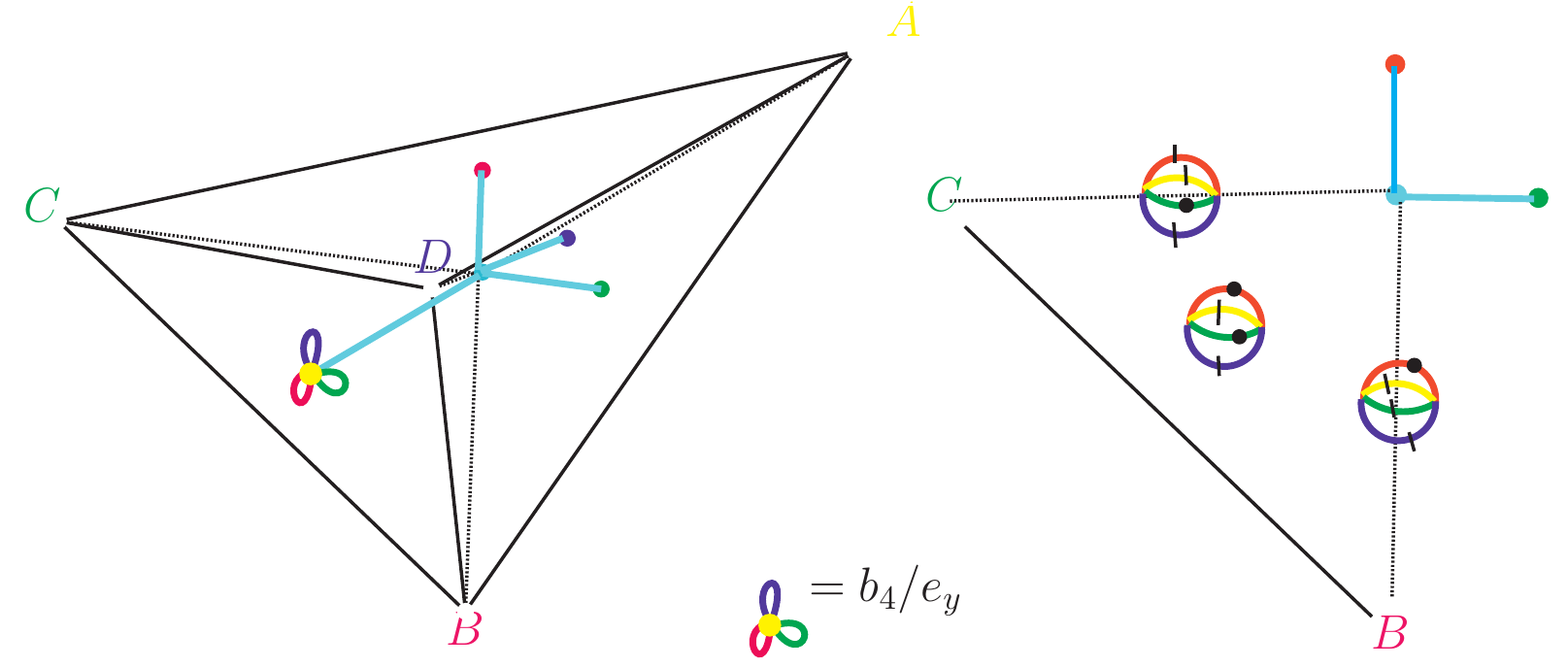}.
\caption{The cell $C(b_4)=C_4$ on the left. On the right we see two diagonals 
$d_C,d_B$ and their associated graphs which have one dotted edge. Points of the triangle $bc,B,C$ are the open convex hull of $d_C,d_B$ which we denote as the span of the diagonals $d_C,d_B$. To them a graph with two dotted  edges is assigned. On the codimension-one triangles spanned by three corners we indicate
the barycentre by a coloured dot. For example to the triangle $BCD$ we have the yellow dot and the graph $b_4/e_y$ assigned to it where the yellow edge shrinks to length zero.}
\label{bfourtetra}
\end{figure}
Continuing we get the expected tally: for $b_n$, we have $\binom{n}{0}=1$ graph for the barycenter,
$\binom{n}{1}=n$ graphs for the diagonals, $\binom{n}{m},\ m\leq (n-2)$ graphs for the span of $m$ diagonals, and $\binom{n}{n-1}=n$ tadpoles. 
It is rather charming to see how mathematics inspired by the works of Karen Vogtmann and collaborators \cite{CullVogt} illuminates results discussed recently in terms of intersection theory \cite{intersect}.

\subsubsection{Real and imaginary independence and powercounting}
Next we want to compare real and imaginary parts to check that the independence of elements of $S_ {b_n}$ does not necessarily imply the independence of elements of $\Im\left(S_ {b_n}\right)$.   We demonstrate this well known fact \cite{RemSchouten} for $b_3$. Independence is indeed a question of the values of $D$.

For $b_3$ and $D=2$ we need no subtraction in the dispersion integral for $F_0=\Phi_R^2(b_3)$,
\[
\Phi_R^2(b_3)(s)=\frac{1}{\pi}\int_{(m_1+m_2+m_3)^2}^\infty\frac{V_3^D(x,m_1^2,m_2^2,m_3^2)}{(x-s)}dx, 
\]
and  for $F_i=\partial_{m_i^2} F_0$ again an unsubtracted dispersion integral suffices
\[
F_i(s)=\frac{1}{\pi}\int_{(m_1+m_2+m_3)^2}^\infty\frac{\partial_{m_i^2}V_3^D(x,m_1^2,m_2^2,m_3^2)}{(x-s)}dx. 
\]
The four integrands $I_i$ (for the $y_2$-integration)  of $\Im(F_i)$, $i\in\{0,1,2,3\}$ can be expressed over a common denominator with numerators 
 $\mathbf{num}_i(y_2)$ and for $D=2$ 
(the $(s_n^{n-2})^{\frac{D}{2}-1}=1$ is absent) there is indeed a relation between the four numerators.
\be\label{dependence}
\mathbf{num_3}(y_2)=c_{0}^3 \mathbf{num}_0(y_2)+c_{1}^3 \mathbf{num}_1(y_2)+c_{2}^3 \mathbf{num_2}(y_2),
\ee
where $c_{i}^3$ are rational functions of $s,m_1^2,m_2^2,m_3^2$ independent of $y_2$. 

For $D=2$ a second relation follows from the fact that the integrand involves the square root of a quartic polynomial (\cite{RemSchouten}, App.(D)),
\[
\frac{1}{\sqrt{y_2^2-m_3^2}}V_3^2(y_2)=\frac{1}{\sqrt{s}\sqrt{(y_2-m_3)(y_2+m_3)(y_2-y_+)(y_2-y_-)}},
\]
where we set for the quadratic polynomial
$\lambda(s_3^1(y_2),m_1^2,m_2^2)$, 
\[
\lambda(s_3^1(y_2),m_1^2,m_2^2)=:s(y_2-y_+)(y_2-y_-),
\]
which defines $y_\pm$. See Sec.(\ref{bthreeelliptic}).

Investigating
\[
J_n=\int_{m_3}^{\mathbf{up}_3^0} \frac{y_2^n}{\sqrt{s}\sqrt{(y_2-m_3)(y_2+m_3)(y_2-y_+)(y_2-y_-)}}dy_2,
\]
as in \cite{RemSchouten} delivers a further relation between the $F_i$ and we are hence left with only two independent master integrals for the imaginary parts of $b_3$ in $D=2$.

For $b_3$ and $D=4$ on the other hand we need a double subtraction in the dispersion integral for $F_0=\Phi_R^4(b_3)$,
\[
\Phi_R^4(b_3)(s,s_0)=\frac{(s-s_0)^2}{\pi}\int_{(m_1+m_2+m_3)^2}^\infty\frac{V_3^D(x,m_1^2,m_2^2,m_3^2)}{(x-s)(x-s_0)^2}dx, 
\]
whilst for $F_i=\partial_{m_i^2} F_0$ a once subtracted dispersion integral suffices,
\[
F_i(s)=\frac{(s-s_0)}{\pi}\int_{(m_1+m_2+m_3)^2}^\infty\frac{\partial_{m_i^2}V_3^D(x,m_1^2,m_2^2,m_3^2)}{(x-s)(x-s_0)}dx. 
\]
The four integrands $I_i$ (for the $y_2$-integration)  of $\Im(F_i)$, $i\in\{0,1,2,3\}$ have to be expressed over a different common denominator $D=4$,
in particular having an extra factor $s_3^1$.
There is no relation between them.

This reflects the fact that the $F_0$ dispersion 
\[
\Phi_R^4(b_3)(s,s_0)=\frac{(s-s_0)}{\pi}\int_{(m_1+m_2+m_3)^2}^\infty\left(\frac{V_3^D(x,m_1^2,m_2^2,m_3^2)}{(x-s)(x-s_0)}
-\frac{V_3^D(x,m_1^2,m_2^2,m_3^2)}{(x-s_0)^2}\right)dx, 
\]
subsumes the Taylor expansion $s$ near $s_0$ to second order.

In contrast the $F_i$, $i\in\{1,2,3\}$, 
\[
\partial_{m_i^2}\Phi_R^4(b_3)(s,s_0)=\partial_{m_i^2}\frac{1}{\pi}\int_{(m_1+m_2+m_3)^2}^\infty\left(\frac{V_3^D(x,m_1^2,m_2^2,m_3^2)}{(x-s)}
-\frac{V_3^D(x,m_1^2,m_2^2,m_3^2)}{(x-s_0)}\right)dx, 
\]
subsume the Taylor expansion in $s$ near $s_0$ to first order.

This is in agreement with the powercounting in Eq.(\ref{masterzero})
and forces the relation between the four $F_i$ to be $\sim s\partial_s F_0$,
see Eq.(\ref{diffF0}). The relation Eq.(\ref{dependence}) is spoiled by the extra 
coefficient in the Taylor expansion of $\Phi_R^4(b_3)(s,s_0)$.

We are left with four, not two, master integrals. Indeed, starting with a dotted log-divergent banana integral reducing the number of dots demands more subtractions in the dispersion integral. Any relation between imaginary parts with different numbers of dots is spoiled by the difference in degree needed for the subtractions in the dispersion integral.
\begin{appendix}
\section{Feynman rules for banana graphs}\label{AppFeyn}
\label{sec:1}
\vspace{1mm}\noindent
Having introduced the graphs $b_n$ as our subject of interest we define Feynman rules for their evaluation. We follow the momentum routing as indicated in 
Fig.(\ref{bananas}).

The graph $b_n$ gives rise to an integrand $I_{b_n}$ (setting $k_0=(0,\vec{0})^T$, where the $D$-vector 
$k_0$ is set to the zero-vector $(0,\vec{0})^T\in \mathbb{M}^D$):
\[
I_{b_n}=\omega^D_{(n-1)}\prod_{j=0}^{n-1}\frac{1}{(k_{j+1}-k_j)^2-m_{j+1}^2},
\]  
and we set $Q_{j+1}=(k_{j+1}-k_{j})^2-m_{j+1}^2$, $0\leq j\leq (n-1)$
for the $n$ quadrics $Q_{j+1}$, $j=0,\ldots,n-1$.
Here
\[
\omega^D_{(n-1)}:=d^Dk_1\cdots d^Dk_{n-1}
\] 
is a $D\times(n-1)$-form in a $(n-1)$-fold product $\mathbb{M}_n$ of $D$-dimensional Minkowski spaces 
\[
\mathbb{M}_n:= \left(\mathbb{M}^D\right)^{\times (n-1)}.
\]

The function $\Phi_R^D(b_n)(s)$ is multi-valued as a function of $s:=k_n^2$.
It has an imginary part given by a cut which amounts to replacing for each quadric
\[
\frac{1}{Q_{j+1}}\to \delta_+((k_{j+1}-k_j)^2-m_{j+1}^2),
\]
in the integrand $I_{b_n}$. This is Cutkosky's theorem \cite{Cutkosky} applied to $b_n$.
The distribution $\delta_+$ acts as
\[
\delta_+((k_{j+1}-k_j)^2-m_{j+1}^2)=\Theta(k_{j+1;0}-k_{j;0})\delta((k_{j+1}-k_j)^2-m_{j+1}^2),
\]
using the Heavyside distribution $\Theta$ and Dirac $\delta$-distribution.

The integrand for the cut banana is correspondingly
\be\label{icut}
I_{\mathrm{cut}}(b_n)=
\omega^D_{(n-1)}\prod_{j=0}^{n-1}\delta_+((k_{j+1}-k_j)^2-m_{j+1}^2).
\ee
We take the external momentum $k_n$ to be timelike so that we can choose
$k_n=(k_{n;0},\vec{0})^T$ and set $k_j=(k_{j,0},\vec{k_j})^T$. We also set
$\vec{k_j}\cdot \vec{k_j}=:t_j$ and have $k_j^2=k_{j;0}^2-t_j$, and finally define $\hat{k_j}=\vec{k_j}/\sqrt{t_j}$. Hence, 
\[
d^Dk_j=dk_{j,0}\;\sqrt{t_j}^{D-3} dt_j\;d\hat{k}_j,
\] with an angular measure 
\[
\int_{S^{D-2}} d\hat{k}_j\,1=\omega_{\frac{D}{2}}.
\]
Here,
\be\label{omegadhalf} 
\omega_{\frac{D}{2}}=\frac{2\pi^{\frac{D-1}{2}}}{\Gamma(\frac{D-1}{2})},\,\Gamma\left(\frac{1}{2}\right)\equiv\sqrt{\pi}.
\ee
We then have as integrations
\[
\int_{\mathbb{M}^D}d^Dk_j f(k_j)=\int_{-\infty}^{\infty} dk_{j;0}\int_0^\infty \sqrt{t_j}^{D-3}dt_j\int_{S^{D-2}} d\hat{k}_j f(k_{j,0},t_j,\hat{k_j}).
\]

\section{Minimal subtraction}
For the reader which likes to compare with dimensional regularization and the use of minimally sutraction as renormalization we have kept $D$ complex in most formulae 
and note that in such a situation the coproduct for $b_n$ reads
\be\label{HopfBanana}
\Delta_H(b_n)=b_n\otimes \One+\One\otimes b_n+\sum_{x,|x|\lneq n} x\otimes t_{n-|x|}.
\ee
Here the sum is over all monomials $x$ of banana graphs $b_j$ on less than $n$ edges.
For example
\[
\Delta(b_5)=b_5\otimes\One+\One \otimes b_5+\left(5\atop 2\right) b_2\otimes t_3
+\left(5\atop 3\right) b_3\otimes t_2+\left(5\atop 4\right)b_4\otimes t_1
+\left(5\atop 2\right)\left(3\atop 2\right)b_2b_2\otimes t_1.
\]
In Feynman graphs this is Fig.(\ref{hopf_tad}).
\begin{figure}[H]
\includegraphics[width=14cm]{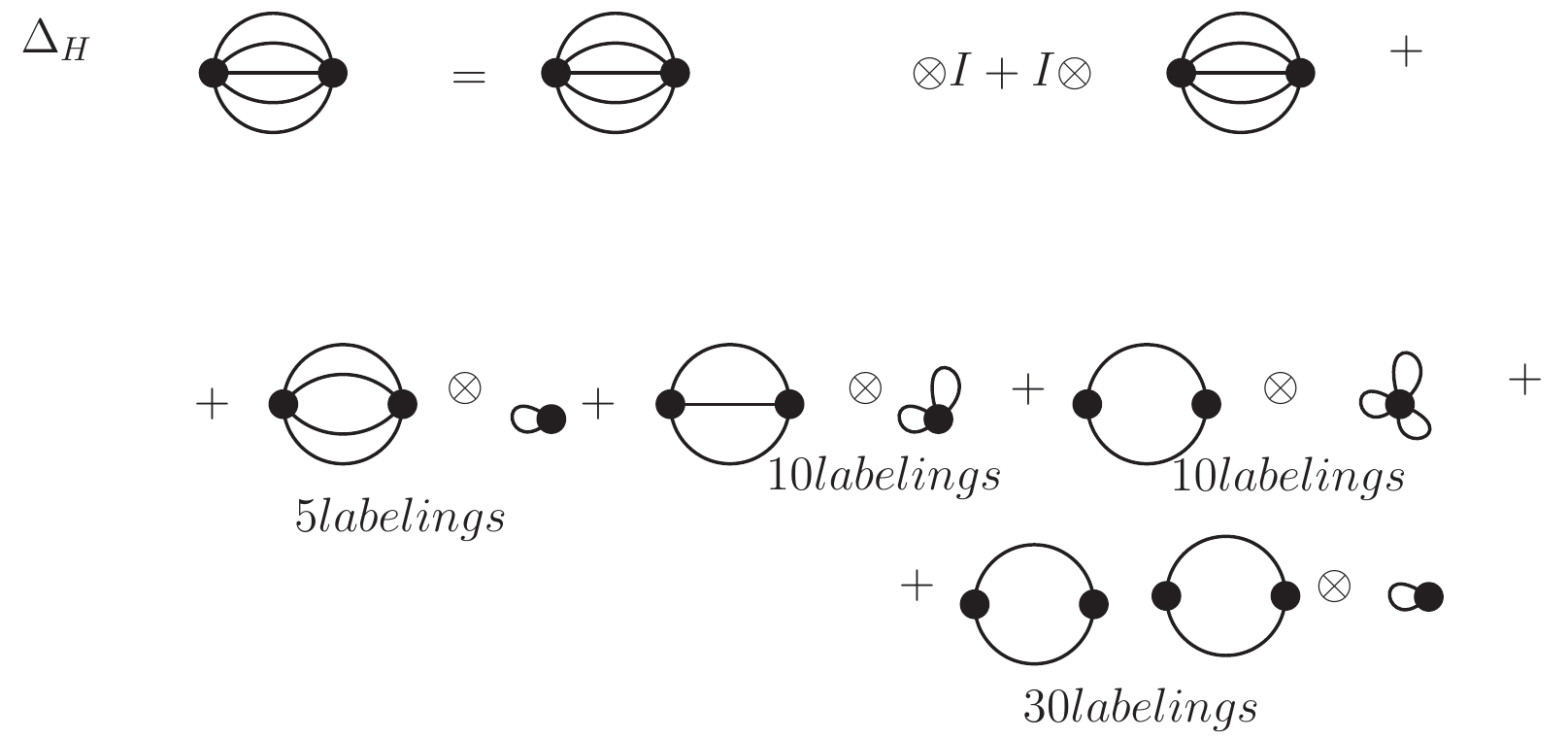}.
\caption{The Hopf algebra disentangling the five-banana $b_5$. On the right we also get roses with $n$ petals, or tadpoles in a physicists parlance. There are 
$5=\left(5\atop 4\right)$ labelings for the $b_4$ banana in the first term in the second row, and $10=\left(5\atop 3\right)=\left(5\atop 2\right)$ for the next two tensorproducts. The final term in the third row has $30=\left(5\atop 2\right)
\left(3\atop 2\right)$ labelings, as there are $\left(5\atop 2\right)$ possibilities to label the edges of the first $b_2$ banana, and then $\left(3\atop 2\right)$ to label the second one.}
\label{hopf_tad}
\end{figure}
Explicitely $\Phi_{MS}^D(b_3)$ reads for example
\[
\Phi_{MS}^D(b_3)=-\langle \Phi^D(b_3)\rangle+\sum_{cycl}\langle\langle
\Phi^D(b_2(e_i,e_j))\rangle \Phi^D(t_1(e_k))\rangle+\Phi^D(b_3)-
\sum_{cycl}\langle
\Phi^D(b_2(e_i,e_j))\rangle \Phi^D(t_1(e_k)).
\]
Here, $\Phi^D$ are unrenormalized Feynman rules in $D$ dimensions
which evalute into a Laurent series in $D-2n$, $n$ a suitable integer, $\langle\ldots\rangle$ is the projection onto the pole part and the sum is over the three cyclic permutations of $i,j,k$. 

This MS-renormalization results $\Phi_{MS}^D(b_n)$ can be  related to $\Phi_R^D(b_n)$ if so desired.  See also the discussion with regards to MS and tadpoles in \cite{KarenDirk}.

\section{Tensor structure}\label{apptens}
\subsection{Tensor integrals}\label{tensorint}
We are not iterested in $\Phi_R^D(b_n)$ alone. To satisfy the needs of computational practice we should also  raise the powers of quadrics 
by taking derivatives $\partial_{m_j^2}^k$ with respect to mass-squares $m_j^2$
and we should allow scalar products $k_i\cdot k_j$ in the numerator.

For such a  generalization to arbitrary powers of propagators and numerator structures
we use the notation 
\[
\Phi_R^D(b_n)_{\nu}(s,\{m_i^2\}),
\]
where $\nu$ is a $\left(\frac{n(n+1)}{2}-1\right)$-dimensional row vector
with integer entries  (see Sec.(2.5.1.)) in \cite{Zay}.

\begin{itemize}
\item The first $n$ entries $\nu_i$, $1\leq i\leq n$ give the powers of the $n$ edge propagators $\frac{1}{Q_e}$,
\item  the $(n-2)$ entries $\nu_i$, $(n+1)\leq i\leq (2n-2)$ are reserved for
powers of $k_i\cdot k_n$ ($1\leq i\leq (n-2)$), 
\item the   $(n-2)$ entries $\nu_i$, $(2n-1)\leq i\leq (3n-4)$ are reserved for
powers of $k_2^2,\ldots, k_{n-1}^2$,
\item and the remaining $(n-2)(n-3)/2$
entries are reserved for powers $\nu_{jl}$ of $k_j\cdot k_l$, $|j-l|\gneq 1$, $1\leq j,l\leq (n-1)$ and $3n-3\leq i\leq \left(\frac{n(n+1)}{2}-1\right)$.   
\end{itemize} 
This is all what is needed as $k_1^2=Q_1+m_1^2$ and
$2k_i\cdot k_{i-1}=k_i^2+k_{i-1}^2-Q_i-m_i^2$, $n\geq i\geq 2$.

For example
\beas
\Phi_R^D(b_4)_{(\nu_1,\ldots,\nu_{13})}(s,m_1^2,\ldots,m_4^2) & = &
\int_{\mathbb{M}_4}\omega^D_{(3)}\times\\
 & \times & \prod_{j=0}^{3}\frac{(k_1\cdot k_4)^{\nu_5}(k_2\cdot k_4)^{\nu_6}(k_2^2)^{\nu_7}(k_3^2)^{\nu_8}(k_1\cdot k_3)^{\nu_{13}}}{\left((k_{j+1}-k_j)^2-m_{j+1}^2\right)^{\nu_{j+1}}}.
\eeas  
For the imaginary part we have correspondingly
\beas
\Im\left(\Phi_R^D(b_4)_{(\nu_1,\ldots,\nu_{13})}\right)(s,m_1^2,\ldots,m_4^2) & = &
\int_{\mathbb{M}_4}\omega^D_{(3)}\times\\
 & \times & \prod_{j=0}^{3}\partial_{m_{j+1}^2}^{\nu_{j+1}}
 \left(
\left( \prod_{l=0}^{3}\delta_+(k_{l+1}-k_l)^2-m_{l+1}^2)\right)\times \right.\\
  & \times & \left. (k_1\cdot k_4)^{\nu_5}(k_2\cdot k_4)^{\nu_6}(k_2^2)^{\nu_7}(k_3^2)^{\nu_8}(k_1\cdot k_3)^{\nu_{13}}\right. \Biggr).
\eeas  

We  discuss differential equations for $\Phi_R^D(b_n)_\nu$, as well as partial integration identities and the reduction to master integrals starting from our representation for $\Phi_R^D(b_n)_\nu$ in Secs.(\ref{diffeq},\ref{masterint}). 
\subsection{Dispersion for $\Phi_R^D(b_n)_\nu$}
For banana graphs $b_n$ on two vertices dispersion for tensor integrals is rather simple:
\be 
\Phi_R^D(b_n)_\nu(s,s_0,\{m_j^2\})=\frac{(s-s_0)^{|[n,\nu]|}}{\pi}
\int_{(m_1+\cdots m_n)^2}^\infty
 \frac{V_{[n,\nu]}^D}{(x-s)(x-s)^{|[n,\nu]|}} dx,
\ee
where $|[n,\nu]|-1$ is the superficial degree of divergence of $\Phi_R^D(b_n)_\nu$
according to $\nu$:
\be\label{degdiv}
|[n,\nu]|=\left(\frac{D}{2}-1\right)(n-1)+\sum_{j=1}^n \nu_j
+\left\lceil \sum_{j=n+1}^{2n-2}\frac{\nu_j}{2}\right\rceil +\sum_{j=2n-1}^{3n-4}\nu_j
+\sum_{jl} \nu_{jl}.
\ee
This is based on
\[
\Im\left(\Phi_R^D(b_n)_\nu\right)(s,s_0,\{m_j^2\})=
 \Theta(s-(m_1+\cdots m_n)^2)V_{[n,\nu]}^D.
\]
For $V_{[n,\nu]}^D$ see Eqs.(\ref{tensors}-\ref{tensorsthree}) below.
\section{pseudo-thresholds}\label{apppseudo}
Let us remind ourselves of a parametric analysis of the second Symanzik polynomial
(with masses) $\Phi$ for the banana graphs $b_b$:
\be\label{varphibn}
\varphi(b_n)(s)=\left(\prod_{j=1}^n A_j\right)\left(s-\left(\sum_{j=1}^n m_j^2A_j\right)\left(\sum_{j=1}^n\frac{1}{A_j}\right)\right).
\ee

The equation
\[
\varphi(b_n)(m_{\mathbf{normal}}^n)=0,
\]
has a solution in the simplex $A_i>0$ for positive $A_i$ given by $A_im_i=A_jm_j$.

For $m$ any pseudo-mass, the solution of 
$\varphi(b_n)(m)=0$ requires at least one $A_i$ to be negative and it hence gives no
monodromy on the physical sheet.

Still the variations associated to pseudo-masses and thresholds are needed for a full analysis of $\Phi_R^D(b_n)$ to find their Hodge structure.

So let $\sigma_n$ be a sequence of the form 
\[
\sigma^n:=(\pm m_1\pm m_2\pm\cdots\pm m_n),
\]
with a sign chosen for each entry $m_i$. Let $p(i)\in\{\pm 1\}$ be the sign of the $i$-entry. A global sign change leaves the pseudo-thresholds invariant ($|a-b|=|b-a|$) so we have $2^{n-1}$ choices and adopt to the convention $p(1)=+1$. 
 
For a flag
\[
(b_2\subset b_3\subset\cdots\subset b_n),
\]
this determines subsequences $\sigma^2\subset \sigma^3\subset\cdots  \sigma^n$
in an obvious manner.

Define
\be\label{pseudoup}
\mathrm{up}_n^{j,\sigma}:=\frac{s_n^j+m_{n-j}^2-\Bigl(\overbrace{\sum_{||,i=1}^{n-j-1}p(i)m_i}^{m_{\sigma^{n-j-1}}}\Bigr)^2}{2\sqrt{s_n^{j}}},
\ee
which also defines the pseudo-mass $m_{\sigma^{n-j-1}}$:
\[
m_{\sigma^{n-j-1}}=\sum_{||,i=1}^{n-j-1}p(i)m_i=\underbrace{|\cdots||}_{(n-1)\,{\text{bars}}}m_1+p(2)m_2|+p(3)m_3|+\cdots|+p(n-j-1)m_{n-j-1}|.
\]
Define
\[
\Theta_{n,+}=\Theta(s-(m_n+m_{\sigma^{n-1}})^2),\,
\Theta_{n,-}=\Theta((m_n-m_{\sigma^{n-1}})^2-s).
\]
Now set for $p(n-1)=+1$:
\beas
 & &\mathbf{Var}(b_n^\sigma)
  = \Theta_{n,p(n))}\times\\
 & & \underbrace{\omega_{\frac{D}{2}}\int_{m_n}^{\mathrm{up}_n^{0,\sigma}}V_{\sigma^{n-1},n-1}^D(s_n^0-2\sqrt{s_n^0}y_{n-1}+m_n^2,m_1^2,\ldots,m_{n-1}^2)
\sqrt{y_{n-1}^2-m_n^2}^{D-3}dy_{n-1}}_{V^D_{\sigma^n,n},\,p(n-1)=+1}. 
\eeas
and for $p(n-1)=-1$:
\beas
 & &\mathbf{Var}(b_n^\sigma)
  = \Theta_{n,p(n))}\times\\
 & & \underbrace{\omega_{\frac{D}{2}}\int_{\mathrm{up}_n^{0,\sigma}}^{\infty}V_{\sigma^{n-1},n-1}^D(s_n^0-2\sqrt{s_n^0}y_{n-1}+m_n^2,m_1^2,\ldots,m_{n-1}^2)
\sqrt{y_{n-1}^2-m_n^2}^{D-3}dy_{n-1}}_{V^D_{\sigma^n,n},\,p(n-1)=-1}. 
\eeas
Apart from the variation for the normal threshold (with $p(i)=+1$ for all $1\leq i\leq n$) which gives $\mathbf{Var}(b_n^{(+m_1,+m_2,\ldots,+m_n)})=\Im\left(\Phi_R^D(b_n)\right)$,
we get $2^{n-1}-1$ further variations corresponding to pseudo-masses and their pseudo-thresholds. They will be discussed elsewhere.

\section{$b_3$ parametrically}\label{bthreeparamet}
Let us recapitulate $b_3$ in the parametric representation.
We list basic considerations. A detailed analysis in the view of \cite{Marko}
and \cite{MarkoDirk} is left to future work.
\subsection{The parametric integral}
Let $\mathbf{Q}_{b_3}$ be the one-dimensional real vetorspace spanned by $s=k_3^2$, the square of the Minkowski four-momenta $k_3,-k_3$ assigned to the two vertices of $b_3$.
Let $\mathbb{P}_{b_3}=\mathbb{P}^2(\mathbb{R}_+)$ be a projective space given by the ratios of the nonnegative side-lengths of the internal edges of $\Theta$.

The parametric integrand function (we consider masses as implicit parameters)
\[
F_{b_3}:\mathbf{Q}_{b_3}\times\mathbf{Q}_{b_3}\times\mathbb{P}_{b_3}\to \mathbb{C}
\]
is (see for example Sec.(5.2.1.) in \cite{BrownKr})
\bea\label{bthreepar}
F_{b_3}(s,s_0;p) & := & (s-s_0)A_1A_2A_3\frac{\ln\left(\frac{\Phi_\Theta(s;p)}{\Phi_\Theta(s_0;p)}\right)}{\psi_\Theta^3}\\
 & + &
(s_0A_1A_2A_3-(m_1^2A_1+m_2^2A_2+m_3^2A_3)\psi_\Theta) \times\nonumber\\
 & \times & \frac{\ln\left(\frac{\Phi_\Theta(s;p)}{\Phi_\Theta(s_0;p)}\right)-
(s-s_0)\left(\partial_{s}\ln\left(\frac{\Phi_\Theta(s;p)}{\Phi_\Theta(s_0;p)}\right)
\right)_{s=s_0}}{\psi_\Theta^3}.\nonumber
\eea
Here,
\[
\Phi_{b_3}: \mathbf{Q}_\Theta\times \mathbb{P}_\Theta\to \mathbb{C}
\]
is
\[
\Phi_{b_3}(r;p)=rA_1A_2A_3-(m_1^2A_1+m_2^2A_2+m_3^2A_3)\psi_{b_3},
\]
\[
\psi_{b_3}=A_1A_2+A_2A_3+A_3A_1.
\]
Note $F_{b_3}(s,p)$ and $\partial_{s} F_{b_3}(s,p)$ both vanish at $s=s_0$ for all $p$, so these are on-shell renormalization conditions.

The parametric form is the integrand
\[
\mathrm{Int}_{b_3}(s,s_0;p):=F_\Theta(s,s_0,p) \Omega_{b_3},
\]
\[
\Omega_{b_3}=+A_1\,dA_2\wedge dA_3-A_2\,dA_1\wedge dA_3+A_3\,dA_1\wedge dA_2.
\]

We then have the renormalized value\footnote{Divergent subgraphs exist but do not need renormalization as the cographs are tadpoles which can be set to zero in kinematic renormalization. Accordingly $F_\Theta$ vanishes when any two of its three
edge variables $A_i$ vanish.}
\be\label{bthreep}
\Phi_R^D(b_3)(s,s_0)=\int_{\mathbb{P}^2(\mathbb{R}_+)}\mathrm{Int}_\Theta(s,s_0;p),
\ee
from integrating out $p$ which is the parametric equivalent of Eqs.(\ref{bthreeo},\ref{bthreed}).
\subsection{Sectors and fibrations}
To study fibrations in our integrand we start from the fact that there  are six orderings of the edge lengths for the three edge variables $A_i$.

Consider for example the sectors $1>3>2$ and $3>1>2$ of Fig.(\ref{theta}) so that edge $e_2$ has the smallest length. For the choice $1>3>2$
rescale\footnote{$\Omega_{b_3}\to A_1^3 da_2\wedge da_3$ under that rescaling.}
\[
A_2=a_2A_1,\,A_3=a_3A_1,
\]
and in that sector $1>3>2$
we have
\[
\int_{\mathbb{P}^2(\mathbb{R}_+)\cap (1>3>2)}F_{b_3} \Omega_{b_3}
=\int_0^\infty\left(\int_0^{a_3}F_{b_3}(1,a_2,a_3)da_2\right)da_3.
\]
A further change $a_2=a_3b_2$ leads to a sector decomposition (in the sense of physicists)
\[
\int_0^\infty\left(\int_0^{1}a_3F_{b_3}(1,b_2a_3,a_3)db_2\right)da_3
=
\int_0^{1}\underbrace{\left(\int_0^\infty a_3F_{b_3}(1,b_2a_3,a_3)da_3\right)}_ {\mathrm{Fib}(b_2)}db_2.
\]
For any chosen $0<b_2<1$, $a_3F_{b_3}(1,b_2a_3,a_3)$ gives points on the corresponding  chosen fiber and $\mathrm{Fib}(b_2)$ is the integral along that fiber. Integrating $b_2$ integrates all fiber integrals  $\mathrm{Fib}(b_2)$ to the two sector integrals on both sides of the spine. 

In fact for $0<a_3<m_1/m_3$ we are on the left of the spine and for
$m_1/m_3<a_3<\infty$ on the right. 

Let us look at $\Phi_{b_3}$ under the rescalings.
\beas
\Phi_{b_3}(A_1,A_2,A_3) & = & sA_1A_2A_3-(m_1^2A_1+m_2^2A_2+m_3^2A_3)(A_1A_2+A_2A_3+A_3A_1))\\
 & \to &  sa_2a_3-(m_1^2+m_2^2a_2+m_3^2a_3)(a_2+a_2a_3+a_3))\\
 & \to & sb_2a_3^2-(m_1^2+m_2^2b_2a_3+m_3^2a_3)(b_2a_3+b_2a_3^2+a_3))\\
  & = & a_3(sb_2a_3-(m_1^2+m_2^2b_2a_3+m_3^2a_3)(b_2+b_2a_3+1))=:\tilde{\Phi}_{b_3}(s,b_2,a_3).
\eeas
For $\psi_{b_3}$ we find
\beas
 & & (A_1A_2+A_2A_3+A_3A_1)\\
  & \to & (a_2+a_2a_3+a_3)\\
   & \to & a_3(b_2+b_2a_3+1).
\eeas
We thus find in the region where $e_2$ is the smallest edge the integrand function 
$\mathrm{Int}_{{b_3},2}(b_2,a_3)$
\beas
\mathrm{Int}_{{b_3},2}(b_2,a_3) & := & a_3F_{b_3}(1,b_2a_3,a_3)=(s-s_0)b_2a_3\times\\
 & \times & \frac{\ln
\overbrace{\left(
\frac{sa_3b_2-(m_1^2+m_2^2b_2a_3+m_3^2a_3)(1+b_2(1+a_3))}{s_0a_3b_2-(m_1^2+m_2^2b_2a_3+m_3^2a_3)(1+b_2(1+a_3))}\right)}^{\frac{\tilde{\Phi}_{b_3}(s;b_2,a_3)}{\tilde{\Phi}_{b_3}(s_0;b_2,a_3)}}}{(b_2(1+a_3)+1)^3}\\
 & + & 
(s_0b_2a_3-(m_1^2+m_2^2b_2a_3+m_3^2a_3)(b_2(1+a_3)+1)) \times\\
 & \times & \frac{
 \ln\left(\frac{\tilde{\Phi}_{b_3}(s;b_2,a_3)}{\tilde{\Phi}_{b_3}(s_0;b_2,a_3)}\right)-
(s-s_0)
\left(\partial_{s}\ln\left(\frac{\tilde{\Phi}_{b_3}(s;b_2,a_3)}{\tilde{\Phi}_{b_3}(s_0;b_2,a_3)}
\right)\right)_{s=s_0}}{(b_2(1+a_3)+1)^3}.
\eeas
Note that $\mathrm{Int}_{{b_3},2}(0,a_3)=0$ as it must be as petals evaluate to zero under renormalized Feynman rules in on-shell renormalization conditions.

Finally
\[
{\mathrm{Fib}(b_2)}=\int_0^\infty \mathrm{Int}_{{b_3},2}(b_2,a_3)da_3. 
\]
A point along the $(1=3)$-line of the spine is given by $(1,b_2,1)\in \mathbb{P}_{b_3}$, for all $0<b_2<1$. 
\begin{rem} Upon rescaling in each of the sectors in the three cubes of Fig.(\ref{theta}) accordingly and summing over sectors we get a symmetric representation equivalent to averaging over the three possible ways of expressing Eq.(\ref{bthreeconcrete}) using any of $s_3^1(y_2,m_i^2)$ and
similar to \cite{DavDel}.
\end{rem}\hfill$|$
\end{appendix}

\end{document}